\tikzstyle arrowstyle=[scale=1]
\tikzstyle directed=[postaction={decorate,decoration={markings,
    mark=at position .65 with {\arrow[arrowstyle]{stealth}}}}]
\tikzstyle reverse directed=[postaction={decorate,decoration={markings,
    mark=at position .65 with {\arrowreversed[arrowstyle]{stealth};}}}]
\newtheoremstyle{mytheoremstyle} 
    {\topsep}                    
    {\topsep}                    
    {\itshape}                   
    {}                           
    {\sc}                   
    {.}                          
    {.5em}                       
    {}  
\theoremstyle{mytheoremstyle}
\newtheorem{definition}{Definition}
\newtheorem{lemma}{Lemma}
\newtheorem{corollary}{Corollary}
\titleformat*{\section}{\large\scshape\centering}
\titleformat*{\subsection}{\scshape\centering}
\titleformat*{\subsubsection}{\itshape}
\titleformat*{\paragraph}{\large\bfseries\centering}
\titleformat*{\subparagraph}{\large\bfseries\centering}
\title{Heterogeneously Perceived Incentives in Dynamic Environments:\\[0.3ex]
Rationalization, Robustness and Unique Selections\thanks{\today. Contact: \href{mailto:evan.piermont@rhul.ac.uk}{\tt evan.piermont@rhul.ac.uk} (Piermont), \href{mailto:pzuazogarin@hse.ru}{\tt p.zuazogarin@hse.ru} (Zuazo-Garin)}
}
\author{
\begin{tabular}{c}
Evan Piermont\\
\small Royal Holloway--University of London
\end{tabular}
\hspace{0.8cm}
\begin{tabular}{c}
Peio Zuazo-Garin\\
\small HSE University--ICEF
\end{tabular}
}
\date{}
\begin{document}

\maketitle

\begin{abstract}
\footnotesize
In dynamic settings each economic agent's choices can be revealing of her private information. This elicitation via the rationalization of observable behavior depends each agent's perception of which payoff-relevant contingencies other agents persistently deem as impossible. We formalize the potential heterogeneity of these perceptions as disagreements at higher-orders about the set of payoff states of a dynamic game. We find that apparently negligible disagreements greatly affect how agents interpret information and assess the optimality of subsequent behavior: When knowledge of the state space is only `almost common', strategic uncertainty may be greater when choices are rationalized than when they are not--- \textit{forward} and \textit{backward} induction predictions, respectively, and while backward induction predictions are robust to small disagreements about the state space, forward induction predictions are not. We also prove that forward induction predictions \textit{always} admit unique selections \`{a} la \cite{weinstein-07} (also for spaces \textit{not} satisfying richness) and backward induction predictions do not.

\vspace{0.2cm}

\noindent\textsc{\scshape Keywords}: Dynamic Games, Incomplete Information, Rationality, Robustness, Model Misspecification, Forward and Backward Induction, Common Knowledge, Unique Selections\\
\textsc{JEL Classification}: C72, D82, D83
\end{abstract}


\section{Introduction}
\label{section:introduction}

On October 26th, 1597, 13 vessels commanded by Joseon Kingdom admiral Yi Sun-sin stood in front of a Japanese fleet of over 120 warships in the Myeongnyang Strait, on the southwestern coast of the current Republic of Korea.
 Aware of the disparity of forces, Yi planned the location of the battle carefully and Myeongnyang was chosen due to its narrowness, which would preclude his small fleet to be vastly outnumbered once attacked, and its unique tidal conditions, with currents that strongly flow first to the North and abruptly switch southward in three-hour intervals. With both fleets at Myeongnyang, the favorable current and their formidable numerical superiority led the Japanese to attack. Yi's army stood in the Northern end of the strait, where the northward flow was already calmer. Together with the narrowness of the strait, this enabled the Joseon fleet to repeal the attack until the direction of the currents switched. The unexpected reversion left the Japanese warships unable to maneuver, drifting backward and colliding among themselves, and the initial violent impulse of the new current in the Northern end helped the Joseon vessels strike their shaky enemy and sink 30 warships, eventually leading to the retreat of the Japanese fleet and the collapse of their campaign.

The tidal configuration of the strait, that is, the \textit{state} (of, literally, nature), was a key determinant of the outcome of the conflict. Joseon's commanders were aware of it, Japan's were not; Joseon's army based their strategic decisions on their knowledge of the state, Japan's army based their strategic decisions persistently ignoring the state: They did not conceive the possibility of such an disadvantageous contingency, not even \textit{after observing} that their enemy, who they knew to be familiar with the terrain and led by a commander of extraordinary skill, decided to fight in so extremely unbalanced conditions. The history relies thus on two elements every economist is familiar with. On the one hand, (nonstrategic) contingencies surrounding the interaction can affect outcomes, making economic agents eager to anticipate them; on the other, agents can update their initial assessments about these contingencies by eliciting others' private information via the rationalization of observed behavior. For example, the gains from bargaining depend on the valuation parties assign to each agreement, and the initial sequence of offers is often aimed at outguessing the valuation of the other party and hiding one's own valuation; the optimality of the price a financial trader offers depends on the risk-costs that govern other traders' supply and demand distributions, and the trader can make inferences about these costs by making sense of others' trading patterns.\footnote{See \cite{kennan-01} and \cite{sannikov-16}, respectively.} Like at Myeongnyang, the success of rationalization relies on correctly assessing the possible contingencies others have in mind. The possible heterogeneity is incommensurable, but standard economic modeling \textit{systematically} assumes that, despite different initial probabilistic assessments of each contingency, agents \textit{commonly agree} on the set of all the possible contingencies, so that any state, even if initially ignored, can be considered upon observed behavior.

This paper introduces a general, tractable theory of \textit{persistent disagreements} about the possible payoff-relevant, nonstrategic contingencies involved in a dynamic interaction. In it, we show that these disagreements, no matter how apparently negligible, can have a drastic impact on the way information is inferred from observed behavior and, in consequence, can revert our insights about some key features of standard economic predictions. Our findings cast doubts on the methodological validity of equilibrium refinements based on \textit{forward induction} arguments (i.e., the idea that observed behavior is informative of private information and future plans): Arbitrarily small discrepancies regarding the set of possible contingencies can lead to an erratic interpretation of behavior that results in `anything goes'. Predictions based on forward induction are extremely sensitive to modeling details, to the extent that arbitrarily small misspecifications about what agents agree on can lead to 
unexpected predictions ignored in the benchmark model. This lack of robustness vanishes if, instead, predictions are based on backward induction; that is, if agents are assumed to render observed erratic behavior as a mistake and not as a product of unexpected private information and a signal of future intentions. Finally, we present an contagion argument \`{a} la global games that allows for creating unique selections for every forward induction outcome.

More specifically, our first contribution is methodological and consists in formalizing the idea that the set of payoff states of a dynamic game might not be commonly known. This is attained by endowing each player with a \textit{subjective payoff structure}, which is a hierarchical construction whose first component is the set of states considered by the player, whose second component is the set of states the player understands her opponents to consider, and so on.

Unlike initial belief-hierarchies about the set of states, these subjective payoff structures are assumed to remain invariant in the player's mind as the game unfolds, and place limits to how the player's initial belief-hierarchy can be updated in response to observations throughout the game. We refer to a pair consisting of a subjective payoff structure and an initial belief-hierarchy as a \textit{subjective model}, and to the profiles of subjective models in which players share common knowledge of a set of states, as a \textit{standard model}.\footnote{Thus, in a sense, we reinterpret the set of states as a \textit{persistent belief} instead of an objective, material object. This way, a model captures the dichotomy between the beliefs that can be updated, the initial ones, and the ones that will never be abandoned. In this respect, our definition of a subjective payoff structure goes hand by hand with the idea of $\Delta$-restrictions \`{a} la \cite{battigalli-03}, with the novelty that these restrictions may be not commonly known.} In this sense, and in line with the `Wilson doctrine', our approach consists in tearing apart common knowledge assumptions and increasing the level of subjectivity in the formalization of a game:\footnote{As recalled by \cite{chung-07}, \cite{wilson-87} writes: ``Game theory has a great advantage in explicitly analyzing the consequences of trading rules that presumably are really common knowledge; it is deficient to the extent it assumes other features to be common knowledge, such as one agent's probability assessment about another's preferences or information. [...] I foresee the progress of game theory as depending on successive reduction in the base of common knowledge required to conduct useful analyses of practical problems. Only by repeated weakening of common knowledge assumptions will the theory approximate reality.''} While we maintain common knowledge of the \textit{extensive form} of the game (i.e., the rules: Players, turns, available actions, information feedback, etc.), the \textit{payoff structure} (i.e, the information regarding preferences) is completely subjective, possibly representing delusion, and envisioned as a collection of element of a universal construction which is trivially commonly known, namely, the set of all possible utility functions. We would like to be emphatic about how natural this separation of rules and stakes is: While the possible outcomes of an economic interaction can be objectively described, there is nothing intrinsically objective in the notion of \textit{utility}; the allocation of resources that results from each combination of reported preferences can be commonly known, the market shares that correspond to each different combination of pricing strategies can be commonly known, etc. But each individual agent's preference over material outcomes is a personal trait of hers. In consequence, it seems reasonable to remove the set of states, or possible utilities, from the objective (in the sense of `commonly believed') part of the game.\footnote{Of course, the subjectivization of payoff-functions is standard practice since \cite{harsanyi-67}; the formalization in this paper goes beyond that: The state space being subjectively perceived implies that the ways in which beliefs can be updated is subjective as well.} By doing so, our approach allows for great heterogeneity regarding players' interactive perception, to the extent that the standard case of a commonly known set of states turns out to be a \textit{knife-edge} situation, and important established facts and understandings concluded from this kind of modeling, extremely fragile.

The first of such understandings is the insight that the outcomes of forward induction reasoning, usually formalized by \textit{extensive form rationalizability} (\citealp{pearce-84}, \citealp{battigalli-97}), refine those of backward induction. This is known to be the case for games with complete information, both for the case of perfect information (\citealp{battigalli-97}, \citealp{heifetz-15} and \citealp{perea-18}) and imperfect information (\citealp{chen-13}, \citealp{perea-18b} and \citealp{catonini-20}). Whether an analogous result holds for the case of incomplete information remains an open question (backward induction being formalized here as an interim version of \citeauthor{penta-11}'s \citeyearpar{penta-11, penta-15} \textit{backward rationalizability}); however, our counterexamples establish that the answer is \textit{negative} when the set of payoff-states of the dynamic game is not commonly known, no matter how small players' discrepancies are.\footnote{While we did not attempt to prove this generalization, we conjecture that the refinement result should hold for dynamic games with commonly known sets of payoff-states. In our counterexamples for the case of not commonly known state spaces, it is precisely the lack of common knowledge what seems the key driving force.} The key idea that triggers the observation is that when the state space is not commonly known, it can sometimes be possible for a player to choose rationally an action the other player cannot rationalize (the action may be undominated given the state space the first player considers, but not in the state space the first player thinks the second player considers). Thus, the second player may entertain any arbitrary conjecture about the future behavior of the first one, and this is something the first player can rationally exploit. The observation has some intrinsic interest. While the literature on strategic communication teaches that agents can have incentives to act is if they had less information than they do,\footnote{\cite{dye-85}, for instance.} in standard game-theoretic models it cannot be that a player rationally pretends not to be rational: If she did so, her opponents would be able to find ways to rationalize her behavior. Subjective payoff structures
circumvent this impossibility in a natural manner.\footnote{The strategy of \textit{playing dumb} is a commonplace in popular culture; as \cite{suetonious-07} writes about Emperor Claudius: ``He did not even keep quiet about his own stupidity, but in certain brief speeches he declared that \textit{he had purposely feigned it} under [Caligula], because otherwise he could not have escaped alive and attained his present station. But he convinced no one, and within a short time a book was published, the title of which was `The Elevation of Fools and its thesis, that no one feigned folly.'' The italics are ours.}

The second understanding that collapses is the robustness of rationalizability to small misspecifications of players' perception of payoff structures. \cite{dekel-07} show that in static settings the predictions of \textit{interim} (\textit{correlated}) \textit{rationalizability} are \textit{robust} to small misspecifications of players' higher-order beliefs about the payoff-states: Predictions excluded in the benchmark model do not to arise in slightly misspecified models (i.e., the solution concept is upper hemicontinuous on players' belief-hierarchies). We partially extend this insight by proving in Proposition \ref{proposition:UHCtypes} that the correspondence that describes extensive form rationalizability is upper hemicontinuous on initial belief-hierarchies.\footnote{While we take for granted that this result is `folk knowledge', to the best of our knowledge this is the first available proof of the result.} However, the conclusion that extensive form rationalizability is robust in a general sense would be erroneous. Our examples show that the solution concept is not robust to misspecifications of \textit{models}. It is possible to perturb a dynamic game with a commonly known set of payoff-states and unique extensive form rationalizable outcome so that, for every arbitrarily close by model, extensive form rationalizable outcomes are multiple. These observations shed important light on two aspects of the role of the state space in rationalization:

\begin{itemize}

\item First, \cite{penta-12} and \cite{chen-12} show that in dynamic games with sufficiently \textit{rich} set of payoff-states (i.e., state spaces in which for every strategy of every player there exists some payoff-state for which the strategy is strictly dominant), no refinement of interim rationalizability is robust to perturbations of initial beliefs.\footnote{\citeauthor{penta-12}'s \citeyearpar{penta-12} result is slightly stronger. By allowing for payoff-states about which players may have persistent private information, his result pertains not plain vanilla interim rationalizability, but interim \textit{sequential} rationalizability, a solution concept that refines interim rationalizability by employing sequential rationality as a notion of optimal behavior, instead of plain rationality. For expositional purposes, throughout the paper we refer solely to `interim rationalizability' and bundle \citeauthor{penta-12}'s \citeyearpar{penta-12} and \citeauthor{chen-12} \citeyearpar{chen-12} work together: The moral of both papers is that no refinement of the predictions given by solution concepts characterizing the behavioral implications of rationality and (only) \textit{initial} common belief in rationality is robust.} Our Proposition \ref{proposition:UHCtypes} shows that extensive form rationalizability is robust in this sense. Since extensive form rationalizability refines interim rationalizability, these results together imply that in games where the state space is rich, interim rationalizability and extensive form rationalizability \textit{coincide} (Corollary \ref{corollary:ISR}). The intuition is obvious. If for every strategy there is some state that makes that strategy dominant then every observed behavior can be rationalized; thus, there is nothing rationalization helps to falsify. In consequence, while richness assumptions are innocuous in static games (specifying an interim belief makes every state ignored by the belief strategically irrelevant), our results highlight their problematic nature in dynamic environments.

\item Second, the failure of upper hemicontinuity of extensive form rationalizability on subjective payoff structures reveals the high sensitivity that rationalization processes display w.r.t~common knowledge assumptions. On the one hand, as soon as the knife-edge supposition that players commonly agree about the state space is abandoned, the rationalization a player makes might be erroneous, or a player may fail to rationalize some observed behavior which was indeed rational.
 In consequence, despite the conceptual appeal of forward induction, the conditions for it to be applied successfully seem to crucially rely on knife-edge assumptions that are tremendously demanding, and hard to meet in practice.

\end{itemize}
We end our analysis of robustness to misspecifications of models by noting that there is a standard solution concept which is robust. Proposition \ref{proposition:UHCall} shows that the individual strategies given by backward induction, formalized here as an interim version of \citeauthor{penta-11}'s \citeyearpar{penta-11,penta-15} \textit{backward rationalizability}, are upper hemicontinuous in the whole space of subjective models. The key difference with extensive form rationalizability lies in the fact that, in the latter, the 
restrictions on beliefs are placed only on the histories that are reachable given the subjective payoff structure. Since this set of histories is not, in general, lower hemicontinuous on the subjective payoff structure, upper hemicontinuity of extensive form rationalizability collapses (that restrictions are satisfied along a sequence cannot guarantee that they are also satisfied in the limit). 
On the contrary, backward induction reasoning places restrictions on beliefs at \textit{every} history, and thus, in a way, independent of the subjective payoff structure. We end by noting that, since backward rationalizability also refines interim rationalizability, in dynamic games satisfying richness backward rationalizability, extensive form rationalizability and interim rationalizability all coincide (Corollary \ref{corollary:BR}), what further reinforces our critique of richness assumptions in dynamic games.

Finally, we present a partial structure theorem for extensive form rationalizability in the line of the seminal result by \cite{weinstein-07}. Theorem \ref{theorem:uniqueselection} shows that for every standard model, every \textit{outcome} consistent with extensive form rationalizability is the unique one also consistent with extensive form rationalizability for a sequence of profiles of perturbed subjective models. Notably, the theorem does not require the state space of the benchmark standard model to satisfy richness: Even if it does not, it can be approximated by profiles of subjective payoff structures that do satisfy richness at increasingly higher-orders (a player's space may not be rich, but she may think that her opponents' are, or may think that her opponents think that their opponents' are, etc.; indeed, we show in Proposition \ref{proposition:genericrichness} that this notion of \textit{higher-order} richness is generic in the space of all subjective payoff structures). Thus, our structure theorem applies to virtually all finite dynamic games with incomplete information, unlike those by \cite{penta-12} and \cite{chen-12}, that are restricted to games whose state space satisfies richness (and in which, as seen above, rationalization has no bite). However, while its proof is technically challenging, the conceptual consequences of Theorem \ref{theorem:uniqueselection} remain unclear. On the one hand, unlike \cite{weinstein-07}, \cite{penta-12} and \cite{chen-12} we cannot claim that the predictions of the solution concept we study are \textit{generically} unique: Our counterexamples document that a standard model can be approximated by a sequence of profiles of perturbed subjective models in which multiplicity of extensive form rationalizable outcomes is robust. On the other hand, it is not clear which relevant solution concept, if any, is a refinement of extensive form rationalizability, so that it is not clear how the theorem could be applied to deny the robustness of such hypothetical solution concept.

Nonetheless, the result suggests two possible lines of research. First, it leaves the characterization of the sharpest predictions robust to misspecifications of models as an open question. While the predictions of backwards rationalizability are robust, we document via a counterexample that they do not always admit unique selections. We conjecture then that, if a robust solution concept admitting unique selections exists, it must consist in some hybrid reasoning procedure combining elements of both backward and extensive-form rationalizability. Second, the proof of the theorem relies on a contagion argument that elaborates the classical one in global games and involves an explicit interplay between initial assessments, persistent information and the interpretation of observed behavior. It remains then to be explored whether such an interplay, slightly more complex than the standard one in dynamic global games, provides a rationale for the arousal of endogenous coordination.

%
%
%
%

\vspace{0.5cm}

The rest of the paper is structured as follows. Section \ref{section:preliminaries} formalizes the basic elements for the analysis of dynamic games with incomplete information. Section \ref{section:robsutnessmodel} presents a tool to formalize players' disagreements of arbitrary high-order about the set of payoff-states of the game, and extends the definition of extensive form rationalizability and backward rationalizability to account for this feature. Section \ref{section:results} presents the main findings of the paper. Section \ref{section:conclusions} ends with some literature review and new directions for research. All the proofs are relegated to the appendix.


\section{Preliminaries}
\label{section:preliminaries}

The formalization of dynamic games in Section \ref{subsection:dynamicgames} is standard (see \citealp{penta-12}), except for a small addition in Section \ref{subsubsection:canonicalpayoffstructures} concerning the representation of the utility functions and the information thereof players entertain. Section \ref{subsection:typeshierarchies} reviews the formalization of initial types (or belief-hierarchies) and is completely standard (see \citealp{harsanyi-67}, or \citealp{mertens-85}).


\subsection{Dynamic games}
\label{subsection:dynamicgames}

A \textit{dynamic game} consists in a list $\langle\Gamma,\Upsilon\rangle$ where $(i)$ $\Gamma$, an \textit{extensive form}, specifies the set of players, admissible sequences of choices, turns and past choices observed by each player at each turn, and $(ii)$ $\Upsilon$, a \textit{payoff structure}, formalizes players' possible preferences over the outcomes of the game and the information players may have about each others' preferences.


\subsubsection{Extensive forms}
\label{subsubsection:extensive forms}

An \textit{extensive form} consists in a list $\Gamma=\left\langle I, (A_{i})_{i\in I}, H, Z\right\rangle$ where $I$ is s finite set of \textit{players}, and:

\begin{itemize}

\item For each player $i$, $A_{i}$ is a finite set of \textit{actions}. A \textit{history} represents the unfolding of the game and consists in a finite sequence of possibly simultaneous choices, i.e., on a finite sequence of elements from $\{h^{0}\}\cup A$, where $A:=\bigcup_{J\subseteq I}A_{J}$ and $A_{J}:=\prod_{i\in J}A_{i}$ for any $J\subseteq I$. We say that history $h'$ \textit{follows} history $h$, denoted by $h\preceq h'$, if $h'$ obtains from adding finitely many possibly simultaneous choices to $h$.\footnote{That is, when there exists some $(a^{n})_{n\leq N}\subseteq A$ such that $h'=(h;(a^{n})_{n\leq N})$.}

\item $H$ and $Z$ are finite and disjoint sets of histories such that $(H\cup Z,\preceq)$ is a rooted and oriented tree with terminal nodes $ Z$. Symbol $h^{0}$ denotes the ex ante stage of the game, i.e., the root of the tree, and histories in $H$ and $Z$ are referred to as \textit{partial} and \textit{terminal}, respectively. For any player $i$ and partial history $h$, let $A_{i}\left(h\right)$ denote the set of actions \textit{available} to $i$ at $h$. Player $i$ is \textit{active} at $h$ if $A_{i}(h)$ is nonempty; let $H_{i}$ denote the set of these histories. We assume that: $(i)$ a player is never the only active one twice in a row, and $(ii)$ whenever a player is active, at least two actions are available to her.

\end{itemize}
In this context, the set of player $i$'s \textit{strategies} is $S_{i}:=\prod_{h\in H_{i}}A_{i}(h)$ and, as usual, the set of strategy \textit{profiles} is denoted by $S:=\prod_{i\in I}S_{i}$ and the set of player $i$'s opponents strategies, by $S_{-i}:=\prod_{j\neq i}S_{j}$. Obviously, for each partial history $h$ each strategy $s$ induces a unique terminal history, $z(s|h)$. Finally, let $S_{i}(h)$ and $S_{-i}(h)$ denote, respectively, the set of player $i$'s strategies and the set of $i$'s opponents' strategies that reach partial history $h$, and $H_{i}(s_{i})$, the set of player $i$'s histories that can be reached when she chooses strategy $s_{i}$.\footnote{To be precise, $S_{i}(h)=\{s_{i}\in S_{i}|h\preceq z(s_{-i};s_{i}|h^{0})\textup{  for some  }s_{-i}\in S_{-i}\}$ and $S_{-i}(h)=\prod_{j\neq i}S_{j}(h)$ on the one hand, and $H_{i}(s_{i}):=\{h\in H_{i}|s_{i}\in S_{i}(h)\}$ on the other.}


\subsubsection{Payoff structures}
\label{subsubsection:payoffstructures}

A \textit{standard} \textit{payoff structure} (for extensive form $\Gamma$) consists of a list $\Upsilon:=(\Theta_{0},(\Theta_{i},u_{i})_{i\in I})$, where:

\begin{itemize}

\item $\Theta_{0}$ is a compact and metrizable set of \textit{states of nature} and, for each player $i$, $\Theta_{i}$ is a compact and metrizable set of \textit{payoff types}. We denote the set of \textit{payoff states} by $\Theta:=\Theta_{0}\times\prod_{i\in I}\Theta_{i}$ and, for each player $i$, the set of profiles of player $i$'s opponents' payoff types by $\Theta_{-i}:=\prod_{j\neq i}\Theta_{j}$.

\item For each player $i$, $u_{i}:Z\times\Theta\rightarrow\mathds{R}$ is player $i$'s continuous \textit{utility function}.\footnote{Throughout the paper finite sets are endowed with the discrete topology, $\mathds{R}$ is endowed with the Euclidean topology, product sets are endowed with the product topology and spaces consisting of closed sets, with the Hausdorff metric. For a given topological space $X$, its corresponding set of measures over the Borel $\sigma$-algebra is endowed with the weak$\ast$ topology. Unless stated otherwise, measurability always refers to Borel measurability.}

\end{itemize}
The set of all standard payoff structures (for extensive form $\Gamma$) is denoted by $\mathscr{P}$.\footnote{A more general formalization of standard payoff structures allows for the set of payoff states not being a Cartesian product. We note that the whole theory developed in the paper extends to that case by introducing the necessary obvious changes.} Thus, within a dynamic game with incomplete information $\langle\Gamma,\Upsilon\rangle$, each payoff state $\theta\in\Theta$ fully characterizes the profile of players' utility functions: $(u_{i}(\,\cdot\,,\theta))_{i\in I}$. Besides, each player $i$ is assumed to \textit{know} her payoff type $\theta_{i}$ and (possibly) face uncertainty about the rest of components of $\theta$. Hence, besides the description of preferences, each payoff state $\theta$ also determines the information players have about the payoff states: In every moment of the game, a player $i$ with payoff type $\theta_{i}$ \textit{knows} that the true payoff state is some element in $\{\theta_{i}\}\times\Theta_{0}\times\Theta_{-i}$; in consequence, $\{\theta_{i}\}\times\Theta_{0}\times\Theta_{-i}$ is, precisely, the set of payoff states that player $i$ can eventually deem as possible as the game unfolds.


\subsubsection{Canonical representation and convergence of payoff structures}
\label{subsubsection:canonicalpayoffstructures}

The questions we explore throughout the paper are related to the strategic impact of \textit{small} relaxation of players' common knowledge of an information structure $\Upsilon$. We develop this point in detail in Section \ref{subsection:subjectiveness}, but before introducing it we need to first clarify what is understood as a \textit{small}, or perturbation, in a standard payoff structure. The fact that the sets of payoff states can vary vastly across different standard payoff structures and be of radically different nature presents some inconvenience for an analyst interested in assessing how \textit{close} two different standard payoff structures are form each other. To circumvent this issue we introduce a canonical representation that allows for envisioning \textit{every} standard payoff structure as part of an
object directly derived from the extensive form:

\begin{definition}[Canonical representation of payoff structures]
\label{definition:canonical}
Let $\Gamma$ be an extensive form and $\Upsilon=(\Theta_{0},(\Theta_{i},u_{i})_{i\in I})$, a payoff structure. Then, the \textup{canonical representation} of $\Upsilon$ is the list $\mathcal{C}(\Upsilon):=\mathcal{C}_{0}(\Upsilon)\times\prod_{i\in I}\mathcal{C}_{i}(\Upsilon)$, where:

\begin{enumerate}

\item[$(i)$] $\mathcal{C}_{0}(\Upsilon)$ is the set of possible profiles of utility functions:
\[
\mathcal{C}_{0}(\Upsilon):=\{(u_{i}(\,\cdot\,,\theta))_{i\in I}|\,\theta\in \Theta\}.
\]

\item[$(ii)$] For each player $i$, $\mathcal{C}_{i}(\Upsilon)$ is the set of profiles of utility functions that player $i$ can consider possible as the game unfolds:
\[
\mathcal{C}_{i}(\Upsilon):=\left\{\left\{(u_{i}(\,\cdot\,,(\theta_{0},\theta_{i},\theta_{-i})))_{i\in I}\left|(\theta_{0},\theta_{-i})\in\Theta_{0}\times\Theta_{-i}\right.\right\}\left|\,\theta_{i}\in\Theta_{i}\right.\right\}.
\]
\end{enumerate}
\end{definition}

The interpretation of each element $\upsilon=(\upsilon_{0},(\upsilon_{i})_{i\in I})\in\mathcal{C}(\Upsilon)$ is similar to that of each payoff state $\theta$: A description of each player $i$'s preferences and information. First, $\upsilon_{0}$ consists in a profile specifying a utility function $(\upsilon_{0})_{i}:Z\rightarrow\mathds{R}$. Second, $\upsilon_{i}$ can be interpreted as the set of profiles that can be conceived by a player $i$ already endowed with a payoff type as the game unfolds---anything outside of $\upsilon_{i}$ will never be part of $i$'s beliefs (even if necessary for rationalizing observed behavior).

Now, the main reason for introducing Definition \ref{definition:canonical} is that it allows for envisioning payoff structures as informational constraints over a canonical object directly derived from the extensive form, namely, the space of profiles of all conceivable utility functions, $\prod_{i\in I}\mathds{R}^{Z}$: Simply notice that, given $\Upsilon$, $\mathcal{C}_{0}(\Upsilon)$ is a compact subset of $\prod_{i\in I}\mathds{R}^{Z}$ and that for each $i\in I$, each $\upsilon_{i}\in\mathcal{C}_{i}(\Upsilon)$ is a compact subset of $\mathcal{C}_{0}(\Upsilon)$---so that $\mathcal{C}_{i}(\Upsilon)$ is a compact subset of compact subsets of $\mathcal{C}_{0}(\Upsilon)$. 
The notion of convergence for canonical representations of standard payoff is very natural: Say that $(\mathcal{C}(\Upsilon^{n}))_{n\in\mathds{N}}$ converges to $\mathcal{C}(\Upsilon)$ if:
\begin{itemize}

\item[1.] Sequence $(\mathcal{C}_{0}(\Upsilon^{n}))_{n\in\mathds{N}}$ converges to $\mathcal{C}_{0}(\Upsilon)$ in the Hausdorff metric for compact subsets of $\prod_{i\in I}\mathds{R}^{Z}$.

\item[2.] For each player $i$, sequence $(\mathcal{C}_{i}(\Upsilon^{n}))_{n\in\mathds{N}}$ converges to $\mathcal{C}_{i}(\Upsilon)$ in the Hausdorff metric for compact subsets of compact subsets of $\prod_{i\in I}\mathds{R}^{Z}$

\end{itemize}
The use of canonical representations greatly simplifies (at a conceptual level) our analysis of robustness in Section \ref{section:results}; perturbing payoff structures will consist on simply perturbing profiles of sets instead of perturbing both sets and utility functions whose domain includes these perturbed sets:

\begin{definition}[Convergence of payoff structures]
\label{definition:convergence}
Let $\Gamma$ be an extensive form. Then, we consider that a sequence of standard payoff structures $(\Upsilon^{n})_{n\in\mathds{N}}$ \textup{converges} to a standard payoff structure $\Upsilon$ if $(\mathcal{C}(\Upsilon^{n}))_{n\in\mathds{N}}$ converges to $\mathcal{C}(\Upsilon)$.
\end{definition}


\subsection{Initial beliefs}
\label{subsection:typeshierarchies}

At the beginning of dynamic game $\langle\Gamma,\Upsilon\rangle$ each player $i$ is endowed with a payoff type and an initial belief hierarchy about the state of nature and their opponents' payoff types, $\Theta_{0}\times\Theta_{-i}$. The payoff type is assumed to remain invariant as the game progresses, but the initial belief hierarchy may be updated in response to observed behavior. More formally, for each player $i$ we have a \textit{type} $t_{i}=(\theta_{i},\pi_{i})$, where $\theta_{i}\in\Theta_{i}$ is a payoff type and $\pi_{i}=(\pi_{i,1},\pi_{i,2},\dots,\pi_{i,k},\dots)$ is a belief hierarchy \`{a} la \cite{brandenburger-93}, where $\pi_{i,1}$ represents player $i$'s \textit{first-order} belief about $\Theta_{0}\times\Theta_{-i}$, $\pi_{i,2}$ represents player $i's$ \textit{second-order} belief about both $\Theta_{0}\times\Theta_{-i}$ and $i$'s opponents first-order beliefs, and so on. We assume that the belief hierarchy is \textit{coherent}, that is, that higher order beliefs marginalize to lower order beliefs and that coherency is common belief.\footnote{The details of the formalization are standard in the literature but, to be more precise, a belief hierarchy for player $i$ consists in a sequence $(\pi_{i,k})_{k\in\mathds{N}}$ where: $\pi_{i,1}$ is an element of $X_{i,1}:=\Delta(\Theta_{0}\times\Theta_{-i})$ and for each $k\in\mathds{N}$ subsequence $(\pi_{i,1},\dots,\pi_{i,k},\pi_{i,k+1})$ is an element of $X_{i,k+1}:=\{(\pi_{i,1}',\dots,\pi_{i,k}',\pi_{i,k+1}')\in X_{i,k}\times \Delta(\Theta_{0}\times\Theta_{-i}\times\prod_{j\neq i}X_{j,k})\,|\,\textup{marg}_{\Theta_{0}\times\Theta_{-i}\times\prod_{j\neq i}X_{j,k}}\pi_{i,k+1}'=\pi_{i,k}'\}$.}

Let $\Pi_{i}(\Theta)$ denote the set of all such belief hierarchies of player $i$ and $T_{i}(\Theta)$, the set of all types. We know from \cite{brandenburger-93} that there exists a homeomorphism $\tau_{i}:\Pi_{i}(\Theta)\rightarrow\Delta\left(\Theta_{0}\times T_{-i}(\Theta)\right)$, where $T_{-i}(\Theta)=\prod_{j\neq i}T_{j}(\Theta)$. Throughout the paper, for any player $i$ and type $t_{i}=(\theta_{i},\pi_{i})$ we write $\theta_{i}(t_{i})$ to represent type $t_{i}$'s payoff type and  $\pi_{i}(t_{i})=(\pi_{i,1}(t_{i}),\pi_{i,2}(t_{i}),\dots,\pi_{i,k}(t_{i}),\dots)$ to represent type $t_{i}$'s belief hierarchy.


\section{Heterogeneously perceived incentives}
\label{section:robsutnessmodel}

In these section we introduce the main methodological novelty of the paper. We develop a formalism that allows for players to entertain arbitrary, different perceptions, also at the higher-orders, regarding the payoff structure of the game: Player $i$ might be deluded about which the payoff structure player $j$ considers it to be, or may think that player $j$ is deluded about which the payoff structure player $k$ considers it to be, and so on. We formalize the objects that allow for this heterogeneous subjective approach to the game in Section \ref{subsection:subjectiveness} and, in Section \ref{subsection:solutionconcepts}, we extend the definitions of extensive form rationalizability and backward rationalizability to the present set-up.


\subsection[Subjective payoff structures and models]{Subjective payoff structures and subjective models}
\label{subsection:subjectiveness}


\subsubsection{Definitions}
\label{subsubsection:definitions}

Next, we introduce the main new formal element of the paper. A \textit{subjective payoff structure} is a hierarchical construction that represents how a player perceives the payoff structure of a game, how she perceives her opponents to perceive the payoff structure of the game, and so on. Thus, technically, a subjective payoff structure consists in a list in which the first component is a standard payoff structure (the one the player considers to be the \textit{right} one), the second component is a map that assigns a subjective payoff structure to each opponent (the one the player considers each opponent to consider the \textit{right} one), and so on. One consistency condition is required (see part $(iii)$ in the definition below): If player $i$ considers that $\Theta_{j}$ are the possible payoff types of player $j$, then she must consider that $j$ considers that $\Theta_{j}$ is (at least in terms of canonical representations) the set of her own possible payoff types---i.e., there is common knowledge in the event that players know their own set of payoff types.

\begin{definition}[Subjective payoff structure]
Let $\Gamma$ be an extensive form. A \textup{subjective payoff structure} for player $i$ is a sequence $d_{i}=(d_{i,k})_{k\in\mathds{N}}$ such that $d_{i,1}\in\mathscr{P}_{i}^{1}:=\mathscr{P}$ and, for every $k\geq 2$, we have $(d_{i,1},\dots,d_{i,k})\in\mathscr{P}_{i}^{k}$, where:
\begin{align*}
\mathscr{P}_{i}^{k}&:=\left\{(d_{i,1},\dots,d_{i,k})\left|
\begin{tabular}{r l}
$(i)$&$(d_{i,1},\dots,d_{i,k-1})\in\mathscr{P}_{i}^{k-1}$\\[3ex]
$(ii)$&$d_{i,k}:I\setminus\{i\}\rightarrow\displaystyle\bigcup_{j\neq i}\mathscr{P}_{j}^{k-1},\textup{  with  }d_{i,k}(j)\in\mathscr{P}_{j}^{k}\textup{  for every  }j\neq i$,\\[3ex]
$(iii)$&$\mathcal{C}_{j}(d_{i,1})=\mathcal{C}_{j}(d_{i,2}(j))\textup{  for every  }j\neq i$
\end{tabular}\right.
\right\}.
\end{align*}
Let $\mathscr{P}_{i}^{\infty}$ denote the set of all subjective payoff structures for player $i$.
\end{definition}

\vspace{0.2cm}

Throughout the paper, with some abuse of notation, we will indistinctly use $\Upsilon$ to refer to a standard payoff structure or to the subjective payoff structure that represents common knowledge of $\Upsilon$; accordingly, we will treat $\mathscr{P}$ as a subset of $\mathscr{P}_{i}^{\infty}$.\footnote{That is, for player $i$, $d_{i}^{\Upsilon}$ such that $d_{i,1}^{\Upsilon}=\Theta$, $d_{i,2}^{\Upsilon}(j)=d_{j,1}^{\Upsilon}$, $d_{i,3}^{\Upsilon}(j)=d_{j,2}^{\Upsilon}$, and so on.} For a given subjective payoff structure $d_{i}$ we denote:
\begin{itemize}

\item By $d_{j|i}:=(d_{i,k}(j))_{k\geq 2}$ and $d_{-i|i}:=(d_{j|i})_{j\neq i}$, the subjective payoff structure ascribed by $d_{i}$ to player $j\neq i$, and the list of subjective payoff structures ascribed by $d_{i}$ to every player $j\neq i$, respectively.

\item By $(d_{i,1})_{0}$, the set of states of nature that corresponds to the set of payoff states in standard payoff structure $d_{i,1}$, and by $(d_{i,1})_{j}$, for each $j\in I$, the set of player $j$'s payoff types in that same set.

\end{itemize}
Since subjective payoff structures put restrictions on which payoff states can be considered by the player, obviously, not every type $t_{i}$ is necessarily \textit{consistent} with every subjective payoff structure $d_{i}$; in order for the pair $(d_{i},t_{i})$ be consistent, the payoff type of $t_{i}$ must be in $(d_{i,1})_{i}$, the support of its first-order belief must be included in the projection on $(d_{i,1})_{0}\times (d_{i,1})_{-i}$ of $d_{i,1}$, and so on.\footnote{Formally, we require that $(1)$ $(d_{i,1})_{i}$ contains $\theta_{i}(t_{i})$; $(2)$ that $\textup{Proj}_{(d_{i,1})_{0}\times (d_{i,1})_{-i}}(d_{i,1})$ contains the support of $\pi_{i,1}(t_{i})$; $(3)$ that for every $t_{-i}$ in the support of $\tau_{i}(\pi_{i}(t_{i}))$ and every $j\neq i$, $(1)$ and $(2)$ hold w.r.t.~$d_{j|i}$; and so on.} For each profile of subjective payoff structures $d$, we let $T_{i}(d_{i})$ and $T_{-i}(d_{-i}):=\prod_{j\neq i}T_{j}(d_{j})$ denote, respectively, the set of player $i$'s types that are consistent with $d_{i}$, and the set of opponents types that are consistent with $d_{-i}$. Finally, let $T(d):=T_{i}(d_{i})\times T_{-i}(d_{-i})$. Consistent combinations of subjective payoff structures and types play a central role in our analysis, since they fully specify players' perception of the incentives of the game:

\begin{definition}[Subjective models, Standard models]
\label{definition:models}
Let $\Gamma$ be an extensive form. Then: 

\begin{itemize}

\item[$(i)$] A \textup{subjective model} for player $i$ is a pair $(d_{i},t_{i})$ where $d_{i}$ is a subjective payoff structure for player $i$ and $t_{i}$ is a type consistent with $d_{i}$. Let $\mathscr{M}_{i}^{\infty}$ denote the set of subjective models for player $i$.

\item[$(ii)$] We say that $(d_{i},t_{i})$ is \textup{standard} if $d_{i}$ represents common knowledge of a standard payoff structure. Let $\mathscr{M}_{i}^{1}$ denote the set of standard subjective models for player $i$.

\item[$(iii)$] A \textup{standard model} is a profile of standard subjective models $(d_{i},t_{i})_{i\in I}$ where the standard payoff structure $\Theta$ associated to each $d_{i}$, $d_{i,1}$, is the same for every player $i$. Let $\mathscr{M}^{1}$ denote the set of standard models.

\end{itemize}

\end{definition}

Thus, a subjective model fully specifies the player's perception of the payoff structure of the game; the subjective payoff structure describes the information about preferences the player will restrict her attention to, and the type, her initial probabilistic assessment about the payoff states. If the player perceives the set of states as commonly known, her perception is represented by a standard subjective model and, if every player agree on their perception of the set of payoff states and this fact is commonly known, players' perceptions are represented by a standard model. We present now an example aimed at illustrating these notions.


\subsubsection{Example}
\label{subsubsection:exampledirectories}

Consider the following extensive form:

\begin{center}
\begin{tikzpicture}

\coordinate (D11) at (0,0) ;
\node[above] at (0,0.1) {\small $1$};
\draw[fill=black] (D11) circle (0.18em);  

\coordinate (D21) at (2,0) ;
\node[above] at (2,0.1) {\small $2$};
\draw[fill=black] (D21) circle (0.18em);  

\coordinate (D12) at (4,0) ;
\node[above] at (4,0.1) {\small $1$};
\draw[fill=black] (D12) circle (0.18em); 

\coordinate (O1) at (0,-1.5) ;
%
\coordinate (O2) at (2,-1.5) ;
%
\coordinate (O3) at (4,-1.5) ;
%
\coordinate (O4) at (6,0) ;

\draw[black] (D11) to node [above] {\small $A_{1}$} (D21);
\draw[black] (D11) to node [right] {\small $D_{1}$} (O1);

\draw[black] (D21) to node [above] {\small $a$} (D12);
\draw[black] (D21) to node [right] {\small $d$} (O2);

\draw[black] (D12) to node [above] {\small $A_{2}$} (O4);
\draw[black] (D12) to node [right] {\small $D_{2}$} (O3);

\end{tikzpicture}
\end{center}
The mechanics of the game are simple: There are two players ($1$ and $2$) with two available actions each (to \textit{advance} and to go \textit{down}); if, at her turn, a player chooses to go down the game ends, if, instead, she chooses to advance it is then the other player's turn. Actions are observed, player $1$ chooses first and has at most two turns, and player $2$ has at most only one turn. Now, by adding a standard payoff structure $\Theta^{(n,+)}$, we can have a standard dynamic game as the following one, in which for a given $n$, utility functions are commonly known:

\begin{center}
\begin{tikzpicture}

\coordinate (D11) at (0,0) ;
\node[above] at (0,0.1) {\small $1$};
\draw[fill=black] (D11) circle (0.18em);  

\coordinate (D21) at (2,0) ;
\node[above] at (2,0.1) {\small $2$};
\draw[fill=black] (D21) circle (0.18em);  

\coordinate (D12) at (4,0) ;
\node[above] at (4,0.1) {\small $1$};
\draw[fill=black] (D12) circle (0.18em); 

\coordinate (O1) at (0,-1.5) ;
\node[below] at (0,-1.5) {\small $\begin{tabular}{c}$2+\frac{1}{n}$\\$0$\end{tabular}$};

\coordinate (O2) at (2,-1.5) ;
\node[below] at (2,-1.5) {\small $\begin{tabular}{c}0\\0\end{tabular}$};

\coordinate (O3) at (4,-1.5) ;
\node[below] at (4,-1.5) {\small $\begin{tabular}{c}2\\-1\end{tabular}$};

\coordinate (O4) at (6,0) ;
\node[right] at (6,0) {\small $\begin{tabular}{c}1\\1\end{tabular}$};

\draw[black] (D11) to node [above] {\small $A_{1}$} (D21);
\draw[black] (D11) to node [right] {\small $D_{1}$} (O1);

\draw[black] (D21) to node [above] {\small $a$} (D12);
\draw[black] (D21) to node [right] {\small $d$} (O2);

\draw[black] (D12) to node [above] {\small $A_{2}$} (O4);
\draw[black] (D12) to node [right] {\small $D_{2}$} (O3);

\end{tikzpicture}
\end{center}
Now, if we want to introduce the possibility of discrepancies among players, we can assume that player $2$'s subjective payoff structure is $d_{2}^{n}$, representing common knowledge of $\Theta^{(n,+)}$,\footnote{I.e., such that $d_{2,1}^{n}=\Theta^{(n,+)}$, $d_{2,2}^{n}(1)=\Theta^{(n,+)}$, $d_{2,3}^{n}(1)(2)=\Theta^{(n,+)}$, etc.} and that player $1$'s subjective structure is given by $d_{1}^{n}$, with a first component representing the following payoff structure, $\Theta^{(n,-)}$:
\begin{center}
\begin{tikzpicture}

\coordinate (D11) at (0,0) ;
\node[above] at (0,0.1) {\small $1$};
\draw[fill=black] (D11) circle (0.18em);  

\coordinate (D21) at (2,0) ;
\node[above] at (2,0.1) {\small $2$};
\draw[fill=black] (D21) circle (0.18em);  

\coordinate (D12) at (4,0) ;
\node[above] at (4,0.1) {\small $1$};
\draw[fill=black] (D12) circle (0.18em); 

\coordinate (O1) at (0,-1.5) ;
\node[below] at (0,-1.5) {\small $\begin{tabular}{c}$2-\frac{1}{n}$\\$0$\end{tabular}$};

\coordinate (O2) at (2,-1.5) ;
\node[below] at (2,-1.5) {\small $\begin{tabular}{c}0\\0\end{tabular}$};

\coordinate (O3) at (4,-1.5) ;
\node[below] at (4,-1.5) {\small $\begin{tabular}{c}2\\-1\end{tabular}$};

\coordinate (O4) at (6,0) ;
\node[right] at (6,0) {\small $\begin{tabular}{c}1\\1\end{tabular}$};

\draw[black] (D11) to node [above] {\small $A_{1}$} (D21);
\draw[black] (D11) to node [right] {\small $D_{1}$} (O1);

\draw[black] (D21) to node [above] {\small $a$} (D12);
\draw[black] (D21) to node [right] {\small $d$} (O2);

\draw[black] (D12) to node [above] {\small $A_{2}$} (O4);
\draw[black] (D12) to node [right] {\small $D_{2}$} (O3);

\end{tikzpicture}
\end{center}
and such that $d_{1}^{n}$ considers that player $2$ considers $\Theta^{(n,+)}$ to be commonly known.\footnote{I.e., such that $d_{1,1}^{n}=\Theta^{(n,-)}$ and $d_{2|1}^{n}=d_{2}^{n}$.} Thus, $(d_{1},d_{2})$ represent a situation in which player $2$ considers that $\Theta^{(n,+)}$ is commonly known and player $1$ knows this, but considers the payoff structure to be $\Theta^{(n,-)}$. For each subjective payoff structure, there is a unique type for the player consistent with it: For player $2$, the one that represents common belief in the payoffs given by the unique state in $\Theta^{(n,+)}$, $t_{2}^{n}$, and for player $1$, the one that represents probability 1 belief in the payoffs given by the unique state in $\Theta^{(n,-)}$ and in type $t_{2}^{n}$, $t_{1}^{n}$. Thus, $M_{1}^{n}:=(d_{1}^{n},t_{1}^{n})$ and $M_{2}^{n}:=(d_{2}^{n},t_{2}^{n})$ are the unique possible subjective models; the first one is not standard, the second one is. Thus, $(M_{1}^{n},M_{2}^{n})$ is \textit{not} a standard model.


\subsubsection{Richness and higher-order richness}
\label{subsubsection:higherorderrichness}

Richness assumptions play a central role in the literature of structure theorems and unique selections (see \citealp{weinstein-07}, \citealp{penta-12} or \citealp{chen-12}). The reason is that states in which a strategy is dominant are employed as `seeds' to initiate a contagion argument that propagates through the hierarchy of beliefs: A player might not believe in one of those states, but if she believes that her opponent believes in one of then, then the player should choose a best response to the strategy her opponent considers to be dominant. Formally, richness is defined as follows:

\begin{definition}[Richness, c.f.~\citealp{penta-12}]
Let $\Gamma$ be an extensive form and $\Upsilon$, a standard payoff structure. We say that $\Upsilon$ is \textup{rich} if for every payoff type $\theta_{i}$ and every strategy $s_{i}$ there exists some payoff type $\theta_{i}(\theta_{i},s_{i})\in\Theta_{i}$ such that $s_{i}$ is conditionally dominant for player $i$ at every payoff state $\theta'$ with $\theta_{i}'=\theta_{i}(\theta_{i},s_{i})$ and such that $(\theta_{0},\theta_{i},\theta_{-i})$ and $(\theta_{0},\theta_{i}(\theta_{i},s_{i}),\theta_{-i})$ are payoff equivalent for every $j\neq i$ for every pair $(\theta_{0},\theta_{-i})$.
%
%
%
\end{definition}

Notice that in the case of standard payoff structures, the notion is rather demanding. It requires that it is common knowledge among players that no strategy is known not to be conditionally dominant. Or, in other words, that it is commonly known among players that every strategy can be found to be conditionally dominant for some appropriate belief about the payoff states. On the contrary, subjective payoff structures allow for greatly relaxing this common knowledge assumption while keeping its bite. Players may know that richness does not hold, but they may consider that their opponents consider it to hold; or they may consider that it does not hold, that their opponents consider it not to hold, but consider that their opponents consider their opponents to consider that it holds. Thus, it is possible to approximate common knowledge of absence of richness while still having the chance to exploit an infection argument: 
I could perceive the game as not being rich, but perceive that my opponents perceive it to be rich, or only perceive that they perceive that their opponents perceive that it is rich, and so on. In order to formalize the idea that richness holds at high orders but not low ones let us introduce the following definition:

\begin{definition}[Higher-order richness]
Let $\Gamma$ be an extensive form and $d_{i}$, a subjective payoff structure for player $i$. Then we say that $d_{i}$ satisfies:
\begin{itemize}

\item[$(i)$] $1^{\textup{st}}$-\textup{order richness} if $d_{i,1}$ is rich.

\item[$(ii)$]$k^{\textup{th}}$-\textup{order richness}, for $k\geq 2$, if $d_{j|i}$ satisfies $(k-1)^{\textup{th}}$-richness for every $j\neq i$.

\item[$(iii)$] \textup{Higher-order richness} if $d_{i}$ satisfies $k^{\textit{th}}$-order richness for some $k\geq 1$.

\end{itemize}
\end{definition}

Notably, it turns out that higher-order richness is a generic property within the space of subjective payoff structures; thus, while richness is a very demanding notion for games with a commonly known state space, higher-order richness is extremely mild for games with subjective payoff structures:

\begin{restatable}[Genericity of higher-order richness]{proposition}{genericrichness}
\label{proposition:genericrichness}
Let $\Gamma$ be an extensive form. Then, the set of subjective payoff structures that satisfy higher-order richness is open and dense in $\mathscr{P}_{i}^{\infty}$.
\end{restatable}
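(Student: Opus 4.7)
The plan is to handle openness and density separately. Denote by $R_{k} \subseteq \mathscr{P}_{i}^{\infty}$ the subset satisfying $k^{\textup{th}}$-order richness, so the target set is $R = \bigcup_{k \geq 1} R_{k}$.

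For openness, I would observe that $k^{\textup{th}}$-order richness is a condition on the depth-$k$ standard payoff structures appearing in $d_{i}$, and that richness of a standard payoff structure is captured by a finite collection of strict inequalities on the utility profiles of its canonical representation (each asserting conditional dominance of a strategy at a given payoff type, together with the payoff-equivalence clause). Because strict inequalities persist under sufficiently small Hausdorff perturbations, and the depth-$k$ slice of $d_{i}$ depends continuously on $d_{i}$ under the projective-limit extension of Definition \ref{definition:convergence} to $\mathscr{P}_{i}^{\infty}$, each $R_{k}$ is open; hence so is $R$ as a union of open sets.

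For density, given $d_{i}$ and $\epsilon > 0$, the projective-limit topology weights deeper levels less, so I would choose $k^{\ast}$ such that modifications restricted to levels beyond $k^{\ast}$ yield perturbations below $\epsilon$. Then I would construct $d_{i}^{\prime}$ by keeping $d_{i,1}, \dots, d_{i,k^{\ast}}$ intact and, from level $k^{\ast}+1$ onwards, propagating common knowledge of a rich enlargement $\Upsilon^{\ast}$ of the depth-$(k^{\ast}+1)$ standard payoff structures. The enlargement adjoins, for each (payoff type, strategy) pair, an auxiliary dominating payoff type whose non-owner utilities coincide with those of the original type --- precisely the payoff-equivalence clause in the definition of richness --- so that $\Upsilon^{\ast}$ is rich by construction.

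The main obstacle is verifying the consistency condition (iii) at the interface between the unmodified level $k^{\ast}$ and the enlarged level $k^{\ast}+1$, since the rich enlargement generically adds new elements to the canonical representation $\mathcal{C}_{j}$ of every opponent. The resolution requires a joint perturbation: one simultaneously performs a matching Hausdorff-small adjustment at level $k^{\ast}$ so that the enlarged canonical representations align on both sides of the interface. The payoff-equivalence engineered into the auxiliary dominating types ensures that the added elements duplicate the signatures of existing ones up to the owner's utility coordinate, so the required level-$k^{\ast}$ adjustment is of the same magnitude as the gap introduced by the enlargement and stays within the $\epsilon$-budget for $k^{\ast}$ sufficiently large. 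This yields $d_{i}^{\prime} \in R_{k^{\ast}+1}$ within $\epsilon$ of $d_{i}$, completing the proof.
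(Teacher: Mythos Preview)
Your overall approach matches the paper's: openness of each $R_k$ (the paper does this by a one-line induction, you via persistence of strict inequalities) gives openness of their union; density follows by grafting common knowledge of a rich structure beyond a deep level and letting that level tend to infinity. The paper's density argument is a two-line recursion that simply asserts the graft $d_i^n(d_i)$ is a valid subjective payoff structure without discussing the consistency condition (iii) at the interface. You are right to flag (iii) as the main obstacle---the paper glosses over it entirely---so in this respect your proposal is more careful than the paper's own proof.

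That said, your resolution of the obstacle is incomplete. Condition (iii) demands the \emph{exact} equality $\mathcal{C}_j(d_{i,1}) = \mathcal{C}_j(d_{i,2}(j))$, and by the recursive definition of $\mathscr{P}_i^\infty$ the analogous equality must hold at every depth of the hierarchy. A Hausdorff-small adjustment at level $k^*$ that aligns it with the enriched level $k^*+1$ will in turn break the corresponding equality between levels $k^*-1$ and $k^*$, forcing the adjustment to cascade back to level $1$. Once shallow levels are modified, your product-topology argument (``perturbations confined to levels beyond $k^*$ stay within $\epsilon$'') no longer bounds the total perturbation. The payoff-equivalence clause you invoke does make the added dominating types close to existing ones in the non-owner coordinates, but closeness is not the exact equality that (iii) requires, so it does not halt the cascade. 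A complete argument would need to enrich all levels simultaneously in a compatible way and separately control the size of the resulting global perturbation---a different and more delicate construction than the one you outline.
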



\subsubsection{Discussion: Relaxation of all common knowledge assumptions}
\label{subsubsection:discussionCKrelax}

We would like to be emphatic about how natural this separation of rules and stakes is: While the possible outcomes of an economic interaction can be objectively described, there is nothing intrinsically objective in the notion of \textit{utility}; the allocation of resources that results from each combination of reported preferences can be commonly known, the market shares that correspond to each different combination of pricing strategies can be commonly known, etc. But each individual agent's preference over material outcomes is a personal trait of hers. In consequence, it seems reasonable to remove the set of states, or possible utilities, from the objective (in the sense commonly known) part of the game


Subjective payoff structures allow for relaxing all common knowledge assumptions about payoffs in a simple, tractable way; the only common knowledge assumptions they entail is the trivial one: The set of utility functions players reason about consists in functions that map outcomes to utilities. Furthermore, not only do subjective payoff structures allow for approximating models not satisfying any richness assumption with models that satisfy richness at high orders, but they also allow for approximating models satisfying certain knowledge assumptions (say, the no information case, as in \citealp{weinstein-07}, or \citealp{chen-12}) with models that satisfy different knowledge assumptions at high orders (say, models with persistent private information, as in \citealp{penta-12}). In this sense, subjective payoff structures allow for enhancing the subjective nature of individual strategic decisions and, following the `Wilson doctrine', develop game-theoretic analyses that dispense with unrealistic, excessively demanding common knowledge assumptions.

To the best of our knowledge, the methods for reducing common knowledge assumptions closest to our approach build on \citeauthor{battigalli-03}'s \citeyearpar{battigalli-03} $\Delta$-\textit{restrictions}. Formally, a $\Delta$-restrictions is a specification of profiles of first-order beliefs, $\Delta:=(\Delta_{i})_{i\in I}$, so that for each player $i$ we have a collection $\Delta_{i}:=(\Delta_{\theta_{i}})_{\theta_{i}\in\Theta_{i}}$ where $\Delta_{\theta_{i}}\subseteq\Delta^{H_{i}\cup\{h^{0}\}}(S_{-i}\times\Theta_{0}\times\Theta_{-i})$ for every payoff type $\theta_{i}$. The difference between $\Delta$-restrictions and profiles of subjective payoff structures is mainly twofold. On the one hand, $\Delta$-restrictions allow for placing restrictions on the evolution of strategic uncertainty, not only payoff uncertainty. On the other, $\Delta$-restrictions impose constraints over higher-order beliefs by assuming that $\Delta$ is commonly known among players. Thus, it is possible to envision subjective payoff structures as $\Delta$-restrictions that $(1)$ do not place constrains on beliefs about opponents' behavior, and $(2)$ include specifications of constraints on higher-order beliefs that do not necessarily entail nontrivial common knowledge assumptions.

The restrictions embodied by subjective payoff structures also provide a reduced from method of representing unawareness in games. The uncertainty about the payoff state could be thought of as arising from the resolution of some unmodeled contingencies regarding physical aspects of the world external to the choices made by the players---a state of nature $\theta_0 \in \prod_{i\in I}\mathds{R}^{Z}$ being shorthand for the particular resolution of this uncertainty that induces such payoffs to each player. Thus, although a player may easily envision the entirety of the abstract space $\prod_{i\in I}\mathds{R}^{Z}$, in actuality he considers on those states relating to some contingency he is aware of.

Now, when a player is aware of a contingency but initially deems it probability zero, he may revise his beliefs by reasoning about his opponents behavior. This is not so when the player is unaware of the contingency, since observing his opponents actions will not (in general) enlighten him to new contingencies. Thus, the essential attribute of our formalism---persistent, heterogeneous belief about the set of payoff states---can arise as a natural consequence of asymmetric awareness.

Of course, an agent's perception of other agents' awareness must be expressible in terms of things he is himself aware of. This is the motivation of various extant restrictions in the awareness literature---closure under subformula in \cite{fagin1988belief}, the \emph{confinidness} property in \cite{heifetz-06}, property F1 in \cite{piermont2019algebraic}. The translation to our set up would be: if $i$ is aware that $j$ is aware of $\theta$ then $i$ must be himself aware of $\theta$. This conceptual restriction can be made formal via the edict: $(d_{i,2}(j))_0 \subseteq (d_{i,1})_0$. The set of payoff states that $i$ considers $j$ to consider is a subset of those he himself is aware of.



\subsection{Solution concepts}
\label{subsection:solutionconcepts}


\subsubsection{Conjectures and sequential rationality}
\label{subsubsection:solutionpreliminaries}


The evolution throughout the game of each player's subjective beliefs is presented by conjectures. Formally, for each player $i$ endowed with a subjective model $(d_{i},t_{i})$, a \textit{conjecture} consists in a list of beliefs $\mu_{i}=(\mu_{i}(h))_{h\in H_{i}\cup\{h^{0}\}}$ satisfying that:
\begin{itemize}

\item[$(i)$] For every history $h$, either initial or in which player $i$ is active, $\mu_{i}(h)$ is a probability measure on $S_{-i} \times (d_{i,1})_{0}\times T_{-i}(d_{-i|i})$ that assigns probability $1$ to $S_{-i}\times (d_{i,1})_{0}\times T_{-i}(d_{-i|i})$.

\item[$(ii)$] Whenever possible, beliefs are updated following conditional probability.

\item[$(iii)$] The initial beliefs represented by $\mu_{i}$ are consisted with those represented by $t_{i}$:
\[
\textup{marg}_{(d_{i,1})_{0}\times T_{-i}(d_{-i|i})}\mu_{i}(h^{0})=\tau_{i}(\pi_{i}(t_{i})).
\]

\end{itemize}
We denote the set of conjectures consistent with subjective model $(d_{i},t_{i})$ by $\textup{C}(d_{i},t_{i})$.

Given a payoff type $\theta_{i}$ and a history $h\in H_{i}\cup\{h^{0}\}$, each conjecture $\mu_{i}$ naturally induces a \textit{conditional expected payoff} for each strategy $s_{i}$:
\[
U_{i}\left(\mu_{i},s_{i}\left|\theta_{i},h\right.\right):=\int_{S_{-i}\times\Theta_{0}\times\Theta_{-i}}u_{i}\left(z\left(s_{-i};s_{i}\left|h\right.\right),(\theta_{0},\theta_{i},\theta_{-i})\right)\textup{d}(\textup{marg}_{S_{-i}\times\Theta_{0}\times\Theta_{-i}}\mu_{i}(h)).
\]
Finally, given a payoff type $\theta_{i}$ and a conjecture $\mu_{i}$, the notion of sequential rationality is then captured via the set of (sequential) \textit{best responses}:\footnote{Notice that, if we denote the set of all conjecture of player $i$ as $\Delta^{H_{i}\cup\{h^{0}\}}(S_{-i}\times(d_{i,1})_{0}\times T_{-i}(d_{-i|i}))$, then $r_{i}:\Theta_{i}\times\Delta^{H_{i}\cup\{h^{0}\}}(S_{-i}\times (d_{i,1})_{0}\times T_{-i}(d_{-i|i}))\rightrightarrows S_{i}$ is upper hemicontinuous and nonempty-valued.}
\[
r_{i}\left(\theta_{i},\mu_{i}\right):=\left\{s_{i}\in S_{i}\left|s_{i}\in\bigcap_{h\in H_{i}(s_{i})}\textup{arg}\underset{s'_{i}\in S_{i}}{\textup{max}}\,U_{i}\left(\mu_{i},s'_{i}\left|\theta_{i},h\right.\right)\right.\right\}.
\]


\subsubsection{Extensive form rationalizability}
\label{subsubsection:solutionconcept}

Extensive form rationalizability formalizes the notion of \textit{forward induction}; that is, the idea that while forming conjectures about opponents' behavior, players take observed past choices into account. Specifically, as clarified by the epistemic analysis by \cite{battigalli-02}, players \textit{rationalize} observed behavior to the `highest possible degree' of strategic sophistication the behavior is consistent with: Rationality; or rationality and belief whenever possible in opponents' rationality; or rationality, belief whenever possible in opponents' rationality and belief whenever possible in opponents' belief whenever possible in their opponents' rationality; etc. Thus, the main conceptual feature of the solution concept is that observed behavior is employed to infer other players' private information and that conjectures about behavior are based on this inference. To be more specific, if a player entertains certain beliefs about the payoff type of an opponent, upon observing a choice that would have been irrational for that payoff type, if possible, she should update her beliefs about the payoff type and believe that the true type is one for which the observed behavior is rational.

Formally, the original notion due to \cite{pearce-84} and \cite{battigalli-97} is generalized, in our context, as a correspondence that maps each player $i$'s subjective model $(d_{i},t_{i})$ to a certain subset of strategies $\textup{F}_{i}(d_{i},t_{i})\subseteq S_{i}$:

\begin{definition}[Extensive form rationalizability, c.f.~\citealp{pearce-84}, \citealp{battigalli-97}]
\label{definition:solutionconcept}
Let $\Gamma$ be an extensive form. Player $i$'s set of \textup{(interim)} \textup{extensive form rationalizable} strategies for subjective model $(d_{i},t_{i})$ is $\textup{F}_{i}(d_{i},t_{i}):=\bigcap_{k\geq 0}\textup{F}_{i,k}(d_{i},t_{i})$ where $\textup{F}_{i,0}(d_{i},t_{i})=S_{i}$ and $\textup{C}_{i,0}^{F}(d_{i},t_{i})=\textup{C}_{i}(d_{i},t_{i})$, and for each $k\geq 0$,
\begin{align*}
\textup{F}_{i,k+1}(d_{i},t_{i})&:=\left\{
s_{i}\in S_{i}\left|\exists\mu_{i}\in\textup{C}_{i,k}^{F}(d_{i},t_{i})\textup{  s.t.  }s_{i}\in r_{i}(\theta_{i}(t_{i}),\mu_{i})
\right.
\right\},\\[3ex]
\textup{C}_{i,k+1}^{F}(d_{i},t_{i})&:=\left\{
\mu_{i}\in \textup{C}_{i,k}^{F}(d_{i},t_{i})\left|
\begin{tabular}{l}
$\textup{$\forall h\in H_{i}\cup\{h^{0}\}$ s.t.}$\\[2ex]
$S_{-i}(h)\times T_{-i}(d_{-i|i})\cap\textup{Graph}\left(\textup{F}_{-i,k}(d_{-i|i},\,\cdot\,)\right)\neq\emptyset$,\\[1.5ex]
$\mu_{i}(h)[(d_{i,1})_{0}\times M]=1\textup{  for some }M\subseteq\textup{Graph}\left(F_{-i,k}(d_{-i|i},\,\cdot\,)\right)$
\end{tabular}
\right.
\right\}.
\end{align*}
\end{definition}


\subsubsection{Backward rationalizability}
\label{subsubsection:backwardrationalizability}

\textit{Backward rationalizability} (see \citealp{penta-11,penta-15}) extends the standard notion of backward induction to games with incomplete information. The reasoning process behind differs from that behind extensive form rationalizability seen above: Here, upon observing unexpected behavior, players do not necessarily try to find a rationale for said behavior (even at the cost of dropping their belief in `common belief in rationality'); instead, they can interpret observed behavior as a mistake and maintain their belief that `common belief in rationality' holds.\footnote{It is not conceptually accurate to talk about common belief in rationality in a dynamic game, not at every history this belief can be held; however, we keep this terminology in the present discussion for expositional simplicity.} Thus, the difference between the extensive form rationalizability and backward rationalizability is one of epistemic priority (see \citealp{catonini-19}): While the former dictates to update higher-order beliefs about payoff states in order to perform a rationalization to the highest possible degree, the latter permits to maintain the higher-order beliefs about payoff states, believe that unexpected behavior is uninformative and keep the faith on continuation behavior being consistent with rationality and common belief thereof.

Under subjective payoff structures, \citeauthor{penta-11}'s \citeyearpar{penta-11,penta-15} original notion is generalized as a correspondence that maps each player $i$'s subjective model $(d_{i},t_{i})$ to a certain subset of strategies $\textup{B}_{i}(d_{i},t_{i})\subseteq S_{i}$. It is easy to check that the following Definition \ref{definition:BR} below captures the idea, formalized in epistemic language by \cite{perea-14} and \cite{battigalli-18},\footnote{Unlike \cite{penta-11,penta-15}, and in order to make the comparison with extensive form rationalizability more immediate, we drop the requirement that a player has beliefs about her own strategy. Nonetheless, it is immediate that the version we employ captures the idea of `continuation behavior consistent with common belief in rationality' as formalized by \textit{rationality and common belief in future rationality} in \cite{perea-14} or \textit{common full belief in optimal planning and in belief in continuation consistency} in \cite{battigalli-18}.} that players believe that rationality will hold at every possible higher-order in the continuation game, even after moves inconsistent with it. First, define the set of profiles of player $i$'s opponents' strategies that are equivalent to $s_{-i}$ in the continuation game following history $h$ as:
\[
[s_{-i}]_{h}:=\left\{s_{-i}'\in S_{-i}\left|\forall j\neq i,\forall h\in H_{j}\textup{  s.t.  }h\prec h', s'_{j}(h)=s_{j}(h)\right.\right\}.
\]
Then, backward rationalizability is defined in our context as:

\begin{definition}[Backward rationalizability, c.f.~\citealp{penta-11}]
\label{definition:BR}
Let $\Gamma$ be an extensive form. Player $i$'s set of \textup{(interim)} \textup{backward} \textup{rationalizable} strategies for subjective model $(d_{i},t_{i})$ is $\textup{B}_{i}(d_{i},t_{i}):=\bigcap_{k\in\mathds{N}}\textup{B}_{i,k}(d_{i},t_{i})$ where $\textup{B}_{i,0}(d_{i},t_{i}):=S_{i}$ and $\textup{C}_{i,0}^{B}(d_{i},t_{i}):=\textup{C}_{i}(d_{i},t_{i})$, and for each $k\geq 0$,
\begin{align*}
\textup{B}_{i,k+1}(d_{i},t_{i})&:=\left\{s_{i}\in S_{i}\left|\exists \mu_{i}\in \textup{C}_{i,k}^{B}(d_{i},t_{i})\textup{  s.t.  }s_{i}\in r_{i}(\theta_{i}(t_{i}),\mu_{i})
\right.\right\},\\[3ex]
\textup{C}_{i,k+1}^{B}(d_{i},t_{i})&:=\left\{\mu_{i}\in \textup{C}_{k}^{B}(d_{i},t_{i})\left|
\begin{tabular}{l}
$\textup{$\forall h\in H_{i}\cup\{h^{0}\}$,}$\\[2ex]
$(\theta_{0},s_{-i},t_{-i})\in\textup{supp }\mu_{i}(h)\Rightarrow [s_{-i}]_{h}\cap\textup{B}_{-i,k}(d_{-i|i},t_{-i})\neq\emptyset$
\end{tabular}
\right.\right\}.
\end{align*}
\end{definition}


\section{Results}
\label{section:results}

The application of the theoretical framework developed above will shed light on the impact on strategic behavior of players' discrepancies about the state space. Section \ref{subsection:norefinementresult} shows that extensive form rationalizable outcomes are \textit{not always} a refinement of those by backward induction, and that they are \textit{not} robust to misspecifications of subjective models. Section \ref{subsection:robustness} shows that extensive form rationalizability delivers predictions that are robust to misspecifications of initial beliefs (Proposition \ref{proposition:UHCtypes}) and that backward rationalizability delivers predictions that are robust to misspecification of subjective models (Proposition \ref{proposition:UHCall}); a direct consequence of these results is that, in dynamic games where the set of payoff states is rich, extensive form rationalizability, backward rationalizability and \textit{interim rationalizability} all coincide (Corollaries \ref{corollary:ISR} and \ref{corollary:BR}).\footnote{By interim rationalizability we refer to \citeauthor{dekel-07}'s \citeyearpar{dekel-07} \textit{interim correlated rationalizability} applied to the normal-form representation of the dynamic game. In \cite{penta-12} interim rationalizability is defined using \textit{sequential} rationality instead of \textit{ex ante} rationality, and the solution concept is dubbed \textit{interim sequential rationalizability}. In the games considered by \cite{chen-12}, where players have no private information, interim sequential rationalizability and interim rationalizability coincide (see \cite{penta-12}, p.~648).} Finally, in Section \ref{subsection:uniqueselections} we prove that, by perturbing standard models, it is possible to create unique selection arguments \`{a} la \cite{weinstein-07} so that, for any standard model and any extensive form rationalizable outcome, the outcome is the unique prediction along the misspecified subjective models (Theorem \ref{theorem:uniqueselection}), and we end by presenting an example that illustrates why such a unique selection argument is impossible for backward rationalizability.


\subsection{Rationalization as a refinement criterion}
\label{subsection:norefinementresult}


\subsubsection{Main counterexample}
\label{subsubsection:maincounterexample}

Consider again the extensive form studied in Section \ref{subsubsection:exampledirectories}. Let us endow it with a standard payoff structure $\Theta=\{\theta\}$ such that the resulting dynamic game is: 

\begin{center}

\begin{tikzpicture}

\coordinate (D11) at (0,0) ;
\node[above] at (0,0.1) {\small $1$};
\draw[fill=black] (D11) circle (0.18em);  

\coordinate (D21) at (2,0) ;
\node[above] at (2,0.1) {\small $2$};
\draw[fill=black] (D21) circle (0.18em);  

\coordinate (D12) at (4,0) ;
\node[above] at (4,0.1) {\small $1$};
\draw[fill=black] (D12) circle (0.18em); 

\coordinate (O1) at (0,-1.5) ;
\node[below] at (0,-1.5) {\small $\begin{tabular}{c}$2$\\0\end{tabular}$};

\coordinate (O2) at (2,-1.5) ;
\node[below] at (2,-1.5) {\small $\begin{tabular}{c}0\\0\end{tabular}$};

\coordinate (O3) at (4,-1.5) ;
\node[below] at (4,-1.5) {\small $\begin{tabular}{c}2\\-1\end{tabular}$};

\coordinate (O4) at (6,0) ;
\node[right] at (6,0) {\small $\begin{tabular}{c}1\\1\end{tabular}$};

\draw[black] (D11) to node [above] {\small $A_{1}$} (D21);
\draw[black] (D11) to node [right] {\small $D_{1}$} (O1);

\draw[black] (D21) to node [above] {\small $a$} (D12);
\draw[black] (D21) to node [right] {\small $d$} (O2);

\draw[black] (D12) to node [above] {\small $A_{2}$} (O4);
\draw[black] (D12) to node [right] {\small $D_{2}$} (O3);

\end{tikzpicture}

\end{center}

Now, for each player $i=1,2$ let $d_{i}^{0}$ represent the standard subjective payoff structure associated to $\Theta$. Clearly, there is a unique type for player $i$ consistent with $\Theta$, namely the one representing common belief in $\theta$, which we denote by $t_{i}^{0}$. Thus, there is a unique profile of subjective models, $(M_{1},M_{2})$ where $M_{i}=(d_{i}^{0},t_{i}^{0})$ for both $i=1,2$, which is, indeed, a standard model. The strategic analysis of this game yields the following conclusions:
\begin{itemize}

\item The unique prediction consistent with extensive form rationalizability is $D_{1}$ (outcomewise). This is easy to see. First, player $1$ will never choose $A_{2}$ at her second turn. 
Second, 
if player $2$ finds herself at $A_{1}$, while she can be surprised that $1$ took the risk of advancing instead of guaranteeing a payoff of $2$, she 
need not abandon the belief that 1 is rational. Player $1$ may have advanced, rationally, by believing that $2$ does not expect her
(i.e., 2 does not expect 1)
to be rational in her second turn, and therefore believing that $2$ will advance as well. Thus, since $2$ maintains the belief that $1$ is rational, she will expect $1$ to choose $D_{2}$ in her 
first 
second turn, and will therefore choose $d$. Being able to forecast this, player $1$ will choose $D_{1}$ in her first turn.

\item The unique prediction consistent with backward rationalizability is $D_{1}$ as well. This is immediate.

\end{itemize}

Thus, both solution concepts uniquely predict $D_{1}$ is played. 
 Suppose now that we perturb the game using a sequence consisting of the subjective models defined in Section \ref{subsubsection:exampledirectories}, $(M_{1}^{n},M_{2}^{n})_{n\in\mathds{N}}$. For each $n\in\mathds{N}$, the situation can be represented as follows:
\begin{center}
\begin{tikzpicture}

\coordinate (D11) at (0,0) ;
\node[above] at (0,0.1) {\small $P_{1}$};
\draw[fill=black] (D11) circle (0.18em);  

\coordinate (D21) at (2,0) ;
\node[above] at (2,0.1) {\small $P_{2}$};
\draw[fill=black] (D21) circle (0.18em);  

\coordinate (D12) at (4,0) ;
\node[above] at (4,0.1) {\small $P_{1}$};
\draw[fill=black] (D12) circle (0.18em); 

\coordinate (O1) at (0,-1.5) ;
\node[below] at (0,-1.5) {\small $\begin{tabular}{c}$2-\frac{1}{n}$\\0\\[0.5ex]·····\\[0.5ex]$2+\frac{1}{n}$\\0\end{tabular}$};

\coordinate (O2) at (2,-1.5) ;
\node[below] at (2,-1.5) {\small $\begin{tabular}{c}0\\0\end{tabular}$};

\coordinate (O3) at (4,-1.5) ;
\node[below] at (4,-1.5) {\small $\begin{tabular}{c}2\\-1\end{tabular}$};

\coordinate (O4) at (6,0) ;
\node[right] at (6,0) {\small $\begin{tabular}{c}1\\1\end{tabular}$};

\draw[black] (D11) to node [above] {\small $A_{1}$} (D21);
\draw[black] (D11) to node [right] {\small $D_{1}$} (O1);

\draw[black] (D21) to node [above] {\small $a$} (D12);
\draw[black] (D21) to node [right] {\small $d$} (O2);

\draw[black] (D12) to node [above] {\small $A_{2}$} (O4);
\draw[black] (D12) to node [right] {\small $D_{2}$} (O3);

\end{tikzpicture}

\end{center}
Remember that here players commonly agree on the payoffs that correspond to terminal histories $(A_{1},d)$, $(A_{1},a,D_{2})$ and $(A_{1},d,A_{2})$. However, in player $2$'s mind they also commonly agree about the payoffs corresponding to $D_{1}$ being $(2+1/n,0)$, but this is wrong: In player $1$'s mind, the payoff that corresponds to $D_{1}$ is $(2-1/n,0)$, but she knows that player $2$ 
believes in the commonly agreement on $(2+1/n,0)$. Obviously, the larger $n$, the closer we are to the game with commonly known payoff structure $\Theta$. This is the sense in which we say that $(M_{1}^{n},M_{2}^{n})_{n\in\mathds{N}}$ converges to $(M_{1},M_{2})$. For each given $n$, the corresponding strategic analysis yields the following conclusions:
\begin{itemize}

\item The predictions consistent with extensive form rationalizability are multiple now: $D_{1}$, $(A_{1},d)$ and $(A_{1},a,D_{2})$. This is easy to see. Upon observing $A_{1}$, which in her mind is strictly dominated, player $2$ may drop the belief that player $1$ is rational, and may have any arbitrary belief about 1's future behavior. She might believe that player $1$ will be completely irrational and choose $A_{1}$ in her second turn, or she might believe player $1$ will be more sensible next time and choose $D_{2}$. In consequence, both $a$ and $d$ are rationalizable choices for player $2$. As a result, $(A_{1},D_{2})$, $(D_{1},A_{2})$ and $(D_{1},D_{2})$ are all rationalizable strategies for player $1$, depending on what she expects player $2$ to choose.

\item The unique prediction consistent with backward rationalizability is $D_{1}$ again. Also again, this is immediate.

\end{itemize}

The example shows that extensive form rationalizability refining backward rationalizability crucially depends on players commonly agreeing on the set of possible payoff states. As arbitrarily small disagreements arise, both $(A_{1},d)$ and $(A_{1},a,D_{2})$ become consistent with extensive form rationalizability; however, \textit{none} of them is consistent with backward rationalizability. Importantly, the example delivers an additional lesson: Extensive form rationalizable outcomes fail to be upper hemicontinuous, even in standard models; all $D_{1}$, $(A_{1},d)$ and $(A_{1},a,D_{2})$ are consistent with extensive form rationalizability along the sequence, but only $D_{1}$ is so in the limit.


\subsubsection{Conclusions}
\label{subsubsection:maincounterexampleconclusions}

Section \ref{subsubsection:maincounterexample} illustrates the critical dependence of rationalization processes to apparently negligible modeling details of the state space. As a result, the intuition that forward induction (i.e., extensive form rationalizability) always provides refinements of backward induction (i.e., backward rationalizability) is revealed as both extremely fragile and dependent on agents perfectly agreeing on the possible payoff-relevant nonstrategic contingencies of the environment. That is, arbitrarily small discrepancies among players can lead to the collapse of usual refinement criteria based on the use of observed behavior in order to form conjectures about future behavior.

The reasons for the conceptual discontinuity are evident. There is an option that is rational for player $1$ that, because of an arbitrarily small discrepancy on the perception of the payoff structure, cannot be rationalized by player $2$. Hence, player $1$ can try to \textit{play dumb} by choosing to advance and obtain, depending on the reaction of $2$, a better payoff than the one backward rationalizability allows for. 
In contrast to the standard environment with a commonly known payoff structure---where it is \textit{always impossible} that a choice rendered as irrational by her opponents is potentially beneficial for a player---arbitrarily small disagreements about the payoff structure ca permit an agent to profit from seemingly irrational choices. 
In technical terms, the observation is the result of the interplay between two factors:

\begin{enumerate}

\item The correspondence that maps subjective payoff structures to histories that can be reached by extensive form rationalizable strategies of the opponents is not lower hemicontinuous. This is an obvious fact, if only because it is well-known that the best response correspondence is not lower hemicontinuous. It is also easily visible in the example. In the limit, initial action $A_{1}$ is rational for player 1 and thus, player $2$'s history $A_{1}$ is reachable by strategies in $\textup{F}_{1,1}(M_{1})$. On the contrary, for every $n\in\mathds{N}$, history $A_{1}$ is not reachable by any strategy in $\textup{F}_{1,1}(M_{1}^{n})$. To put it explicitly: A history that can be reached in the limit may not be reached along the perturbation.

\item Extensive form rationalizability puts certain restrictions on beliefs only at some histories: requiring that player $2$ believes that player $1$ is rational at every history that could have been reached if this was the case, but no constraint is placed in the remaining histories. This, together with the lack of lower hemicontinuity mentioned above leads to the collapse of upper hemicontinuity. To see it, let $(\mu_{2}^{n})_{n\in\mathds{N}}$ be a sequence of player $2$'s conjectures such that, for each $n\in\mathds{N}$, $\mu_{2}^{n}$ assigns, at every history $h\in H_{2}\cup\{h^{0}\}$ reachable by some strategy in $\textup{F}_{1,1}(M_{1}^{n})$, probability $1$ to player $1$ choosing a strategy in $\textup{F}_{1,1}(M_{1})$. Clearly, this constraint only applies in $h=h^{0}$. Hence, even if the sequence is convergent, there is no way to guarantee that its limit, $\mu_{2}$, will assign, at history $A_{1}$, probability 1 to player $1$ choosing a strategy in $\textup{F}_{1,1}(M_{1})$: The elements in the sequence do not satisfy any particular requirement at history $A_{1}$. In consequence, it is impossible to ensure that a sequence of conjectures that justifies the inclusion of strategy $s_{2}$ in each $\textup{F}_{2,2}(M_{2}^{n})$ converges (or has a subsequence that does so) to a conjecture that justifies the inclusion of $s_{2}$ in $\textup{F}_{2,2}(M_{2})$.

\end{enumerate}

Besides the above, the example delivers other two additional important takeaways. First, that extensive form rationalizability is not robust to misspecifications of models (in particular, of subjective payoff structures): Arbitrarily small discrepancies on the set of payoff states can lead to predictions that where ignored in the benchmark model ($(A_{1},d)$ and $(A_{1},a,D_{1})$ in the example). Second, that multiplicity of the extensive form rationalizable outcomes can be robust: It is easy to see that, in the example, for every $n\in\mathds{N}$, outcomes $D_{1}$, $(A_{1},d)$ and $(A_{1},a,D_{1})$ are all consistent with extensive form rationalizability for every model $M'$ close enough to $(M_{1}^{n},M_{2}^{n})$. These two observations, further discussed in Sections \ref{subsection:robustness} and \ref{subsection:uniqueselections} respectively, contradict some established facts of static games. On the one hand, since in static games interim beliefs are not updated, all the uncertainty regarding utility functions can be encapsulated by the type, and we know from \cite{dekel-07} that interim rationalizability (the static version of extensive form rationalizability) is robust to misspecifications of types. On the other hand, the seminal analysis of \cite{weinstein-07} reveals that multiplicity of interim rationalizability is nowhere robust, there exists no type with a neighborhood within which interim rationalizability is always multiple.

\subsection{Robustness to misspecifications}
\label{subsection:robustness}


\subsubsection{Extensive form rationalizability}
\label{subsubsection:ERrobustness}

The example above shows that extensive form rationalizability is not robust: Arbitrarily small misspecifications of subjects' perceptions of the payoff structure can lead to outcomes ignored by the benchmark model.\footnote{Specifically, while the original standard model only admits $D_{1}$, predictions $(A_{1},a,D_{2})$, $(A_{1},d)$ and $D_{1}$ all are consistent with the perturbed models.} Notice however, that the models employed in the example, both the benchmark one and the perturbations, are such the corresponding payoff structures only admit one type. Thus, the lack of robustness of extensive form rationalizability follows from its lack of upper hemicontinuity on subjective payoffs structures. In this section we show that extensive form rationalizability does pass some partial robustness test: For a fixed subjective payoff structure, it is upper hemicontinuous on initial types.

\begin{restatable}[Robustness of extensive form rationalizability to misspecifications of initial beliefs]{proposition}{UHCtypes}
\label{proposition:UHCtypes}
Let $\Gamma$ be an extensive form. Then for any player $i$ and any subjective payoff structure $d_{i}$, the correspondence $\textup{F}_{i}(d_{i},\cdot\,):T_{i}(d_{i})\rightrightarrows S_{i}$ is upper hemicontinuous.
\end{restatable}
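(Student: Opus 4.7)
\medskip

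\noindent\textbf{Proof plan.} The plan is to leverage the finiteness (hence discreteness) of $S_{i}$ to reduce upper hemicontinuity of $\textup{F}_{i}(d_{i},\,\cdot\,)$ to the closedness of its graph in $T_{i}(d_{i})\times S_{i}$. Since $\textup{F}_{i}(d_{i},t_{i})=\bigcap_{k\geq 0}\textup{F}_{i,k}(d_{i},t_{i})$ and intersections of closed sets are closed, it suffices to prove by induction on $k$ that $\textup{Graph}(\textup{F}_{j,k}(d_{j},\,\cdot\,))$ is closed in $T_{j}(d_{j})\times S_{j}$ for every player $j$. The base case $k=0$ is immediate from $\textup{F}_{j,0}(d_{j},t_{j})=S_{j}$.

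For the inductive step, fix a sequence $t_{i}^{n}\to t_{i}$ in $T_{i}(d_{i})$ and $s_{i}\in\textup{F}_{i,k+1}(d_{i},t_{i}^{n})$ (discreteness of $S_{i}$ lets us hold $s_{i}$ fixed eventually along any convergent sequence). For each $n$, select a witness $\mu_{i}^{n}\in\textup{C}_{i,k}^{F}(d_{i},t_{i}^{n})$ with $s_{i}\in r_{i}(\theta_{i}(t_{i}^{n}),\mu_{i}^{n})$. Since $S_{-i}\times (d_{i,1})_{0}\times T_{-i}(d_{-i|i})$ is compact and the set of conditional probability systems on it is a closed subset of the corresponding product of weak$\ast$-compact measure spaces, a subsequence (denoted the same) satisfies $\mu_{i}^{n}\to\mu_{i}$ in the product topology.

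Three verifications remain. First, $\mu_{i}\in\textup{C}_{i}(d_{i},t_{i})$: the CPS property is closed, and the initial marginal constraint passes to the limit because
\[
\textup{marg}_{(d_{i,1})_{0}\times T_{-i}(d_{-i|i})}\mu_{i}^{n}(h^{0})=\tau_{i}(\pi_{i}(t_{i}^{n}))\longrightarrow\tau_{i}(\pi_{i}(t_{i})),
\]
by continuity of marginalization and of the Brandenburger--Dekel homeomorphism $\tau_{i}$. Second, $\mu_{i}\in\textup{C}_{i,k}^{F}(d_{i},t_{i})$: the collection of histories $h\in H_{i}\cup\{h^{0}\}$ at which the belief restriction applies depends only on $d_{i}$ and on the graphs of $\textup{F}_{-i,j}(d_{-i|i},\,\cdot\,)$ for $j\leq k-1$, so it is fixed along the sequence; by the inductive hypothesis each such graph is closed, whence $(d_{i,1})_{0}\times\textup{Graph}(\textup{F}_{-i,j}(d_{-i|i},\,\cdot\,))$ is a closed subset of $(d_{i,1})_{0}\times T_{-i}(d_{-i|i})\times S_{-i}$, and the Portmanteau theorem transfers the full-measure condition from $\mu_{i}^{n}(h)$ to $\mu_{i}(h)$ at each relevant $h$ and each $j\leq k-1$. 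Third, $s_{i}\in r_{i}(\theta_{i}(t_{i}),\mu_{i})$ follows from the upper hemicontinuity of $r_{i}$ noted right after its definition, together with $\theta_{i}(t_{i}^{n})\to\theta_{i}(t_{i})$.

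The main obstacle is the second verification, and it is precisely the fact that the reachability-based restriction in $\textup{C}_{i,k}^{F}$ is pinned down by $d_{i}$, and only by $d_{i}$, that makes this step go through. This is exactly the disanalogy with the failure of upper hemicontinuity in the subjective payoff structure exhibited by the counterexample of Section \ref{subsubsection:maincounterexample}: varying $d_{i}$ also perturbs the set of histories reachable by opponents' rationalizable strategies, blocking any Portmanteau-type control on the limit conjecture; here, by holding $d_{i}$ fixed, that obstruction simply does not arise.
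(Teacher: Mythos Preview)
The proposal is correct and follows essentially the same approach as the paper: induction on $k$ to show each $\textup{F}_{i,k}(d_{i},\,\cdot\,)$ has closed graph, selecting witnessing conjectures along the sequence, extracting a convergent subsequence by compactness, and using closedness of $\textup{Graph}(\textup{F}_{-i,\ell}(d_{-i|i},\,\cdot\,))$ from the inductive hypothesis together with Portmanteau to verify the limit conjecture lies in $\textup{C}_{i,k}^{F}(d_{i},t_{i})$. Your explicit observation that the set of histories at which the belief restriction applies is determined by $d_{i}$ alone---and is therefore constant along the sequence---is exactly the mechanism the paper relies on, and your remark contrasting this with the failure when $d_{i}$ varies is apt.
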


The conceptual relevance of the result is twofold. First, standard economic modeling assumes that the state space of the game is commonly known and thus, imperturbable. Proposition \ref{proposition:UHCtypes} guarantees that, in these settings, extensive form rationalizable predictions are robust (also at individual level, i.e., in terms of strategies, not only outcomes). Second, as we argue next, the preposition warns against the perils and unfitting consequences of richness assumption in dynamic games. We know from \cite{penta-12} and \citeauthor{chen-12}'s \citeyearpar{chen-12} work that under richness, no refinement of interim rationalizability is robust. Now, Proposition \ref{proposition:UHCtypes} establishes that extensive form rationalizability is upper hemicontinuous in the settings studied by \cite{penta-12} and \cite{chen-12}, and we know that extensive form rationalizability \textit{is} a refinement of interim rationalizability. Hence, the conclusion, formally stated in Corollary \ref{corollary:ISR}, is immediate: Under richness, interim rationalizability (both ex ante and sequential) and extensive form rationalizability \textit{coincide}.\footnote{The argument is so straightforward that we omit any proof} Thus, richness assumptions seem unsuitable for dynamic games; by killing all the bite of rationalization, they render the use of dynamic modeling a mostly redundant tool. Let us end the discussion by formally presenting the equivalence:\footnote{\cite{battigalli-07} provide a similar insight in a weaker version of the following result (Proposition 5 in p.177): For rich $\Theta$ and state of nature $\theta$ extensive form rationalizability and interim rationalizability coincide for those types consistent with initial common belief of $\theta$.}

\begin{corollary}[Triviality of rationalization under richness I]
\label{corollary:ISR}
Let $\langle\Gamma,\Upsilon\rangle$ be a dynamic game. Then, if $\Upsilon$ is rich, extensive form rationalizability and interim rationalizability coincide.
\end{corollary}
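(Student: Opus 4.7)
The plan is to combine three ingredients that are, by the point of the corollary, essentially already on the table: (a) the well-known inclusion that extensive form rationalizability refines interim (correlated) rationalizability; (b) Proposition \ref{proposition:UHCtypes}, giving upper hemicontinuity of $\textup{F}_{i}(d_{i},\,\cdot\,)$ on types for fixed $d_{i}$; and (c) the structure-theorem results of \cite{penta-12} and \cite{chen-12}, which state that under richness any upper hemicontinuous refinement of interim rationalizability must in fact coincide with it, type by type.

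First I would fix a rich standard payoff structure $\Upsilon$ with associated subjective payoff structure $d_{i}^{\Upsilon}$ for each player $i$, and for each type $t_{i}\in T_{i}(d_{i}^{\Upsilon})$ let $\textup{ICR}_{i}(t_{i})\subseteq S_{i}$ denote the set of interim rationalizable strategies. Inclusion $\textup{F}_{i}(d_{i}^{\Upsilon},t_{i})\subseteq\textup{ICR}_{i}(t_{i})$ is standard: every step of the epistemic iteration defining $\textup{F}_{i}$ imposes at least the constraints that define interim correlated rationalizability on the normal form (the additional forward-induction constraints only prune further on reachable histories, and at $h=h^{0}$ they coincide with belief in the opponents' $k$-th order rationalizable behavior). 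So one direction is immediate.

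For the reverse inclusion, I would invoke the unique-selection / structure theorem of \cite{weinstein-07} as extended to incomplete-information dynamic games in \cite{penta-12} and \cite{chen-12}: under richness, for any type $t_{i}$ and any $s_{i}\in\textup{ICR}_{i}(t_{i})$ there is a sequence of types $t_{i}^{n}\to t_{i}$ (within $T_{i}(d_{i}^{\Upsilon})$) such that $\textup{ICR}_{i}(t_{i}^{n})=\{s_{i}\}$ for every $n$. Because $\textup{F}_{i}(d_{i}^{\Upsilon},t_{i}^{n})$ is nonempty and contained in $\textup{ICR}_{i}(t_{i}^{n})=\{s_{i}\}$, each such set equals $\{s_{i}\}$. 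Upper hemicontinuity from Proposition \ref{proposition:UHCtypes} then delivers $s_{i}\in\textup{F}_{i}(d_{i}^{\Upsilon},t_{i})$, giving the reverse inclusion $\textup{ICR}_{i}(t_{i})\subseteq\textup{F}_{i}(d_{i}^{\Upsilon},t_{i})$.

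I do not anticipate a serious obstacle here, which matches the author's remark that the argument is straightforward. The only point requiring mild care is to confirm that the approximating sequence of types supplied by the Penta/Chen structure theorem can be chosen within the type space $T_{i}(d_{i}^{\Upsilon})$ over the same rich payoff structure (rather than in a larger universal space), so that Proposition \ref{proposition:UHCtypes} is genuinely applicable; this is the content of their construction, since the contagion argument already lives in the rich type space.
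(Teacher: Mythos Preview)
Your proposal is correct and follows exactly the argument the paper sketches in the text preceding the corollary (the paper in fact omits a formal proof, noting in a footnote that ``the argument is so straightforward that we omit any proof''). Your three ingredients---the refinement inclusion $\textup{F}_{i}\subseteq\textup{ICR}_{i}$, the unique-selection structure theorem of Penta/Chen under richness, and upper hemicontinuity from Proposition~\ref{proposition:UHCtypes}---are precisely the ones the authors invoke, and your way of combining them (apply unique selection to get $\textup{F}_{i}(d_{i}^{\Upsilon},t_{i}^{n})=\{s_{i}\}$ along the approximating sequence, then pass to the limit via upper hemicontinuity) is the standard and intended route.
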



\subsubsection{Backward rationalizability}
\label{subsubsection:BRrobustness}

The example in Section \ref{section:robsutnessmodel} shows that the failure of upper hemicontinuity of extensive form rationalizability on subjective payoff structures is due to the interplay of: $(1)$ the set of histories reachable by extensive-from rationalizability not being lower hemicontinuous on subjective payoff structures, and $(2)$ the rationalization requirement intrinsic to extensive form rationalizability depending on the subjective payoff structure.\footnote{See bullet points $1.$ and $2.$ in Section \ref{subsubsection:maincounterexampleconclusions}.} Fact $(1)$ implies that beliefs satisfying $(2)$ along a sequence may not imply that the limit belief satisfies $(2)$, because in the limit, the number of histories at which rationalization should me made can have exploded. Now, the reasoning process that explains backward rationalizability imposes the same constraint at \textit{every} history,\footnote{Namely, that continuation play will follow according to what backward rationalizability of the previous order would have dictated.} and thus, if it holds for beliefs in a sequence, it is always satisfied as well by the limit belief. As the following proposition shows, this intuition backward rationalizability is immune to the mechanism that causes extensive form not to be robust, proves to be correct:

\begin{restatable}[Robustness of backward rationalizability to misspecifications of models]{proposition}{UHCall}
\label{proposition:UHCall}
Let $\Gamma$ be an extensive form. Then for any player $i$, the correspondence $\textup{B}_{i}:\mathscr{M}_{i}^{\infty}\rightrightarrows S_{i}$ is upper hemicontinuous.
\end{restatable}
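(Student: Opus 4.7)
The plan is to reduce upper hemicontinuity to closed-graph, which suffices because $S_i$ is finite and $\mathscr{M}_i^{\infty}$ is first-countable: it is enough to show that if $(d_i^n, t_i^n) \to (d_i, t_i)$ in $\mathscr{M}_i^{\infty}$ and $s_i \in \textup{B}_i(d_i^n, t_i^n)$ for all $n$, then $s_i \in \textup{B}_i(d_i, t_i)$. I would prove this by induction on $k \geq 0$, establishing the stronger statement that $\textup{B}_{i,k} : \mathscr{M}_i^{\infty} \rightrightarrows S_i$ has closed graph for every $k$, and then intersecting over $k$. The base case $k=0$ is trivial since $\textup{B}_{i,0} \equiv S_i$.

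For the inductive step, given $(d_i^n, t_i^n) \to (d_i, t_i)$ and $s_i \in \textup{B}_{i,k+1}(d_i^n, t_i^n)$, I pick a witnessing conjecture $\mu_i^n \in \textup{C}_{i,k}^{B}(d_i^n, t_i^n)$ with $s_i \in r_i(\theta_i(t_i^n), \mu_i^n)$. Viewing $(d_{i,1}^n)_0$ as a Hausdorff-convergent sequence of compact subsets of $\prod_{j\in I}\mathds{R}^Z$ and $T_{-i}(d_{-i|i}^n)$ as sitting inside a universal type space, all conjectures may be regarded as Borel measures on a common compact Polish ambient space; by Prokhorov, after passing to a subsequence $\mu_i^n$ converges weakly$^{\ast}$ to some $\mu_i$. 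Since payoffs are continuous and the best-response map $r_i$ is upper hemicontinuous in $(\theta_i,\mu_i)$, the conclusion $s_i \in r_i(\theta_i(t_i), \mu_i)$ will follow once I verify that $\mu_i \in \textup{C}_{i,k}^{B}(d_i, t_i)$.

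The heart of the argument is verifying this membership. Consistency of $\mu_i(h^0)$ with $\tau_i(\pi_i(t_i))$ follows from weak$^{\ast}$-continuity of marginalization together with continuity of $\tau_i$ along the convergent type sequence, and Bayesian updating at reachable histories passes to the limit by the standard argument. The central step is the support condition at each $h \in H_i \cup \{h^0\}$: by the inductive hypothesis the graph of $\textup{B}_{-i,k}$ is closed, and since the equivalence relation $[\,\cdot\,]_h$ is determined by finitely many coordinates of $s_{-i}$, the set
\[
F_h(d_{-i|i}) := \{(\theta_0, s_{-i}, t_{-i}) \mid [s_{-i}]_h \cap \textup{B}_{-i,k}(d_{-i|i}, t_{-i}) \neq \emptyset\}
\]
is closed in the ambient space and depends upper hemicontinuously on $d_{-i|i}$; moreover, by the definition of $\textup{C}_{i,k}^{B}$, the support of $\mu_i^n(h)$ is contained in $F_h(d_{-i|i}^n)$. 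Crucially, this restriction applies uniformly at \emph{every} history $h$ and is insensitive to whether $h$ is reached by rationalizable strategies, so we avoid the lower-hemicontinuity failure that sinks extensive form rationalizability in Section \ref{subsubsection:maincounterexample}. A standard Portmanteau argument---if $x \notin F_h(d_{-i|i})$ then eventually $x$ admits a neighborhood disjoint from $F_h(d_{-i|i}^n)$, hence of zero $\mu_i^n$-mass---yields $\textup{supp}\,\mu_i(h) \subseteq F_h(d_{-i|i})$, completing the induction.

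The main obstacle I anticipate is the bookkeeping needed to embed the varying spaces $(d_{i,1}^n)_0 \times T_{-i}(d_{-i|i}^n)$ in a single ambient space and to verify that the convergence introduced for $\mathscr{M}_i^{\infty}$ makes these embeddings behave as expected. In particular, deriving upper hemicontinuity of $F_h(\,\cdot\,)$ from the inductive hypothesis requires propagating the graph closedness of $\textup{B}_{-i,k}$ through the quotient by $[\,\cdot\,]_h$; once this piece is in place, the remainder of the induction is a routine application of standard weak$^{\ast}$ topology machinery.
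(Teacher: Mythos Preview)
Your proposal is correct and follows essentially the same route as the paper: induction on $k$, extraction of a convergent subsequence of witnessing conjectures, and verification that the limit conjecture lies in $\textup{C}_{i,k}^{B}(d_i,t_i)$ via closedness of the set you call $F_h(d_{-i|i})$ (the paper's $X_{i,\ell}(h)$), exploiting that the constraint is imposed at \emph{every} history. The paper handles your anticipated ``bookkeeping obstacle'' through an auxiliary Lemma~\ref{lemma:subjectivemodels} and its corollaries, which establish that Hausdorff convergence of the underlying state spaces propagates to convergence of the associated CPS and type spaces; where you invoke Portmanteau directly, the paper instead passes through the Kuratowski upper limit $\bigcap_m \textup{cl}\bigl(\bigcup_{r\geq n_m} (d_{i,1}^r)_0 \times X_{i,\ell}^r(h)\bigr)$, but the two arguments are equivalent.
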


It is immediate then that the result below obtains as a corollary. Clearly, this reinforces the critique on the assumption of richness in dynamic modeling.

\begin{corollary}[Triviality of rationalization under richness II]
\label{corollary:BR}
Let $\langle\Gamma,\Upsilon\rangle$ be a dynamic game. Then, if $\Upsilon$ is rich, backward rationalizability, extensive form rationalizability and interim rationalizability coincide.
\end{corollary}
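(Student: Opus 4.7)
The plan is to mimic the argument sketched (but not written out) for Corollary \ref{corollary:ISR}, replacing the role of Proposition \ref{proposition:UHCtypes} by the stronger Proposition \ref{proposition:UHCall}. Concretely, under richness the results of \cite{penta-12} and \cite{chen-12} tell us that no solution concept that (i) refines interim rationalizability and (ii) is upper hemicontinuous on the relevant perturbation space can be a strict refinement. Once we verify that backward rationalizability satisfies (i) and (ii), the Corollary \ref{corollary:BR} drops out by combining this equality with Corollary \ref{corollary:ISR}.

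First I would verify that $\textup{B}_{i}(d_{i},t_{i}) \subseteq \textup{ICR}_{i}(d_{i},t_{i})$ for every subjective model. This is essentially standard: the inductive definition of backward rationalizability places, at each round and at every history (in particular at $h^{0}$), a requirement of best response to a conjecture concentrated on opponents' strategies surviving the previous round, and hence in particular surviving the ex ante version at $h^{0}$; iterating, one recovers the interim rationalizability fixed-point condition on the normal form. This step is routine and I would just sketch it.

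Second, I would invoke Proposition \ref{proposition:UHCall}: on the space of subjective models $\mathscr{M}_{i}^{\infty}$, $\textup{B}_{i}$ is upper hemicontinuous, and in particular its restriction to the subspace of standard models (or to types over the fixed rich $\Upsilon$) is upper hemicontinuous. Combined with Step 1, this means that $\textup{B}$ is a robust refinement of interim rationalizability on any rich standard $\Upsilon$. By the Penta--Chen non-robustness theorem for refinements of interim rationalizability under richness, any such robust refinement must coincide with interim rationalizability itself. Hence $\textup{B}_{i}(d_{i},t_{i}) = \textup{ICR}_{i}(d_{i},t_{i})$ whenever $\Upsilon$ is rich.

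Third, I would appeal to Corollary \ref{corollary:ISR}, which gives $\textup{F}_{i}(d_{i},t_{i}) = \textup{ICR}_{i}(d_{i},t_{i})$ under the same hypothesis, and conclude $\textup{B}_{i} = \textup{F}_{i} = \textup{ICR}_{i}$. The main obstacle (and the place I would be most careful) is to make sure that the non-robustness theorem I am quoting from \cite{penta-12,chen-12} applies at exactly the right notion of perturbation used in Proposition \ref{proposition:UHCall}: their statement is phrased for perturbations of belief-hierarchies over a rich commonly known $\Theta$, whereas Proposition \ref{proposition:UHCall} runs on the larger space $\mathscr{M}_{i}^{\infty}$. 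Because the former is a subspace of the latter, upper hemicontinuity on $\mathscr{M}_{i}^{\infty}$ implies upper hemicontinuity on types over $\Theta$, so the quoted non-robustness result bites and the chain of equalities closes.
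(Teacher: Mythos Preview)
Your proposal is correct and follows essentially the same approach as the paper. The paper treats the corollary as immediate from Proposition~\ref{proposition:UHCall} together with the fact that backward rationalizability refines interim rationalizability, invoking the Penta--Chen non-robustness result under richness and then combining with Corollary~\ref{corollary:ISR}; your three steps reconstruct precisely this chain, and your observation that upper hemicontinuity on $\mathscr{M}_{i}^{\infty}$ restricts to upper hemicontinuity on types over the fixed rich $\Upsilon$ is the right way to align the perturbation spaces.
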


Once backward rationalizable predictions are confirmed to be robust, a question arises naturally: Does backward rationalizability characterize the \textit{strongest} predictions that are robust to misspecifications of models? That is, is it true that no (nontrivial) refinement of backward rationalizability is robust? The standard argument for results of this kind in static games is that by \cite{weinstein-07}. In those games every interim rationalizable prediction is uniquely selected by some perturbation of the type. In consequence, no refinement of interim rationalizability is robust.\footnote{Suppose it is and take the prediction ignored by the hypothetical refinement. In the perturbation, this prediction should be the unique prediction made by the hypothetical refinement. If the latter was upper hemicontinuous, it would also include it in the limit.} The example in Section \ref{subsubsection:counterexample} below shows that not every prediction of backward rationalizability admits a perturbation that uniquely selects it. In consequence, our analysis leaves two open questions: $(i)$ does backward rationalizability characterize the strongest predictions that are robust to misspecifications of models (which, if true, would require a proof different from the usual approach)? And $(ii)$ in case the answer to the previous question is negative, which is the solution concept that does characterize the strongest predictions robust to misspecifications of models?

\subsection{Unique selections}
\label{subsection:uniqueselections}


\subsubsection{Theorem for extensive form rationalizability}
\label{subsubsection:structurerationalizability}

The following theorem shows that, for any subjective model and any outcome consistent with extensive form rationalizability, it is possible to create a perturbation of the model that allows for the chosen outcome to be uniquely selected along the perturbed sequence---uniquely selected outcomewise, not in terms of strategies. Notice that the theorem does not require that the state space associated to the standard model is rich: It is possible to perturb the state space using profiles of subjective payoff structures whose associated subjective models satisfy higher-order richness, at increasingly higher order the more the state space is approximated. This allows for the usual infection argument to be easily constructed. In consequence, the theorem establishes that dynamic games not satisfying richness also admit unique selections:

\begin{restatable}[Structure theorem for extensive form rationalizability]{theorem}{uniqueselection}
\label{theorem:uniqueselection}
Let $\langle\Gamma,\Upsilon\rangle$ be a dynamic game. Then, for any profile of finite types $t\in T(\Theta)$ and any strategy profile $s\in\textup{F}(\Upsilon,t)$ there exists a sequence of profiles of models $(d^{n},t^{n})_{n\in\mathds{N}}$ such that:
\begin{itemize}

\item[$(i)$] $(d_{i},t_{i})_{n\in\mathds{N}}$ converges to $(d_{i},t_{i})$.

\item[$(ii)$] For every $n\in\mathds{N}$ and every strategy profile $s^{n}\in\textup{F}(d^{n},t^{n})$, $z(s^{n}|h^{0})=z(s|h^{0})$.

\end{itemize}

\end{restatable}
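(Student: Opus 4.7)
The proof will adapt the contagion argument of \cite{weinstein-07} to the subjective-models framework. The key distinction from the classical structure theorem is that the benchmark payoff structure $\Upsilon$ need not be rich; instead, I exploit Proposition \ref{proposition:genericrichness} to approximate it by profiles of subjective payoff structures whose richness ``layer'' recedes to infinity along the sequence. The plan is to let this deep layer of higher-order richness supply the dominance seeds that drive an inductive infection argument upward through the hierarchy of beliefs, converging in the limit back to the standard model $(\Upsilon,t)$.

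Fix $s\in\textup{F}(\Upsilon,t)$ and let $\bar{z}:=z(s|h^{0})$. Since $s$ is extensive-form rationalizable from $(\Upsilon,t)$, for each player $i$ there exists a conjecture $\mu_{i}^{\ast}\in\bigcap_{k}\textup{C}_{i,k}^{F}(\Upsilon,t)$ justifying each action taken by $s$ at each history along $\bar{z}$, together with nested conjectures of all orders justifying those. I would first record this data as a ``justification tree'' that attaches, to each history on $\bar{z}$, the action prescribed by $s$ and the conjecture justifying it, and to each belief in the support of each conjecture, the analogous data at the next level of reasoning.

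For each $n$, construct a perturbation $(d^{n},t^{n})$ as follows. Choose integers $K_{n}\to\infty$ and weights $\epsilon_{n}\to 0$. At orders $k\le K_{n}$, each player's subjective payoff structure keeps a copy of $\Upsilon$ with a vanishing contamination $\epsilon_{n}$ placed on additional payoff types used by the infection. At order $K_{n}$, enlarge the relevant subjective payoff structure into one satisfying richness, containing in particular dominance-states for every action the justification tree demands (Proposition \ref{proposition:genericrichness} ensures such rich extensions exist arbitrarily close to any standard payoff structure in the canonical metric of Definition \ref{definition:convergence}). Build the types $t_{i}^{n}$ as perturbations of $t_{i}$ that place weight $\epsilon_{n}$ on the types whose beliefs trigger the infection at layer $K_{n}$. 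The infection then proceeds upward from depth $K_{n}$: at the bottom layer, dominance pins down actions; at each shallower layer, the EFR requirement that conjectures concentrate on lower-order rationalizable strategies at reached histories, combined with the action already forced at the layer below, uniquely determines a best response that agrees with the prescription of $s$ at every history on $\bar{z}$. Consequently every $s^{n}\in\textup{F}(d^{n},t^{n})$ must agree with $s$ along $\bar{z}$, so $z(s^{n}|h^{0})=\bar{z}$, while off-path actions remain free.

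The main obstacle lies in reconciling three simultaneous constraints. First, the forward-induction restrictions of EFR apply only at histories reached by the previously rationalizable strategies, so the infection must be designed so that the ``reachable'' set at each step contains exactly the action demanded by $s$, and no unintended action on $\bar{z}$ survives. Second, the consistency condition $\mathcal{C}_{j}(d_{i,1})=\mathcal{C}_{j}(d_{i,2}(j))$ couples each player's perception of her opponents' sets with her opponents' first-order structures, so the shift of the richness layer between depth $k$ and depth $k+1$ must be done symmetrically across agents and respect this clause. Third, convergence in the sense of Definition \ref{definition:convergence} forces $\epsilon_{n}\to 0$ and $K_{n}\to\infty$ jointly, and one must verify that the canonical representations of the perturbed structures converge in the Hausdorff metric. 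Calibrating these three ingredients — the depth of the richness layer, the weight of the contamination, and the symmetry required by consistency — is where I expect the real technical work of the proof to reside.
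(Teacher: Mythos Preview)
Your high-level architecture---push a richness layer out to depth $K_n\to\infty$ and run an infection upward---matches the paper's, but the proposal has a real gap at the step where you claim the infection ``uniquely determines a best response that agrees with the prescription of $s$ at every history on $\bar z$.'' In the paper this is \emph{not} automatic from dominance at the bottom plus EFR belief restrictions; two preparatory perturbations are needed that your plan omits.

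First, the paper perturbs $\Upsilon$ itself (not just the hierarchy) to a nearby $\Upsilon^{m}$ by adjoining, for each $(\theta_i,s_i)$, a payoff type that breaks ties strictly in favor of $s_i$ (Lemma~\ref{lemma:firstperturbation2}). This converts $s_i\in\textup{F}_{i,k}$ into $s_i\in\textup{F}^0_{i,k}$ (\emph{strict} rationalizability: the best-reply set equals $[s_i]$). Without this, your ``uniquely determines'' claim fails: along the path $\bar z$ the original conjecture may leave $i$ indifferent, so a competing action survives the infection and the outcome is not pinned down. Second, a further perturbation of the type (Lemma~\ref{lemma:secondperturbation}) forces the justifying conjecture to put \emph{positive initial probability on every history reachable by $\textup{F}^0_{-i,k-1}$}. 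This full-support property is what lets the paper show, in Lemma~\ref{lemma:thirdperturbation2} (the six-step argument S1--S6), that \emph{any} conjecture consistent with the constructed type induces the same conditional expected payoffs as the original one at every on-path history, hence the same (now strict) best response. Your $\epsilon_n$-contamination of $t_i$ is the static Weinstein--Yildiz device; in a dynamic game it controls beliefs at $h^0$ but after conditioning on an observed action the relative weights shift, and nothing in your sketch prevents the player from best-responding off the intended path at a later history. The paper avoids this by \emph{replacing} each $(s_{-i},t_{-i})$ in the support of the justifying conjecture with $(s_{-i},t_{-i}^{k}(s_{-i},t_{-i}))$---a type with identical $k$th-order beliefs whose $(k{+}1)$-round EFR set is contained in $[s_{-i}|\Upsilon]_{k-1}$---rather than mixing in a small mass of infected types. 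That replacement, not contamination, is what makes the marginals on $S_{-i}\times\Theta$ agree with the original conjecture at \emph{every} reachable history (S1--S2) and delivers the outcome equivalence (S3--S5). So the ``real technical work'' is not the calibration you describe; it is building these two layers (strictness, full support) and then proving the S1--S6 equalities.
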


The interest of the theorem lies on illustrating that extensive form rationalizability admits unique selections in settings in which rationalization adds bite, not only on those in which extensive-from rationalizability is necessarily equivalent to interim rationalizability (because richness of the state space is assumed). On top of that, while the perturbations required do entail higher-order richness assumptions, we know from Proposition \ref{proposition:genericrichness} that this requirement is generically satisfied in the set of all subjective payoff structures. Thus, not only the consequence, but also the very demanding nature of standard richness assumptions is alleviated.

However, the conceptual implications beyond the previous observations are not clear. On the one hand, (perfect Bayesian) equilibrium outcomes do not, in general, refine those of extensive form rationalizability, so Theorem \ref{theorem:uniqueselection} does not allow for concluding that equilibrium predictions are not robust---as \citeauthor{weinstein-07}'s \citeyearpar{weinstein-07} structure theorem allows for in the static case. On the other hand, $(i)$ only standard models are perturbed in the theorem (not arbitrary profiles of subjective models), and $(ii)$ extensive form rationalizability is not robust to perturbations of subjective models. In consequence it cannot be concluded either that uniqueness of extensive form rationalizable predictions is a generic phenomenon in the space of profiles of subjective models. Indeed, the contrary is illustrated in the example in Section \ref{subsection:norefinementresult}: It is possible that a standard model is perturbed so that, along the sequence, multiplicity of extensive form rationalizable outcomes is robust. Our last example below shows that unique selection arguments do not exist, at least as a rule, for backward rationalizability.


\subsubsection{Counterexample for backward rationalizability}
\label{subsubsection:counterexample}

The following example illustrates that, while backward rationalizability is robust to misspecifications of models---see Proposition \ref{proposition:UHCall} above, the questions of whether it is the \textit{strongest} robust solution concept remains open: It is not a conclusion that follows from a standard unique selection argument. As exemplified below, there exist situations in which it is not possible to perturb standard models so that backward rationalizable outcomes are uniquely selected. In consequence, even if true, there is no obvious way to claim that no refinement of backward rationalizability delivers robust predictions.

Consider the following dynamic game with standard payoff structure consisting of states $\theta_{1}$ and $\theta_{2}$:

\begin{center}
\begin{tikzpicture}

\coordinate (D11) at (0,0) ;
\node[above] at (0,0.1) {\small $1$};
\draw[fill=black] (D11) circle (0.18em);  

\coordinate (D21) at (2,0) ;
\node[above] at (2,0.1) {\small $2$};
\draw[fill=black] (D21) circle (0.18em);  

\coordinate (D12) at (4,0) ;
\node[above] at (4,0.1) {\small $1$};
\draw[fill=black] (D12) circle (0.18em); 

\coordinate (O1) at (0,-1.5) ;
\node[below] at (0,-1.5) {\small $\begin{tabular}{c}$4$\\$4$\end{tabular}$};

\coordinate (O2) at (2,-1.5) ;
\node[below] at (2,-1.5) {\small $\begin{tabular}{c}3\\3\end{tabular}$};

\coordinate (O3) at (4,-1.5) ;
\node[below] at (4,-1.5) {\small $\begin{tabular}{c}0\\0\end{tabular}$};

\coordinate (O4) at (6,0) ;
\node[right] at (6,0) {\small $\begin{tabular}{c}2\\2\end{tabular}$};

\draw[black] (D11) to node [above] {\small $A_{1}$} (D21);
\draw[black] (D11) to node [right] {\small $D_{1}$} (O1);

\draw[black] (D21) to node [above] {\small $a$} (D12);
\draw[black] (D21) to node [right] {\small $d$} (O2);

\draw[black] (D12) to node [above] {\small $A_{2}$} (O4);
\draw[black] (D12) to node [right] {\small $D_{2}$} (O3);

\node at (3,-3) {\small State $\theta_{1}$};

\end{tikzpicture}
\begin{tikzpicture}

\coordinate (D11) at (0,0) ;
\node[above] at (0,0.1) {\small $1$};
\draw[fill=black] (D11) circle (0.18em);  

\coordinate (D21) at (2,0) ;
\node[above] at (2,0.1) {\small $2$};
\draw[fill=black] (D21) circle (0.18em);  

\coordinate (D12) at (4,0) ;
\node[above] at (4,0.1) {\small $1$};
\draw[fill=black] (D12) circle (0.18em); 

\coordinate (O1) at (0,-1.5) ;
\node[below] at (0,-1.5) {\small $\begin{tabular}{c}$0$\\$0$\end{tabular}$};

\coordinate (O2) at (2,-1.5) ;
\node[below] at (2,-1.5) {\small $\begin{tabular}{c}3\\0\end{tabular}$};

\coordinate (O3) at (4,-1.5) ;
\node[below] at (4,-1.5) {\small $\begin{tabular}{c}1\\1\end{tabular}$};

\coordinate (O4) at (6,0) ;
\node[right] at (6,0) {\small $\begin{tabular}{c}2\\2\end{tabular}$};

\draw[black] (D11) to node [above] {\small $A_{1}$} (D21);
\draw[black] (D11) to node [right] {\small $D_{1}$} (O1);

\draw[black] (D21) to node [above] {\small $a$} (D12);
\draw[black] (D21) to node [right] {\small $d$} (O2);

\draw[black] (D12) to node [above] {\small $A_{2}$} (O4);
\draw[black] (D12) to node [right] {\small $D_{2}$} (O3);

\node at (3,-3) {\small State $\theta_{2}$};

\end{tikzpicture}
\end{center}
The description of the payoff structure is completed by assuming that player $1$ always knows the payoff state, that the second player never knows it, and that these two are common knowledge. The set of backward rationalizable strategies of player $2$ for arbitrary type $t_{2}$ is:
\begin{itemize}

\item[(A)] $\{a\}$, if $t_{2}$ assigns positive probability to $\theta_{2}$. At state $\theta_{2}$ strategy $(A_{1},A_{2})$ is strictly dominant for player 1 (who, remember, knows that the state is $\theta_{1}$). Thus, if player $2$ observes that player $1$ advances in her first round, she updates her beliefs by excluding the possibility of $\theta_{1}$ being the true state. In consequence, $a$ becomes her only rational choice.

\item[(B)] $\{a,d\}$, if $t_{2}$ assigns null probability to $\theta_{2}$. In this case it is unexpected for player $2$ to observe player 1 advance, $D_{1}$ would have been strictly dominant if the state had been $\theta_{1}$. Thus, player $2$ needs to perform an update of her beliefs from scratch, what allows for the following two possibilities: Was choosing $A_{1}$ a mistake and, effectively, the state is $\theta_{1}$? Or where player $2$'s beliefs wrong and the state is $\theta_{2}$, what justifies player $1$'s decision to advance? Both updating alternatives (and the corresponding mixed beliefs) are admissible according to backward rationalizable reasoning, and the corresponding only rational choices are $d$ and $a$, respectively (and one or either of them, if the updated is a mixed belief).

\end{itemize}
Pick now standard model $M:=((\Theta,t_{1}),(\Theta,t_{2}))$, where $(i)$ $\Theta:=\{\theta_{1},\theta_{2}\}$ and $(ii)$ $t_{1}$ and $t_{2}$ represent common belief in states $\theta_{2}$ and $\theta_{1}$, respectively. We claim now that no perturbation of $M$ can lead to $d$ being uniquely selected for player $2$ and thus, to outcome $(A_{1},d)$ (which is backward rationalizable for $M$) being uniquely selected. To see it notice first that if a standard subjective model is close enough to $\Theta$ then it is the union of two disjoint sets of states, $\Theta^{1}$ and $\Theta^{2}$, such that it is strictly dominant for player $1$ to choose $D_{1}$ at every $\theta_{1}'\in\Theta^{1}$ and to choose $(A_{1},A_{2})$ at every $\theta_{2}'\in\Theta^{2}$. Accordingly, for any perturbation of $(\Theta,t_{2})$, the same argument as in (A) and (B) leads to the following conclusions: (A') if in the perturbed model $\Theta^{2}$ gets positive probability then $a$ is player 2's unique backward rationalizable strategy, and (B') if in the perturbed model $\Theta^{2}$ gets zero probability then both $a$ and $d$ are backward rationalizable for player $2$. As a result, there is no perturbation of $M$ in which $a$ is \textit{not} backward rationalizable for player $2$.

\section{Related literature}
\label{section:conclusions}



Our approach bears some similarity with the notion of \textit{unawareness} (see \citealp{fagin1988belief}; \citealp{modica1994awareness}), and in particular with the state-space or semantic approach to modeling awareness (see \citealp{dekel-98}; \citealp{heifetz-06, heifetz-08}; \citealp{piermont2019algebraic}), in which agent's are endowed with a coarse understanding of the true space of uncertainty. In particular, these approaches often model uncertainty via a partial order of increasingly expressive state-spaces: agent's residing in more expressive state-spaces can only reason about events in lower state-spaces. 

Agents who are \emph{introspectively} unaware, however, might reason that there exist contingencies they are unaware of, without, of course, knowing exactly what such contingencies entail (see \citealp{halpern-09}; \citealp{halpern2019partial}). Subjective payoff structures can capture both naive and introspective unawareness by changing restrictions on the relation between $d_i$ and $d_{j|i}$: when $d_i$ is required to contain $d_{j|i}$ then the agent $i$ is naively unaware, as he does not consider it possible that $j$ considers a contingency he does not.

Note, most game-theoretic analyses of unawareness---\cite{heifetz-14}, \cite{perea-18c} and \cite{guarino-20}, for instance---have focused on unawareness w.r.t.~actions or strategies, not payoff states. 

Lack of common knowledge of the type structure has also been studied by \cite{ziegler-19}, and alternative misspecifications of models, by \cite{esponda-16} and, in the context of macroeconomic models, \cite{hansen-01} and \cite{cho-17}.

The study of disagreements about payoff structures
follows the literature on how small changes in beliefs and information at high orders affects strategic behavior. 
\citeauthor{rubinstein-89}'s \citeyearpar{rubinstein-89} \textit{email game} documents that behavior under common knowledge or under \textit{almost} common knowledge can vary drastically. Later, \cite{weinstein-07} show that these discontinuities of behavior are not an isolated phenomenon, but rather, a pervasive feature of games with incomplete information. \cite{penta-12} and \cite{chen-12} extend this observation to dynamic games. \cite{ely-11} and \cite{ruiz-g-18} characterize the types in which these discontinuities arise in static and dynamic settings, respectively. \cite{penta-21} study the strategic impact of the discontinuities corresponding to higher-order uncertainty about the observability of choices, not preferences. Within the literature of robust mechanism design, \cite{oury-12} and \cite{chen-20} study which social choice functions are implementable is a way robust to small misspecifications of higher-order beliefs. To this respect, the new notion of continuity in this paper and our results on the continuity of different solution concepts suggest novel questions and techniques for the study of implementability in dynamic mechanism design.

The counterintuitive nature of behavior being so discontinuous on information has sprung a literature that argues that these discontinuities are an artifact of very specific formalization or unrealistic assumptions on behavior. 
An approach consists in varying the notions of `similarity' of belief hierarchies or `approximation', as studied by \cite{dekel-06}, \citeauthor{chen-10} \citeyearpar{chen-10,chen-17} or \cite{morris-16}. Another approach consists in showing that these discontinuities vanish with the introduction of bounded rationality and, specifically, under arbitrarily small departures from the benchmark of rationality and common belief thereof, as in \cite{strzalecki-14}, \cite{heifetz-18a}, \cite{germano-20}, \cite{murayama-20} or \cite{jimenez-gomez-19}. On the contrary, the literature on global games, starting from \cite{carlsson-93}, has embraced the discontinuities of choices as an intrinsic feature of strategic behavior and leveraged on them to explain diverse economic phenomena such as currency crises (\citealp{morris-98}), bank runs (\citeauthor{angeletos-06}, \citeyear{,angeletos-06, angeletos-07}), conflict (\citealp{baliga-12}), overvaluation in financial markets (\citealp{han-17}) or disclosure policies for stress tests (\citealp{inostroza-18}). Perturbations of state spaces, together with the unique selection argument in Theorem \ref{theorem:uniqueselection}, allow for exploring variations of dynamic global games in which: $(i)$ information about the fundamentals is exogenously obtained solely at the beginning of the interaction, and new one is inferred \textit{only} through observed behavior, and $(ii)$ agents can entertain different perceptions of which payoff-relevant contingencies can eventually take place.


\section*{Acknowledgments and disclaimers}
\label{section:acknowledgements}
\addcontentsline{toc}{section}{Acknowledgments and disclaimers}

A very preliminary version of this project was occasionally presented under the title ``Rationalization and robustness in dynamic games with incomplete information.'' Thanks are due to Pierpaolo Battigalli, Emiliano Catonini, Pierfrancesco Guarino, Jarom\'{i}r Kov\'{a}\v{r}\'{i}k, Avi Lichtig, Antonio Penta and Gabriel Ziegler for insightful comments and valuable feedback, and to audiences at HSE University-International College of Economics and Finance and Universit\`{a} Bocconi. Zuazo-Garin acknowledges financial support from the Spanish Ministry of Economy and Competitiveness, from the Department of Education, Language Policy and Culture of the Basque Government (grants ECO2012-31346 and POS-2016-2-0003 and IT568-13, respectively), from the ERC Programme (ERC Grant 579424) and from the Russian Academic Excellence Project `5-100'. Zuazo-Garin also expresses his gratitude to Northwestern University and Kellogg School of Management (the Department of Economics, MEDS and the CMS-EMS, particularly) for immense hospitality. The authors are responsible for all errors.


\newpage
\appendix
\pagestyle{appendix}

\section[Appendix A. Proofs I: Propositions]{Proofs I: Propositions}
\label{section:proofpropositions}


\subsection{Proposition \ref{proposition:genericrichness}}
\label{subsection:proofgenericrichness}

\genericrichness*
\begin{proof}
For denseness, take any arbitrary standard payoff structure satisfying richness $\Upsilon$. Then, for any subjective payoff structure $d_{i}$ and any $n\in\mathds{N}$ define $d_{i}^{n}(d_{i})$ by setting $d_{i}^{0}(d_{i}):=\Upsilon$ and, for any $n\geq 1$, $d_{i,1}^{n}(d_{i}):=d_{i,1}$ and $d_{-i|i}^{n}(d_{i}):=d_{-i}^{n-1}(d_{-i|i})$. This way, $d_{i}^{n}(d_{i})$ is a subjective payoff structure satisfying that $d_{i,k}^{n}(d_{i})=d_{i,k}$ for every $k\leq n$. Thus, clearly, $(d_{i}^{n}(d_{i}))_{n\in\mathds{N}}$ converges to $d_{i}$. Openness follows from a standard inductive argument: It is immediate that the set of subjective payoff structure satisfying 1$^{\textup{st}}$-order richness is open; given that, it is immediate as well that the set of subjective payoff structure satisfying 2$^{\textup{nd}}$-order richness is open as well, and so on.
\end{proof}


\subsection{Proposition \ref{proposition:UHCtypes}}
\label{subsection:proofUHCtypes}

\UHCtypes*
\begin{proof}
We proceed by induction and verify that the correspondence $\textup{F}_{i,k}(d_{i},\,\cdot\,):T_{i}(d_{i})\rightrightarrows S_{i}$ is upper hemicontinuous for every $k\geq 0$. The claim holds trivially for the initial case ($k=0$) so we can focus on the proof of the inductive step. Fix $k\geq 0$ such that the claim holds; we verify next that it also does for $k+1$. Fix player $i$, subjective payoff structure $d_{i}$ and sequence of types $(t_{i}^{n})_{n\in\mathds{N}}$ converging to some type $t_{i}$, and pick strategy $s_{i}$ such that $s_{i}\in\textup{F}_{i}(d_{i},t_{i}^{n})$ for every $n\in\mathds{N}$. For each $n\in\mathds{N}$, pick conjecture $\mu_{i}^{n}$ that justifies the inclusion of $s_{i}$ in $\textup{F}_{i,k+1}(d_{i},t_{i}^{n})$. Let $(\mu_{i}^{n_{m}})_{m\in\mathds{N}}$ be a convergent subsequence of $(\mu_{i}^n)_{n\in\mathds{N}}$ with limit $\mu_{i}$. Notice then that for any $\ell\leq k$ and any history $h$ reachable by some profile of opponents' strategies in $\textup{F}_{-i,\ell}(d_{-i|i},\,\cdot\,)$ we have that:
\begin{align*}
\mu_{i}(h)\left[(d_{i,1}(i))_{0}\times\textup{Graph}\left(\textup{F}_{-i,\ell}(d_{-i|i},\,\cdot\,)\right)\right]\geq&\\[1ex]
\geq\underset{m\rightarrow\infty}{\textup{limsup}}&\,\mu_{i}(h)^{n_{m}}\left[(d_{i,1})_{0}\times\textup{Graph}\left(\textup{F}_{-i,\ell}(d_{-i|i},\,\cdot\,)\right)\right]=1,
\end{align*}
because we know from the induction hypothesis that $\textup{F}_{-i,\ell}(d_{-i|i},\cdot\,)$ has closed graph. This, together with the continuity of marginalization and the upper hemicontinuity of the best response correspondence allows for concluding that $\mu_{i}$ is a conjecture that justifies the inclusion of $s_{i}$ in $\textup{F}_{i,k+1}(d_{i},t_{i})$.
\end{proof}


\subsection{Proposition \ref{proposition:UHCall}}
\label{subsection:proofUHCall}


\subsubsection{Auxiliary results}

\begin{lemma}
\label{lemma:subjectivemodels}
Let $\Delta^{C}(\Omega)$ be a set of conditional probability systems with (compact and metrizable) space of uncertainty $\Omega$ and family of conditioning events $C$. Consider sequence $(\Omega^{n})_{n\in\mathds{N}}$ of closed subsets of $\Omega$ and and closed $\Omega^{0}\subseteq \Omega$ such that: $(i)$ $(\Omega^{n})_{n\in\mathds{N}}$ converges to $\Omega^{0}$ in the Hausdorff metric, and $(ii)$ for any $n\in\mathds{N}\cup\{0\}$, $\Omega^{n}$ has non-empty intersection with every conditioning event $c\in C$. Then, the following two hold:

\begin{itemize}

\item[$(i)$] For any sequence of conditional probability systems $(\mu^{n})_{n\in\mathds{N}}$ with limit $\mu$ and such that $\mu^{n}\in\Delta^{C}(\Omega^{n})$ for every $n\in\mathds{N}$ it holds that $\mu\in\Delta^{C}(\Omega^{0})$.

\item[$(ii)$] Sequence $(\Delta^{C}(\Omega^{n}))_{n\in\mathds{N}}$ converges to $\Delta^{C}(\Omega^{0})$ in the Hausdorff metric.

\end{itemize}

\end{lemma}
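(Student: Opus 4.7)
My plan for both parts is to leverage the Portmanteau characterization of weak convergence together with Hausdorff convergence of the supports, handling conditioning events one at a time.

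For $(i)$, fix an arbitrary $c\in C$ and let $\mu(\,\cdot\,|c)$ be the weak limit of $\mu^{n}(\,\cdot\,|c)$. Hausdorff convergence of $\Omega^{n}$ to $\Omega^{0}$ ensures that for every $\varepsilon>0$, eventually $\Omega^{n}\subseteq\overline{B_{\varepsilon}(\Omega^{0})}$, so $\mu^{n}(\,\cdot\,|c)$ is, for $n$ large, supported on the closed set $c\cap\overline{B_{\varepsilon}(\Omega^{0})}$. Portmanteau then implies $\mu(\,\cdot\,|c)\bigl(c\cap\overline{B_{\varepsilon}(\Omega^{0})}\bigr)=1$, and since $\Omega^{0}$ is closed, intersecting over $\varepsilon>0$ yields $\mu(\,\cdot\,|c)(c\cap\Omega^{0})=1$. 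The chain rule $\mu^{n}(A\,|\,c')=\mu^{n}(A\,|\,c)\mu^{n}(c\,|\,c')$ for $c\subseteq c'$ in $C$ passes to the limit by standard continuous-function approximations (clopenness of the conditioning events in the applications of interest makes the identity direct), so $\mu\in\Delta^{C}(\Omega^{0})$.

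For $(ii)$, convergence in the Hausdorff metric on compact subsets of $\prod_{c\in C}\Delta(\Omega)$ decomposes into two inclusions. The upper direction---every accumulation point of a sequence $\mu^{n}\in\Delta^{C}(\Omega^{n})$ lies in $\Delta^{C}(\Omega^{0})$---follows at once from $(i)$ and the sequential compactness of the ambient product space. The lower direction---every $\mu^{0}\in\Delta^{C}(\Omega^{0})$ is the limit of some $\mu^{n}\in\Delta^{C}(\Omega^{n})$---is where the real work lies. The natural idea is to pushforward $\mu^{0}$ through a Borel ``nearest-point'' selector $\pi_{n}\colon\Omega^{0}\to\Omega^{n}$ obtained via Kuratowski--Ryll-Nardzewski, chosen so that $d(\omega,\pi_{n}(\omega))\leq d_{H}(\Omega^{n},\Omega^{0})+1/n$. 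Since $\Omega$ is compact metrizable, $\pi_{n}\to\mathrm{id}$ uniformly and hence $(\pi_{n})_{*}\mu^{0}(\,\cdot\,|c)\to\mu^{0}(\,\cdot\,|c)$ weakly for each $c$.

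The main obstacle is that a single nearest-point map need not respect the conditioning events: $\pi_{n}(\omega)$ may leave $c$ even when $\omega\in c$, which would break both the support condition and the chain rule for the pushforward. My strategy to resolve this is to refine the selection so that it preserves the atoms of the (finite) Boolean algebra $\mathcal{A}$ generated by $C$. Concretely, I partition $\Omega$ into its atoms $(E_{j})_{j\leq J}$, apply Kuratowski--Ryll-Nardzewski on each $E_{j}\cap\Omega^{0}$ to obtain a Borel selection into $E_{j}\cap\Omega^{n}$ (nonempty by assumption $(ii)$ of the lemma together with the clopen structure of the atoms in the relevant applications), and glue these into a single atom-preserving Borel map $\pi_{n}\colon\Omega^{0}\to\Omega^{n}$. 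Atom-preservation gives $\pi_{n}^{-1}(c)=c$ for every $c\in C$; this propagates the support condition to the pushforward and, by the short direct calculation $(\pi_{n})_{*}\mu^{0}(A\,|\,c')=\mu^{0}(\pi_{n}^{-1}(A)\,|\,c')=\mu^{0}(\pi_{n}^{-1}(A)\,|\,c)\,\mu^{0}(c\,|\,c')=(\pi_{n})_{*}\mu^{0}(A\,|\,c)\,(\pi_{n})_{*}\mu^{0}(c\,|\,c')$ for $A\subseteq c\subseteq c'$, also preserves the chain rule. Combined with $d(\omega,\pi_{n}(\omega))\to 0$, the constructed sequence $\mu^{n}\in\Delta^{C}(\Omega^{n})$ converges to $\mu^{0}$ in the product topology, which completes the proof.
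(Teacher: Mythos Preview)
Your argument for part~$(i)$ is correct and runs parallel to the paper's, though the tools differ: the paper picks a point $x\in\textup{supp }\mu(c)$, invokes lower hemicontinuity of the support correspondence to produce approximants $x^{m}\in\textup{supp }\mu^{n_{m}}(c)\subseteq c\cap\Omega^{n_{m}}$, and then uses Hausdorff convergence directly on the sets; you instead use Portmanteau on closed $\varepsilon$-blowups of $\Omega^{0}$ and intersect. Both are clean.

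The real divergence is in part~$(ii)$. The paper's proof is a short contradiction argument that only establishes the \emph{upper} half of Hausdorff convergence: it assumes the existence of $\mu^{n_{m}}\in\Delta^{C}(\Omega^{n_{m}})$ bounded away from $\Delta^{C}(\Omega^{0})$, extracts a convergent subsequence by compactness of $\Delta^{C}(\Omega)$, and contradicts~$(i)$. It never constructs approximants in $\Delta^{C}(\Omega^{n})$ for a given $\mu^{0}\in\Delta^{C}(\Omega^{0})$, so the lower half is simply not addressed there. Your proof is therefore strictly more complete on this point, and the pushforward-through-a-measurable-selector idea is the right one.

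One correction to your justification of the lower half: the nonemptiness of $E_{j}\cap\Omega^{n}$ for atoms $E_{j}$ does \emph{not} follow from hypothesis~$(ii)$ of the lemma, which only concerns conditioning events $c\in C$, not atoms of the algebra they generate. What actually gives you nonemptiness (for large~$n$) is the combination of $E_{j}$ being clopen, $E_{j}\cap\Omega^{0}\neq\emptyset$, and Hausdorff convergence $\Omega^{n}\to\Omega^{0}$: compactness of $E_{j}\cap\Omega^{0}$ inside the open set $E_{j}$ yields a uniform $\delta>0$ with $B_{\delta}(E_{j}\cap\Omega^{0})\subseteq E_{j}$, and once $d_{H}(\Omega^{n},\Omega^{0})<\delta$ every point of $E_{j}\cap\Omega^{0}$ has a $\delta$-close neighbor in $\Omega^{n}$, which then lies in $E_{j}$. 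The same uniform bound shows your atom-preserving selector still satisfies $d(\omega,\pi_{n}(\omega))\to 0$. So your argument is sound once you make the clopenness of the conditioning events an explicit standing hypothesis---which, as you note, holds in every application in the paper (the conditioning events are of the form $S_{-i}(h)\times(\cdot)$ with $S_{-i}$ finite).
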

\begin{proof}
For claim $(i)$ we only need to check that $\textup{supp }\mu(c)\subseteq c\cap\Omega^{0}$ for any $c\in C$.\footnote{That updating according conditional probability is respected in the limit is a well-established fact.} Fix arbitrary $c\in C$ and pick $x\in\textup{supp }\mu(c)$. Since the support correspondence is lower hemicontinuous,\footnote{See Theorem 17.14 in \cite{aliprantis-07}, p.~563.} we know that there exists a subsequence $(\mu^{n_{m}})_{m\in\mathds{N}}$ and a sequence $(x^{m})_{m\in\mathds{N}}$ with limit $x$ such that $x^{m}\in\textup{supp }\mu^{n_{m}}(c)\subseteq c\cap \Omega^{n_{m}}$ for any $m\in\mathds{N}$. Then, pick metric $d$ that topologizes $\Omega$, and let $d_{H}$ denote its corresponding Hausdorff metric. We know that $d_{H}(c\cap \Omega^{n_{m}},c\cap \Omega)\rightarrow 0$ and therefore, that $d(c\cap \Omega^{n_{m}},c\cap \Omega^{0})\rightarrow 0$. Since we have that $d(x^{n_{m}},x)\rightarrow 0$ and $c\cap\Omega^{0}$ is closed, we conclude that $x\in c\cap \Omega^{0}$.

For claim $(ii)$, proceed by contradiction and suppose that there exists some $\varepsilon>0$ such that for any $m\in\mathds{N}$ there exists some $n_{m}\geq n$ such that $d_{H}(\Delta^{C}(\Omega^{n_{m}}),\Delta^{C}(\Omega^{0}))>\varepsilon$. Then, for any $m\in\mathds{N}$ there exists some $\mu^{n_{m}}\in\Delta^{C}(\Omega^{n_{m}})$ such that $d(\mu^{n_{m}},\mu)>\varepsilon$ for any $\mu\in\Delta^{C}(\Omega^{0})$. Now, since $\Delta^{C}(\Omega)$ is compact, we know that there exists a convergent subsequence $(\mu^{n_{m_{r}}})_{r\in\mathds{N}}$ with limit $\mu^{0}$. We know from $(i)$ that $\mu^{0}\in\Delta^{C}(\Omega^{0})$, and thus, since $d(\mu^{n_{m_{r}}},\mu^{0})\rightarrow 0$, we reached a contradiction.
\end{proof}


\begin{corollary}
If sequence of subjective payoff structures $(d_{i}^{n})_{n\in\mathds{N}}$ converges to subjective payoff structure $d_{i}$, then, sequence of sets of opponents' types $(T_{-i}(d_{-i|i}^{n}))_{n\in\mathds{N}}$ converges to $T_{-i}(d_{-i|i})$ in the Hausdorff metric.
\end{corollary}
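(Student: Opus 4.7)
The plan is to reduce the statement to a per-opponent, per-hierarchy-level application of Lemma \ref{lemma:subjectivemodels}. Since consistency with $d_{-i|i}$ decomposes across opponents, and both the Hausdorff metric and the product topology behave well under Cartesian products, it suffices to prove the following: if $(d_j^{n})_{n\in\mathds{N}}$ converges to $d_j$ in the sense of Definition \ref{definition:convergence}, then $T_j(d_j^{n})$ converges to $T_j(d_j)$ in the Hausdorff metric for every opponent $j\neq i$.

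Each type $t_j=(\theta_j,\pi_j)$ with $\pi_j=(\pi_{j,k})_{k\in\mathds{N}}$ is a coherent belief hierarchy, and consistency with $d_j$ imposes closed support restrictions on $\theta_j$ and on $\pi_{j,k}$ at every level $k$. The first step is to check that the canonical convergence $\mathcal{C}(d_j^{n})\to\mathcal{C}(d_j)$ translates into Hausdorff convergence, at every level $k$, of the compact sets that constrain $\pi_{j,k}$. This is essentially unwinding Definitions \ref{definition:canonical} and \ref{definition:convergence} and using that projections, products and finite unions of Hausdorff-convergent sequences of compact subsets of a fixed compact metric space are themselves Hausdorff-convergent.

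Once these level-wise convergences are in hand, I would apply part $(ii)$ of Lemma \ref{lemma:subjectivemodels} (with trivial conditioning family, since only the initial belief is at stake) to conclude that at each level $k$ the spaces of admissible probability measures converge in Hausdorff metric. The upper part of Hausdorff convergence---every limit of types $t_j^{n}\in T_j(d_j^{n})$ is in $T_j(d_j)$---follows from part $(i)$ of the lemma together with the fact that Brandenburger--Dekel coherence is a closed condition on hierarchies. The lower part---each $t_j\in T_j(d_j)$ is the limit of some $(t_j^{n})$ with $t_j^{n}\in T_j(d_j^{n})$---is obtained by level-by-level approximation of each $\pi_{j,k}$ by a measure supported in the perturbed compact set, followed by a diagonal extraction, and then by invoking the Brandenburger--Dekel homeomorphism to reconstruct a coherent hierarchy from the approximated marginals.

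The main obstacle is not any single step but the bookkeeping required to preserve coherence of the approximating hierarchies while simultaneously honoring the support constraints at every finite level. A clean way to handle this is to identify $T_j(d_j)$ with the projective limit of its finite-order truncations, apply the lemma at each truncation to obtain Hausdorff convergence there, and then lift the result to the full hierarchy by compactness of the universal type space and continuity of the projective-limit construction. This reduces the coherence issue to uniform-in-$k$ control, which is automatic from the compactness of each level's belief space.
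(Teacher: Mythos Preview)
Your proposal is correct and follows essentially the same route as the paper: induct on the hierarchy level, apply Lemma~\ref{lemma:subjectivemodels}(ii) at each level to pass Hausdorff convergence from the support sets $X_i^k$ to the belief spaces $Z_i^k=\Delta(X_i^k)$, then assemble across levels and opponents. The paper's proof is terser---it builds the sets $X_i^k$ and $Z_i^k$ recursively, shows each converges, and finishes with a one-word ``hence'' to pass from the product $\prod_k\prod_{j\neq i}Z_j^k$ to the type spaces $T_{-i}$; your explicit treatment of the coherence step via the projective-limit identification is exactly what is needed to justify that ``hence'', and is a welcome clarification rather than a departure.
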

\begin{proof}
We proceed by induction. Let $X_{i}^{1}(d_{i}'):=\textup{Proj}_{(d'_{i,1})_{0}\times (d'_{i,1})_{-i}}(d'_{i,1})$ for any subjective payoff structure $d_{i}'$. Obviously, $X_{i}^{1}(d_{i}^{n})$ converges to $X_{i}^{1}(d_{i})$ in the Hausdorff metric, and thus, we know from Lemma \ref{lemma:subjectivemodels} that $(Z_{i}^{1}(d_{i}^{n}))_{n\in\mathds{N}}$, where $Z_{i}^{1}(d_{i}^{n}):=\Delta(X_{i}^{1}(d_{i}^{n}))$ for any $n\in\mathds{N}$, converges to $Z_{i}^{1}(d_{i}):=\Delta(X_{i}^{1}(d_{i}))$ in the Hausdorff metric. Now, set inductively $Z_{i}^{k-1}(d'_{i})=\Delta(X_{i}^{k-1}(d'_{i}))$ and $X_{i}^{k}(d''_{i}):=X_{i}^{k-1}(d''_{i})\times\prod_{j\neq i}Z_{j}^{k-1}(d'_{j|i})$ for every $k\geq 0$ and every $d'_{i}$ and $d''_{i}$, and suppose that $k\geq 1$ is such that $X_{i}^{k}(d_{i}^{n})$ converges to $X_{i}^{k}(d_{i})$ and $(Z_{j}^{k}(d_{j|i}^{n}))_{n\in\mathds{N}}$ converges to $Z_{j}^{k}(d_{j|i})$ for any $j\neq i$ (all of them in the Hausdorff metric). Then, obviously, $X_{i}^{k+1}(d_{i}^{n})$, defined in the obvious way, converges to $X_{i}^{k+1}(d_{i})$ in the Hausdorff metric, and therefore, we know from Lemma \ref{lemma:subjectivemodels} that $(Z_{i}^{k+1}(d_{i}^{n}))_{n\in\mathds{N}}$ converges to $Z_{i}^{k+1}(d_{i})$. Obviously, it follows that $(\prod_{k\in\mathds{N}}\prod_{j\neq i}Z_{j}^{k}(d_{j|i}^{n}))_{n\in\mathds{N}}$ converges to $\prod_{k\in\mathds{N}}\prod_{j\neq i}Z_{j}^{k}(d_{j|i})$ and hence, that $(T_{-i}(d_{-i|i}^{n}))_{n\in\mathds{N}}$ converges to $T_{-i}(d_{-i|i})$.
\end{proof}

\begin{corollary}
\label{corollary:subjectivemodels}
The set of player $i$'s subjective models is closed; i.e., if pair $(d_{i},t_{i})\in\mathscr{P}_{i}^{\infty}\times T_{i}$ is the limit of a sequence of subjective models $(d_{i}^{n},t_{i}^{n})_{n\in\mathds{N}}$, then $(d_{i},t_{i})$ is a subjective model too.
\end{corollary}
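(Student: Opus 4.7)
The plan is to verify that consistency of $(d_i, t_i)$ is preserved under the joint convergence $(d_i^n, t_i^n) \to (d_i, t_i)$, by treating consistency as a closed condition on supports of measures and reducing the whole infinite hierarchy of conditions to the single, already‑controlled support of $\tau_i(\pi_i(t_i))$.

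First I would unpack consistency into the two requirements from the footnote defining $T_i(d_i)$: (a) $\theta_i(t_i) \in (d_{i,1})_i$; and (b) the measure $\tau_i(\pi_i(t_i)) \in \Delta(\Theta_0 \times T_{-i})$ assigns probability one to $(d_{i,1})_0 \times T_{-i}(d_{-i|i})$. Condition (a) is immediate: by convergence of canonical representations, $(d_{i,1}^n)_i$ converges to $(d_{i,1})_i$ in the Hausdorff metric; since $\theta_i(t_i^n) \in (d_{i,1}^n)_i$ and $\theta_i(t_i^n) \to \theta_i(t_i)$, closedness of $(d_{i,1})_i$ delivers $\theta_i(t_i) \in (d_{i,1})_i$. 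Crucially, the higher‑order conditions spelled out in the footnote (conditions $(1)$ and $(2)$ holding recursively for every $t_{-i}$ in the support of $\tau_i(\pi_i(t_i))$ and every opponent) are absorbed once (b) is verified, since by definition they amount exactly to saying that the support of $\tau_i(\pi_i(t_i))$ lies in $(d_{i,1})_0 \times T_{-i}(d_{-i|i})$.

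For (b), I would combine the previous corollary, which gives $T_{-i}(d_{-i|i}^n) \to T_{-i}(d_{-i|i})$ in the Hausdorff metric, with the Hausdorff convergence of $(d_{i,1}^n)_0$ to $(d_{i,1})_0$, to conclude that the product
\[
\Omega^n := (d_{i,1}^n)_0 \times T_{-i}(d_{-i|i}^n)
\]
converges in the Hausdorff metric to $\Omega^0 := (d_{i,1})_0 \times T_{-i}(d_{-i|i})$. Because $\tau_i$ is a homeomorphism, convergence $t_i^n \to t_i$ transfers to weak$\ast$ convergence $\tau_i(\pi_i(t_i^n)) \to \tau_i(\pi_i(t_i))$, and each $\tau_i(\pi_i(t_i^n))$ is supported in $\Omega^n$. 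Then I would apply the same support‑closure argument used in Lemma \ref{lemma:subjectivemodels}(i): by lower hemicontinuity of the support correspondence, every point in the support of $\tau_i(\pi_i(t_i))$ is the limit of points in the supports of the $\tau_i(\pi_i(t_i^{n_m}))$, hence in $\Omega^{n_m}$; Hausdorff convergence $\Omega^{n_m} \to \Omega^0$ together with closedness of $\Omega^0$ forces the limit into $\Omega^0$, yielding $\mathrm{supp}\,\tau_i(\pi_i(t_i)) \subseteq \Omega^0$, which is (b).

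The main obstacle is simply the bookkeeping: verifying that the hierarchical, recursive nature of ``$t_i$ consistent with $d_i$'' truly collapses to the single support condition (b), so that no separate limiting argument is needed at each order. This is handled by the prior corollary — which already does the inductive work of propagating convergence through all hierarchical levels of $T_{-i}(d_{-i|i}^n)$ — leaving only a single application of the measure‑theoretic step of Lemma \ref{lemma:subjectivemodels}. Everything else (closedness of $(d_{i,1})_i$, continuity of $\tau_i$, closedness of $\Omega^0$) is routine.
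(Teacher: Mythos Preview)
Your proposal is correct and follows essentially the same route as the paper: both use the preceding corollary to obtain Hausdorff convergence of $(d_{i,1}^n)_0 \times T_{-i}(d_{-i|i}^n)$ to $(d_{i,1})_0 \times T_{-i}(d_{-i|i})$, and then apply Lemma \ref{lemma:subjectivemodels}(i) to conclude that the limit measure $\tau_i(\pi_i(t_i))$ is supported there. You are slightly more careful in that you explicitly treat the payoff-type condition $\theta_i(t_i)\in(d_{i,1})_i$, which the paper's terse proof leaves implicit.
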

\begin{proof}
Just notice that we know from the previous corollary that $((d_{i,1}^{n})_{0}\times T_{-i}(d_{-i|i}^{n}))_{n\in\mathds{N}}$ converges to $(d_{i,1})_{0}\times T_{-i}(d_{-i|i})$, and thus, we know from Lemma \ref{lemma:subjectivemodels} that, if $\tau_{i}^{n}\in\Delta((d_{i,1}^{n})_{0}\times T_{-i}(d_{-i|i}^{n}))$ for every $n\in\mathds{N}$, and $(\tau_{i}^{n})_{n\in\mathds{N}}$ converges to $\tau_{i}$, then $\tau_{i}\in\Delta((d_{i,1})_{0}\times T_{-i}(d_{-i|i}))$.
\end{proof}


\subsubsection{Proof of the proposition}

\UHCall*
\begin{proof}
We will proceed by induction and verify that the correspondence $\textup{B}_{i,k}:\mathscr{M}_{i}^{\infty}\rightrightarrows S_{i}$ is upper hemicontinuous for every $k\geq 0$. The claim holds trivially for the initial case so we can focus on the proof of the inductive step. Fix $k\geq 0$ such that the claim holds; we verify next that it also does for $k+1$. Fix player $i$, convergent sequence of subjective models $(M_{i}^{n})_{n\in\mathds{N}}$ with limit $M_{i}$ and strategy $s_{i}$ such that $s_{i}\in\textup{B}_{i}(M_{i}^{n})$ for every $n\in\mathds{N}$. For each $n\in\mathds{N}$, pick conjecture $\mu_{i}^{n}$ that justifies the inclusion of $s_{i}$ in $\textup{F}_{i,k+1}(M_{i}^{n})$. Let $(\mu_{i}^{n_{m}})_{m\in\mathds{N}}$ be a convergent subsequence of $(\mu_{i}^n)_{n\in\mathds{N}}$ with limit $\mu_{i}$. For each $n\in\mathds{N}$ let $d_{i}^{n}$ denote the subjective payoff structure associated to subjective model $M_{i}^{n}$ and $d_{i}$, the one associated to $M_{i}$. Notice then that for any $\ell\leq k$ and any history $h$ we have that:
\begin{align*}
X_{i,\ell}^{n}(h):=\left\{(s_{-i},t_{-i})\in S_{-i}\times T_{-i}(d_{-i|i})\left|[s_{-i}]_{h}\cap\textup{B}_{-i,\ell}(d_{-i|i}^{n},t_{-i})\neq h^{0}\right.\right\}&=\\[1ex]
\bigcup_{s_{-i}\in S_{-i}}\left([s_{-i}]_{h}\times T_{-i}(d_{-i|i})\right)\cap\textup{Graph}&\left(\textup{B}_{-i,\ell}(d_{-i|i}^{n},\,\cdot\,)\right),
\end{align*}
and thus, we know by finiteness of $S_{-i}$ and the induction hypothesis that $X_{i,\ell}(h)$ is closed. Then, it holds that:
\begin{align*}
\forall m\in\mathds{N},\,\mu_{i}^{n_{m}}(h)\left[(d_{i,1}^{n_{m}})_{0}\times X_{i,\ell}^{n_{m}}(h)\right]=1&\Rightarrow\forall m\in\mathds{N},\,\mu_{i}^{n_{m}}(h)\left[\bigcup_{r\geq n_{m}}(d_{i,1}^{r})_{0}\times X_{i,\ell}^{r}(h)\right]=1\\[1ex]
&\Rightarrow\forall m\in\mathds{N},\,\mu_{i}(h)\left[\textup{cl}\left(\bigcup_{r\geq n_{m}}(d_{i,1}^{n_{m}})_{0}\times X_{i,\ell}^{r}(h)\right)\right]=1\\[1ex]
&\Rightarrow\mu_{i}(h)\left[\bigcap_{m\in\mathds{N}}\textup{cl}\left(\bigcup_{r\geq n_{m}}(d_{i,1}^{r})_{0}\times X_{i,\ell}^{r}(h)\right)\right]=1\\[1ex]
&\Rightarrow\mu_{i}(h)\left[(d_{i,1})_{0}\times X_{i,\ell}(h)\right]=1,
\end{align*}
where:
\[
X_{i,\ell}(h):=\left\{(s_{-i},t_{-i})\in S_{-i}\times T_{-i}(d_{-i|i})\left|[s_{-i}]_{h}\cap\textup{B}_{-i,\ell}(d_{-i|i},t_{-i})\neq h^{0}\right.\right\}.
\]
Notice that the last inclusion holds by virtue of $X_{i,\ell}^{n}(h)$ being upper hemicontinuous on $n$ because of the induction hypothesis. The above, together with the continuity of marginalization and the upper hemicontinuity of the best response correspondence allows for concluding that $\mu_{i}$ is a conjecture that justifies the inclusion of $s_{i}$ in $\textup{B}_{i,k+1}(M_{i})$.
\end{proof}

\section[Appendix B. Proofs II: Theorem]{Proofs II: Theorem}
\label{section:proofstheorem2}

\subsection{Additional notation}
\label{subsection:Bnotation}


\subsubsection{Distinguished histories}
\label{subsubsection:histories}

The histories of player $i$ that might be reached when every opponent $j\in J\subseteq I\setminus\{i\}$ plays according to some given correspondence $\textup{W}_{j}(d_{j},\,\cdot\,):T_{j}(d_{j})\rightrightarrows S_{j}$ is of special interest and formalized as follows:
\[
\textup{H}_{i}(\textup{W}_{J},d_{J}):=\left\{h\in H_{i}\cup\{h^{0}\}\left|\prod_{j\in J}\left(S_{j}(h)\times T_{j}(d_{j})\cap\textup{Graph}(\textup{W}_{j}(d_{j},\,\cdot\,)\right))\neq h^{0}\right.\right\}.
\]

Another kind of histories that play an important role throughout the proof are those in which the player updates beliefs from scratch. For each player $i$ and conjecture $\mu_{i}$, these are the histories that are considered unlikely to be reached by every history preceding them:
\[
\textup{H}_{i}(\mu_{i}):=\left\{h\in H_{i}\cup\{h^{0}\}\left|\textup{marg}_{S_{-i}}\mu_{i}(h')\left[S_{-i}(h)\right]=0\textup{  for every  }h'\prec h\right.\right\}.
\]
Notice that every conjecture $\mu_{i}$ is fully described by the beliefs $\mu_{i}(h)$ where $h\in\textup{H}_{i}(\mu_{i})$: The belief that corresponds to any other history is obtained via the chain rule.


\subsubsection{Strict rationalizability}
\label{subsubsection:strictrat}

First, for each player $i$ and strategy $s_{i}$ let $[s_{i}]$ denote the set of player $i$'s strategies that are outcome-equivalent to $s_{i}$ --those strategies $s_{i}'$ such that $s_{i}'(h)=s_{i}(h)$ for every history $h\in H_{i}(s_{i})$. Then, player $i$'s set of (\textit{extensive form}) \textit{strictly rationalizable} strategies for subjective model $(d_{i},t_{i})$ is $\textup{F}_{i}^{0}(d_{i},t_{i}):=\bigcap_{k\geq 0}\textup{F}_{i,k}^{0}(d_{i},t_{i})$ where $\textup{F}_{i,0}^{0}(d_{i},t_{i})=S_{i}$ and for each $k\geq 0$,
\begin{align*}
\textup{F}_{i,k+1}^{0}(d_{i},t_{i})&:=\left\{
s_{i}\in S_{i}\left|\exists\mu_{i}\in\textup{C}_{i,k}^{0}(d_{i},t_{i})\textup{  s.t.  }r_{i}(\theta_{i}(t_{i}),\mu_{i})=[s_{i}]
\right.
\right\},\\[3ex]
\textup{C}_{i,k+1}^{0}(d_{i},t_{i})&:=\left\{
\mu_{i}\in \textup{C}_{i,k}^{0}(d_{i},t_{i})\left|
\begin{tabular}{l}
$\textup{$\forall h\in\textup{H}_{i}(\textup{F}_{-i,k}^{0},d_{-i|i})$,}$\\[2ex]
$\mu_{i}(h)[(d_{i,1})_{0}\times M]=1\textup{  for some }M\subseteq\textup{Graph}\left(\textup{F}_{-i,k}^{0}(d_{-i|i},\,\cdot\,)\right)$
\end{tabular}
\right.
\right\}.
\end{align*}

Related to this, we also add the following two notational shorthands, convenient for lemmas \ref{lemma:secondperturbation} and \ref{lemma:thirdperturbation2} in the proof of the theorem:

\begin{itemize}

\item For each subjective payoff structure $d_{i}$ and $k\geq 0$ we denote by $[s_{i}|d_{i}]_{k}$ the set of player $i$'s strategies that are equivalent to strategy $s_{i}$ at every history consistent with the opponents' playing some profile of strategies in $\textup{F}_{-i,k}^{0}(d_{-i|i},\,\cdot\,)$, that is:
\[
[s_{i}|d_{i}]_{k}=\left\{s_{i}'\in S_{i}\left|s_{i}'(h)=s_{i}(h)\textup{  for every  }h\in\textup{H}_{i}(\textup{F}_{-i,k}^{0},d_{-i|i})\right.\right\}.
\]

\item The types consistent with subjective payoff structure $d_{i}$ that admit strategy $s_{i}$ as $\textup{F}_{i,k}^{0}(d_{i},\,\cdot\,)$ via some conjecture that assigns, initially, positive probability to \textit{every} history reachable by some profile of opponents' strategies in $\textup{F}_{-i,k-1}^{0}(d_{-i|i},\,\cdot\,)$ is formalized as:
\[
T_{i,k}(d_{i},s_{i}):=\left\{t_{i}\in T_{i}(d_{i})\left|
\begin{tabular}{r l}
\multicolumn{2}{l}{$\exists \mu_{i}\in\Delta^{H_{i}\cup\{h^{0}\}}(S_{-i}\times (d_{i,1})_{0}\times T_{-i}(d_{-i|i}))$\textup{  such that:}}\\[2ex]
$(i)$&$\mu_{i}\textup{  justifies the inclusion of $s_{i}$ in }\textup{F}_{i,k}^{0}(d_{i},t_{i})$,\\[1.5ex]
$(ii)$&$\forall h\in\textup{H}_{i}(\textup{F}_{-i,k-1}^{0},d_{-i|i}),\,\textup{marg}_{S_{-i}}\mu_{i}(h^{0})[S_{-i}(h)]>0$
\end{tabular}
\right.
\right\}.
\]

\end{itemize}


\subsection{Proof of the theorem}
\label{subsection:Btheorem}

\uniqueselection*
\begin{proof}
Fix standard payoff structure $\Upsilon$, profile of finite types $t\in T(\Theta)$ and strategy profile $s\in\textup{F}(\Upsilon,t)$. Then, we know from lemmas \ref{lemma:firstperturbation2} and \ref{lemma:secondperturbation} below that for each $k\in\mathds{N}$:

\begin{itemize}

\item[(A)] There exist some $m_{k}\in\mathds{N}$ and a sequence of standard models $(\Upsilon^{m},t^{k,m})_{m\geq m_{k}}$ converging to $(\Upsilon,t)$ such that, for any $m\geq m_{k}$ and any player $i$, $t_{i}^{k,m}\in T_{i}(\Upsilon^{m})$ is finite, $s_{i}\in\textup{F}_{i,k}^{0}(\Upsilon^{m},t_{i}^{k,m})$ and for every $j\neq i$,
\[
\textup{H}_{j}(\textup{F}_{i,k}^{0},\Upsilon^{m})=\textup{H}_{j}(\textup{F}_{i,k},\Upsilon^{m})=\textup{H}_{j}(\textup{F}_{i,k},\Upsilon).
\]

\item[(B)] For any $m\in\mathds{N}$ there exists a sequence of profiles of finite types $(t^{k,m,\ell})_{\ell\in\mathds{N}}$ converging to $t^{k,m}$ such that for each player $i$, $t_{i}^{k,m,\ell}\in T_{i,k}(\Upsilon^{m},s_{i})$ (as defined in Section \ref{subsubsection:strictrat}) for every $\ell\in\mathds{N}$.

\end{itemize}
Now, for each player $i$ and each $n\in\mathds{N}$ set $\tilde{t}_{i}^{n}:=t_{i}^{m_{n},m_{n},n}$, and notice that:

\begin{itemize}

\item $\tilde{t}_{i}^{n}$ is finite and included in $T_{i,m_{n}}(\Upsilon^{m_{n}},s_{i})$.

\item $\textup{H}_{i}(\textup{F}_{-i,m_{n}}^{0},\Upsilon^{m_{n}})=\textup{H}_{i}(\textup{F}_{-i,m_{n}},\Upsilon^{m_{n}})=\textup{H}_{i}(\textup{F}_{-i,m_{n}},\Upsilon)$.

\end{itemize}
We know then from Lemma \ref{lemma:thirdperturbation2} below that:

\begin{itemize}

\item[(C)] For any $i$ there exists some subjective model $(d_{i}^{n},t_{i}^{n})$ such that: $d_{i,m_{n}}^{n}=d_{i,m_{n}}'$ for $d_{i}=\Upsilon^{m_{n}}$, $\pi_{i,m_{n}}(t_{i}^{n})=\pi_{i,m_{n}}(\tilde{t}_{i}^{n})$ and as (for $[s_{i}|\Upsilon^{m_{n}}]_{m_{n}-1}$ defined in Section \ref{subsubsection:strictrat}),
\[
\textup{F}_{i,m_{n}+1}(d_{i}^{n},t_{i}^{n})\subseteq [s_{i}|\Upsilon^{m_{n}}]_{m_{n}-1},
\]
and hence,
\[
\textup{F}_{i}(d_{i}^{n},t_{i}^{n})\subseteq [s_{i}|\Upsilon^{m_{n}}]_{m_{n}-1}.
\]
\end{itemize}
It follows from (A) that:
\begin{align*}
[s_{i}|\Upsilon^{m_{n}}]_{m_{n}-1}&=\left\{s_{i}'\in S_{i}\left|s_{i}'(h)=s_{i}(h)\textup{  for any  }h\in\textup{H}_{i}(\textup{F}_{-i,m_{n}-1}^{0},\Upsilon^{m_{n}})\right.\right\}\\[1ex]
&=\left\{s_{i}'\in S_{i}\left|s_{i}'(h)=s_{i}(h)\textup{  for any  }h\in\textup{H}_{i}(\textup{F}_{-i,m_{n}-1},\Upsilon)\right.\right\}\\[1ex]
&\subseteq\left\{s_{i}'\in S_{i}\left|s_{i}'(h)=s_{i}(h)\textup{  for any  }h\in\textup{H}_{i}(\textup{F}_{-i},\Upsilon)\right.\right\}.
\end{align*}
Thus, $(d^{n},t^{n})_{n\in\mathds{N}}$ is a sequence of profiles of subjective models converging to $(\Upsilon,t)$ such that for any player $i$ and $n\in\mathds{N}$,
\[
\textup{F}_{i}(d_{i}^{n},t_{i}^{n})\subseteq\left\{s_{i}'\in S_{i}\left|s_{i}'(h)=s_{i}(h)\textup{  for any  }h\in\textup{H}_{i}(\textup{F}_{-i},\Upsilon)\right.\right\}.
\]
Now, fix $n\in\mathds{N}$ and $s^{n}\in\textup{F}(d^{n},t^{n})$. Let $J\subseteq I$ be the set of those players $i$ for which $h^{0}\in H_{i}$. Clearly, $h^{0}\in\textup{H}_{i}(\textup{F},\Upsilon)$ and thus, it must hold that $s_{i}^{n}(h^{0})=s_{i}(h^{0})$. Then, obviously, for every player $j$ such that $(h^{0},(s_{i}^{n}(h^{0}))_{i\in J})\in H_{j}$, it holds that $(h^{0},(s_{i}^{n}(h^{0}))_{i\in J})\in\textup{H}_{j}(\textup{F},\Upsilon)$ as well. Hence, an easy inductive argument enables to conclude that $z(s^{n}|h^{0})=z(s|h^{0})$.
\end{proof}


\subsection{Proof of Lemma \ref{lemma:firstperturbation2}}
\label{subsection:tostrict}


\subsubsection{First perturbation of payoff structures}
\label{subsubsection:perturbation1}

Let $\Upsilon$ be a standard payoff structure. We first construct a perturbation $(\Upsilon^{n})_{n\in\mathds{N}}$ that will ensure that any strategy of any player that is extensive form rationalizable given $\Upsilon$ is also strictly rationalizable along the tail of the perturbation. We proceed in three steps:
\begin{enumerate}

\item Set $\Theta_{0}^{n}:=\Theta_{0}$.

\item Fix $n\in\mathds{N}$ and define the following set for each player $i$:
\[
\Theta_{i}^{n}:=\left\{\theta_{i}^{q_{i}}(s_{i},\theta_{i})\left|s_{i}\in S_{i},\theta_{i}\in\Theta_{i}\textup{  and  }q_{i}\in([n,\infty)\cap\mathds{Q})\cup\{\infty\}\right.\right\},
\]

\item For each player $i$, define utility function:
\[
u_{i}^{n}(z,(\theta_{0},(\theta_{j}^{q_{j}}(s_{j},\theta_{j}))_{j\in I})):=\left\{
\begin{tabular}{l l}
$u_{i}(z,\theta)+\frac{1}{q_{i}}$&$\textup{if}\,\,\,q_{i}\neq\infty\textup{  and}$\\[0.5ex]
&$\textup{\color{white}if}\,\,\,z=z(s'_{-i};s_{i})\textup{  for some  }s'_{-i}\in S_{-i},$\\[1ex]
$u_{i}(z,\theta)$&$\textup{otherwise}$,
\end{tabular}
\right.
\]
for each terminal history $z$ and each $(\theta_{0},(\theta_{j}^{q_{j}}(s_{j},\bar{\theta}_{j}))_{j\in I})\in\Theta_{0}^{n}\times\prod_{i\in I}\Theta_{i}^{n}$.

\end{enumerate}
Set $\Upsilon:=(\Theta_{0}^{n},(\Theta_{i}^{n},u_{i}^{n})_{i\in I})$. Clearly, $(\Upsilon^{n})_{n\in\mathds{N}}$ is a well-defined sequence of standard payoff structures converging to $\Upsilon$. Now, let us make few comments to better motivate the need of this construction:
\begin{itemize}

\item[(A)] For every player $i$, every strategy $s_{i}$, every payoff type $\theta_{i}\in\Theta_{i}$, and every $q_{i}\neq\infty$, there exists a payoff type $\theta_{i}^{q_{i}}(s_{i},\theta_{i})\in\Theta_{i}^{n}$ that mimics the payoffs of $\theta_{i}$ for player $i$ except for a slight improvement in the payoffs corresponding to the terminal nodes consistent with $s_{i}$ (thus breaking all possible ties in expected utility in favor of $s_{i}$). This payoff types are the key elements we use in Lemma \ref{lemma:firstperturbation2} below to break ties and move from extensive form rationalizability to strict rationalizability.

\item[(B)] With some abuse of terminology it can be considered that $\Theta$ is a subset of $\Theta^{n}$. That the states of nature of the former are contained in the latter is immediate from the construction. Player $i$'s payoff types in $\Theta_{i}$ are not per se contained in those in $\Theta_{i}^{n}$; however, there is a natural identification between payoff types $\theta_{i}$ and $\theta_{i}^{\infty}(\,\cdot\,,\,\cdot\,)$. For our proof of Lemma \ref{lemma:firstperturbation1} below we will, with some abuse of notation, leverage on this fact.

\end{itemize}


\subsubsection{From rationalizability to strict rationalizability}
\label{subsubsection:tostrict}

The proof of the following lemma (and also of Lemma \ref{lemma:secondperturbation}) will depend on the concept of \textit{finite conjecture}. Given standard payoff structure $\Upsilon$ we say that conjecture $\mu_{i}$ is \textit{finite} if for every history $h\in H_{i}\cup\{h^{0}\}$ it holds that the belief hierarchy on $\Theta$ induced by $\mu_{i}(h)$, $\pi_{i}(h):=\tau_{i}(\textup{marg}_{\Theta_{0}\times T_{-i}(\Theta)}\mu_{i}(h))$, is that of a finite type. Then, we have that:

\begin{lemma}
\label{lemma:firstperturbation2}
Let $\Gamma$ be an extensive form and $\Upsilon$, a standard payoff structure. Then, for any player $i$ and any $k\geq 0$ the following two hold:

\begin{itemize}

\item[$(i)$] For any finite type $t_{i} \in T_{i}(\Theta)$, and any strategy $s_{i}\in\textup{F}_{i,k}(\Upsilon,t_{i})$ there exists a $n_{k}^{i,1}\in\mathds{N}$ and a sequence of finite types $(t_{i}^{k,n})_{n\geq n_{k}^{i,1}}$ converging to $t_{i}$ such that $t_{i}^{n}\in T_{i}(\Theta^{n})$ and $s_{i}\in\textup{F}_{i,k}^{0}(\Upsilon^{n},t_{i}^{n})$ for any $n\geq n_{k}^{i,1}$.

\item[$(ii)$] There exists some $n_{k}^{i,2}\in\mathds{N}$ such that, for any $n\geq n_{k}^{i,2}$, $\textup{F}_{i,k}^{0}(\Upsilon^{n},t_{i})\subseteq\textup{F}_{i,k}(\Upsilon^{n},t_{i})$ for every type $t_{i}\in T_{i}(\Theta^{n})$ and $\textup{H}_{j}(\textup{F}_{i,k}^{0},\Upsilon^{n})=\textup{H}_{j}(\textup{F}_{i,k},\Upsilon^{n})$ for every $j\neq i$.

\end{itemize}

\end{lemma}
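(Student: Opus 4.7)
The plan is to establish both claims simultaneously by induction on $k \geq 0$, exploiting the fact that $\Upsilon^{n}$ has been engineered to contain, for every base payoff type $\theta_{i}$ and every strategy $s_{i}$, tie-breaking payoff types $\theta_{i}^{q_{i}}(s_{i},\theta_{i})$ with $q_{i}\in([n,\infty)\cap\mathds{Q})\cup\{\infty\}$, each of which increments the utility of every terminal history reachable via $s_{i}$ by $1/q_{i}$ and leaves everything else unchanged. The base case ($k=0$) is immediate, since $\textup{F}_{i,0}=\textup{F}_{i,0}^{0}=S_{i}$, both families of histories trivially coincide, and the identification $\theta_{i}\leftrightarrow\theta_{i}^{\infty}(\,\cdot\,,\,\cdot\,)$ yields, for any finite $t_{i}\in T_{i}(\Theta)$, a canonical copy $t_{i}^{0,n}\in T_{i}(\Theta^{n})$ with $t_{i}^{0,n}\to t_{i}$.

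For the inductive step of (i), fix $s_{i}\in\textup{F}_{i,k}(\Upsilon,t_{i})$ together with a witnessing \emph{finite} conjecture $\mu_{i}\in\textup{C}_{i,k-1}^{F}(\Upsilon,t_{i})$ (which exists because $t_{i}$ is finite and the support of $\mu_{i}$ can be chosen finite at each history). Set $q_{n}:=n$ and assign to $t_{i}^{k,n}$ the payoff type $\theta_{i}^{q_{n}}(s_{i},\theta_{i}(t_{i}))$. By the inductive hypothesis of (i) applied to each opponent $j$ and each pair $(s_{j},t_{j})$ appearing in the (finite) support of $\mu_{i}$ with $s_{j}\in\textup{F}_{j,k-1}(\Upsilon,t_{j})$, we dispose of perturbed finite types $\tilde{t}_{j}^{k-1,n}(s_{j},t_{j})\to t_{j}$ in $T_{j}(\Upsilon^{n})$ with $s_{j}\in\textup{F}_{j,k-1}^{0}(\Upsilon^{n},\tilde{t}_{j}^{k-1,n}(s_{j},t_{j}))$. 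Define $\mu_{i}^{k,n}$ as the point-wise \emph{lift} of $\mu_{i}$ that maps each atom $(s_{-i},\theta_{0},t_{-i})$ in the support of $\mu_{i}(h)$ to $(s_{-i},\theta_{0},(\tilde{t}_{j}^{k-1,n}(s_{j},t_{j}))_{j\neq i})$ with the same mass, and let $\tau_{i}(\pi_{i}(t_{i}^{k,n})):=\textup{marg}_{\Theta_{0}^{n}\times T_{-i}(\Upsilon^{n})}\mu_{i}^{k,n}(h^{0})$. By the inductive hypothesis of (ii), $\textup{H}_{i}(\textup{F}_{-i,k-1}^{0},\Upsilon^{n})=\textup{H}_{i}(\textup{F}_{-i,k-1},\Upsilon^{n})$, and a standard upper-hemicontinuity plus finite-histories argument yields equality with $\textup{H}_{i}(\textup{F}_{-i,k-1},\Upsilon)$ for all large $n$; this ensures $\mu_{i}^{k,n}\in\textup{C}_{i,k-1}^{0}(\Upsilon^{n},t_{i}^{k,n})$. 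The tie-breaking increment present in $\theta_{i}^{q_{n}}(s_{i},\theta_{i}(t_{i}))$ then turns the weak best response $s_{i}$ into the \emph{strict} one: $r_{i}(\theta_{i}(t_{i}^{k,n}),\mu_{i}^{k,n})=[s_{i}]$, as required.

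For the inductive step of (ii), the inclusion $\textup{F}_{i,k}^{0}(\Upsilon^{n},t_{i})\subseteq\textup{F}_{i,k}(\Upsilon^{n},t_{i})$ follows at once from the inductive hypothesis: $\textup{F}_{-i,k-1}^{0}\subseteq\textup{F}_{-i,k-1}$ and the histories at which belief restrictions bite agree, so $\textup{C}_{i,k-1}^{0}\subseteq\textup{C}_{i,k-1}^{F}$, and a strict best response is trivially a best response. The inclusion $\textup{H}_{j}(\textup{F}_{i,k}^{0},\Upsilon^{n})\subseteq\textup{H}_{j}(\textup{F}_{i,k},\Upsilon^{n})$ follows. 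For the reverse, pick $h\in\textup{H}_{j}(\textup{F}_{i,k},\Upsilon^{n})$ reached by some $(s_{m},t_{m})_{m\neq j}$ with $s_{m}\in\textup{F}_{m,k}(\Upsilon^{n},t_{m})$, and reproduce the construction of (i) \emph{internally within} $\Upsilon^{n}$ to obtain, for each such $m$, a type $\tilde{t}_{m}\in T_{m}(\Upsilon^{n})$ with $s_{m}\in\textup{F}_{m,k}^{0}(\Upsilon^{n},\tilde{t}_{m})$; this is feasible because $\Upsilon^{n}$ itself contains the requisite tie-breaking types at $q_{m}=n$. Since strategies are unchanged, $h$ is still reached by this strictly rationalizable profile.

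The main difficulty lies in verifying that $\mu_{i}^{k,n}$ is a bona fide CPS: the lift must respect chain-rule updates whenever possible, its marginal on $\Theta_{0}^{n}\times T_{-i}(\Upsilon^{n})$ at $h^{0}$ must coincide with $\tau_{i}(\pi_{i}(t_{i}^{k,n}))$, and off-path beliefs (at histories outside $\textup{H}_{i}(\textup{F}_{-i,k-1}^{0},\Upsilon^{n})$) must be chosen so as not to obstruct convergence to $\mu_{i}$. The finite-conjecture property of $\mu_{i}$ is crucial here, since it allows the point-wise replacement described above to produce a finite CPS with controlled support. The threshold $n_{k}^{i,1}$ in (i) arises as the maximum of the thresholds $n_{k-1}^{j,1}(s_{j},t_{j})$ delivered by the inductive hypothesis across the (finitely many) relevant opponent pairs, while $n_{k}^{i,2}$ in (ii) is determined similarly through the cross-application of the (i)-style construction internally to $\Upsilon^{n}$.
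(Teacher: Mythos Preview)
Your overall architecture matches the paper's, but there is a genuine gap in the construction of the lifted conjecture for part~$(i)$. The conjecture $\mu_{i}\in\textup{C}_{i,k-1}^{F}(\Upsilon,t_{i})$ is \emph{stratified}: at a history $h\in\textup{H}_{i}(\textup{F}_{-i,\ell},\Upsilon)\setminus\textup{H}_{i}(\textup{F}_{-i,\ell+1},\Upsilon)$ with $\ell<k-1$, the support of $\mu_{i}(h)$ lies in the graph of $\textup{F}_{-i,\ell}(\Upsilon,\cdot)$, not of $\textup{F}_{-i,k-1}(\Upsilon,\cdot)$. Your lift replaces each atom $(s_{-i},\theta_{0},t_{-i})$ by $(s_{-i},\theta_{0},\tilde{t}_{-i}^{k-1,n}(s_{-i},t_{-i}))$, but at such an $h$ the pairs $(s_{j},t_{j})$ need not satisfy $s_{j}\in\textup{F}_{j,k-1}(\Upsilon,t_{j})$, so $\tilde{t}_{j}^{k-1,n}(s_{j},t_{j})$ is simply undefined. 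Saying that ``off-path beliefs must be chosen so as not to obstruct convergence'' does not repair this: the lifted conjecture must lie in $\textup{C}_{i,k-1}^{0}(\Upsilon^{n},t_{i}^{k,n})=\bigcap_{\ell\leq k-1}\textup{C}_{i,\ell}^{0}$, which forces, at each such $h$, probability one on the graph of $\textup{F}_{-i,\ell}^{0}(\Upsilon^{n},\cdot)$. The paper fixes this by invoking the induction hypothesis at \emph{every} level $0,\dots,k-1$, lifting via $t_{-i}^{k-\ell,n}(s_{-i},t_{-i})$ on the stratum $\textup{H}_{i}(\textup{F}_{-i,k-\ell},\Upsilon)\setminus\textup{H}_{i}(\textup{F}_{-i,k-\ell+1},\Upsilon)$, and then checking carefully that the pieces glue into a CPS.

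For part~$(ii)$ your route differs from the paper's, and as written it is not closed. To show $\textup{H}_{j}(\textup{F}_{i,k},\Upsilon^{n})\subseteq\textup{H}_{j}(\textup{F}_{i,k}^{0},\Upsilon^{n})$ you propose to rerun the construction of~$(i)$ ``internally within $\Upsilon^{n}$'' starting from $t_{m}\in T_{m}(\Theta^{n})$ and $s_{m}\in\textup{F}_{m,k}(\Upsilon^{n},t_{m})$. But the inductive hypothesis of~$(i)$ only covers types in $T_{m}(\Theta)$, not $T_{m}(\Theta^{n})$; making your argument rigorous would require strengthening the statement you are inducting on. The paper avoids this by passing through the \emph{unperturbed} structure: it proves the cycle $\textup{H}_{j}(\textup{F}_{i,k}^{0},\Upsilon^{n})\subseteq\textup{H}_{j}(\textup{F}_{i,k},\Upsilon^{n})\subseteq\textup{H}_{j}(\textup{F}_{i,k},\Upsilon)\subseteq\textup{H}_{j}(\textup{F}_{i,k}^{0},\Upsilon^{n})$, where the middle inclusion is supplied by a separate partial upper-hemicontinuity lemma comparing $\Upsilon^{n}$ to $\Upsilon$, and the last inclusion uses~$(i)$ exactly as stated (together with denseness of finite types). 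As a minor point, the existence of a \emph{finite} justifying conjecture is not automatic from $t_{i}$ being finite; the paper isolates this as an auxiliary lemma.
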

\begin{proof}
We proceed by induction on $k$. The initial case $(k=0)$ hold trivially for both claims, so we can focus on the proofs of the inductive step. Suppose that $k\geq 0$ is such that the claims hold; we verify next that they also hold for $k+1$. We prove each claim separately:

\vspace{0.3cm}

\noindent\textsc{Claim $(i)$.} Fix player $i$, finite type $t_{i}$ strategy $s_{i}\in\textup{F}_{i,k+1}(\Upsilon,t_{i})$ and finite conjecture $\mu_{i}$ that justifies the inclusion of $s_{i}$ in $\textup{F}_{i,k+1}(\Upsilon,t_{i})$ (whose existence is guaranteed by Lemma \ref{lemma:basiclemma} below). Next we will base on each $\mu_{i}$ to construct a conditional probability system for $S_{-i}\times\Theta_{0}^{n}\times T_{-i}(\Theta^{n})$, $\mu_{i}^{n}$, that will ensure that $s_{i}$ is in $\textup{F}_{i,k+1}^{0}(\Upsilon^{n},t_{i}^{k+1,n})$ for some finite type $t_{i}^{k+1,n}\in T_{i}(\Theta^{n})$. To this end set first $X_{i}^{-1}:=\emptyset$ and next, for each $\ell=0,\dots,k-1$, $X_{i}^{\ell}:=\textup{H}_{i}(\textup{F}_{-i,k-\ell},\Upsilon)$. Now, fix $\ell=0,\dots,k-1$ and history $h\in X_{i}^{\ell}\setminus X_{i}^{\ell-1}$. We know from the induction hypothesis that $(a)$ there exists some $n_{k-\ell}^{i,2}\in\mathds{N}$ such that $\textup{H}_{i}(\textup{F}_{-i,k-\ell},\Upsilon^{n})=\textup{H}_{i}(\textup{F}_{-i,k-\ell}^{0},\Upsilon^{n})$ for every $n\geq n_{k-\ell}^{i,2}$ and that $(b)$ for any pair $(s_{-i},t_{-i})$ in the support of the marginal of $\mu_{i}(h)$ on $S_{-i}\times T_{-i}(\Theta)$ there exists a sequence of finite types $(t_{-i}^{k-\ell,n}(s_{-i},t_{-i}))_{n\geq n_{k-\ell}^{i,1}(s_{-i},t_{-i})}$ converging to $t_{-i}$ and satisfying the conditions in claim $(i)$ for $k-\ell$. Then, set (remember that the support of $\mu(h)$ is finite):
\[
n_{h}^{i,1}:=\textup{max}\left\{\textup{max}\left\{\left.n_{\ell}^{i,1}(s_{-i},t_{-i})\right|(s_{-i},t_{-i})\in\textup{supp }\mu(h)\right\},n_{\ell}^{i,2}\right\}
\]
and define, for each $n\geq n_{h}^{i,1}$,
\begin{align*}
&\mu_{i}^{n}(h)[E]:=\mu_{i}(h)\left[\left\{(s_{-i},\theta_{0},t_{-i})\in S_{-i}\times\Theta_{0}\times T_{-i}(\Theta)\left|
\begin{tabular}{r l}
\multicolumn{2}{l}{$(s_{-i},\theta_{0},t_{-i}^{k-\ell,n}(s_{-i},t_{-i}))\in E$}
\end{tabular}
\right.\right\}\right].
\end{align*}
Finiteness of $\mu_{i}$ guarantees that $\mu_{i}(h)$ is a well-defined probability measure on $S_{-i}\times\Theta_{0}^{n}\times T_{-i}(\Theta^{n})$. Set then $n_{k+1}^{i,1}:=\textup{max}\{n_{h}^{i,1}|h\in X_{i}^{\ell}\setminus X_{i}^{\ell-1},\ell=0,\dots,k-1\}$.

Notice now that the following three properties hold:\footnote{Remember that $\textup{H}_{i}(\mu_{i})$ was defined in Section\ref{subsubsection:histories} as the set of histories in which conjecture $\mu_{i}$ updates beliefs from scratch.}
\begin{itemize}

\item[(1)]\textit{For every pair of histories $h\in\textup{H}_{i}(\mu_{i})\cap\textup{H}_{i}(\textup{F}_{-i,k-\ell},\Upsilon)$ and $h'\in H_{i}$, if the marginal on $S_{-i}$ of $\mu_{i}(h)$ assigns positive probability to $S_{-i}(h')$, then $h'\in\textup{H}_{i}(\textup{F}_{-i,k-\ell}^{0},\Upsilon^{n})$}. To see this fix history $h\in \textup{H}_{i}(\mu_{i})\cap\textup{H}_{i}(\textup{F}_{-i,k-\ell},\Upsilon)$ and notice first that:
\begin{align*}
\mu_{i}^{n}(h)&\left[\Theta_{0}^{n}\times\textup{Graph}\left(\textup{F}_{-i,k-\ell}^{0}(\Upsilon^{n},\,\cdot\,)\right)\right]=\\[1ex]
&=\mu_{i}(h)\left[ \Theta_{0}\times\left\{(s_{-i},t_{-i})\in S_{-i}\times T_{-i}(\Theta)\left|s_{-i}\in\textup{F}_{-i,k-\ell}^{0}(\Upsilon^{n},t_{i}^{n}(s_{-i},t_{-i}))\right\}\right.\right]\\[1ex]
&=\mu_{i}(h)\left[\Theta_{0}\times\textup{Graph}\left(\textup{F}_{-i,k-\ell}(\Upsilon,\,\cdot\,)\right)\right]=1.
\end{align*}
Thus, for every history $h'\in H_{i}$ such that the marginal of $\mu_{i}^{n}(h)$ on $S_{-i}$ assigns positive probability to $S_{-i}(h')$ we necessarily have that:
\[
h\in\textup{H}_{i}(\textup{F}_{-i,k-\ell},\Upsilon^{n})
\]
and thus the claim must hold.

\item[(2)]\textit{For every history $h'\in\textup{H}_{i}(\textup{F}_{-i,k-\ell},\Upsilon)$ there exists some $h\in\textup{H}_{i}(\mu_{i})\cap\textup{H}_{i}(\textup{F}_{-i,k-\ell},\Upsilon)$ such that the marginal on $S_{-i}$ of $\mu_{i}^{n}(h)$ assigns positive probability to $S_{-i}(h')$.} To see this, fix history $h'\in\textup{H}_{i}(\textup{F}_{-i,k-\ell},\Upsilon)$. If $h'\in\textup{H}_{i}(\mu_{i})$ then there is nothing needed to be proved. If $h'\notin\textup{H}_{i}(\mu_{i})$ then we know that there exists some $h\in\textup{H}_{i}(\mu_{i})$ such that $h\prec h'$ and the marginal of $\mu_{i}(h)$ on $S_{-i}$ puts positive probability on $S_{-i}(h')$. Obviously, since $h\prec h'$ and $h'\in\textup{H}_{i}(\textup{F}_{-i,k-\ell},\Upsilon)$ we conclude that, indeed, $h\in\textup{H}_{i}(\mu_{i})\cap\textup{H}_{i}(\textup{F}_{-i,k-\ell},\Upsilon)$. The fact that the marginals on $S_{-i}$ of $\mu_{i}(h)$ and $\mu_{i}^{n}(h)$ coincide ensures that the claim is correct.

\item[(3)] \textit{For every pair of histories $h\in\textup{H}_{i}(\mu_{i})\cap\textup{H}_{i}(\textup{F}_{-i,k-\ell},\Upsilon)$ and $h'\notin\textup{H}_{i}(\textup{F}_{-i,k-\ell},\Upsilon)$ the marginal of $\mu_{i}(h)$ on assigns zero probability to $S_{-i}(h')$.} To see it remember that we know from the induction hypothesis that $\textup{H}_{i}(\textup{F}_{-i,k-\ell},\Upsilon)\subseteq\textup{H}_{i}(\textup{F}_{-i,k-\ell}^{0},\Upsilon^{n})$. Then, it follows from property (1) above that for every $h\in\textup{H}_{i}(\textup{F}_{-i,k-\ell},\Upsilon)$, the marginal of $\mu_{i}^{n}(h)$ on $S_{-i}$ assigns zero probability to $S_{-i}(h')$ for every history $h'\notin\textup{H}_{i}(\textup{F}_{-i,k-\ell},\Upsilon)$.

\end{itemize}
We already defined $\mu_{i}^{n}(h)$ for histories in $h\in\textup{H}_{i}(\mu_{i})\cap\textup{H}_{i}(\textup{F}_{-i,k-\ell},\Upsilon)$. Properties $(2)$ and $(3)$ above allows for defining $\mu_{i}^{n}(h)$ \textit{exclusively} for every history $h\in\textup{H}_{i}(\textup{F}_{-i,k-\ell},\Upsilon)\setminus\textup{H}_{i}(\textup{F}_{-i,\ell+1},\Upsilon)$ using conditional probability. Clearly, $\mu_{i}^{n}$ is a well-defined conjecture. Set now type $t_{i}^{n}:=(\theta_{i}^{n}(s_{i},\theta_{i}(t_{i})),\pi_{i}^{n})$ (notice the use of comment (A) in Section \ref{subsubsection:perturbation1}), where:
\[
\pi_{i}^{n}:=\tau_{i}^{-1}\left(\textup{marg}_{\Theta_{0}^{n}\times T_{-i}(\Theta^{n})}\mu_{i}^{n}(h^{0})\right).
\]
Then, we have that $t_{i}^{n}$ is a finite element of $T_{i}(\Theta^{n})$, and sequence $(t_{i}^{n})_{n\geq n_{k+1}^{i,1}}$ converges to $t_{i}$. Notice that, by construction, $\mu_{i}^{n}$ is an element of $\textup{C}_{i,k}^{0}(\Upsilon^{n},t_{i}^{n})$, the induction hypotheses guarantee that at each history $h\in\textup{H}_{i}(\textup{F}_{-i,k}^{0},\Upsilon^{n})$ belief $\mu_{i}^{n}(h)$ assigns probability 1 to the graph of $\textup{F}_{-i,k}^{0}(\Upsilon^{n},\,\cdot\,)$ and that, for each $\ell=1,\dots k-1$, at each history $h\in\textup{H}_{i}(\textup{F}_{-i,k-\ell}^{0},\Upsilon^{n})\setminus\textup{H}_{i}(\textup{F}_{-i,k-\ell+1}^{0},\Upsilon^{n})$ belief $\mu_{i}^{n}(h)$ assigns probability 1 to the graph of $\textup{F}_{-i,k-\ell}^{0}(\Upsilon^{n},\,\cdot\,)$. In addition, it is evident that $r_{i}(\theta_{i}(t_{i}^{n}),\mu_{i}^{n})=[s_{i}]$. Hence, we conclude that $s_{i}\in\textup{F}_{i,k+1}^{0}(\Upsilon^{n},t_{i}^{n})$.
\hfill$\bigstar$

\vspace{0.3cm}

\noindent\textsc{Claim $(ii)$.} Fix player $i$ and set $\bar{n}_{k+1}^{i,1}:=\textup{max}\{n_{k}^{j,2}|\ell=1,\dots,k\textup{  and  }j\neq i\}$ where each $n_{k}^{j,2}$ verifies part $(ii)$ of the induction hypothesis for player $j\neq i$. Then, pick type $t_{i}$, strategy $s_{i}\in\textup{F}_{i,k+1}^{0}(\Upsilon^{n},t_{i})$ and conjecture $\mu_{i}$ that justifies the inclusion of $s_{i}$ in $\textup{F}_{i,k+1}^{0}(\Upsilon^{n},t_{i})$. We know from part $(ii)$ of the induction hypothesis that $\textup{F}_{-i,k}^{0}(\Upsilon^{n},t_{-i})\subseteq\textup{F}_{-i,k}(\Upsilon,t_{-i})$ and $\textup{H}_{i}(\textup{F}_{-i,k}^{0},\Upsilon^{n})=\textup{H}_{i}(\textup{F}_{-i,k},\Upsilon^{n})$ for every $t_{-i}\in T_{-i}(\Theta^{n})$. Hence, it clearly follow that $\mu_{i}$ justifies the inclusion of $s_{i}$ in $\textup{F}_{i,k+1}(\Upsilon^{n},t_{i})$. Now, we divide the proof of the second claim in the following three-part cycle:

\begin{itemize}

\item $\textup{H}_{j}(\textup{F}_{i,k+1}^{0},\Upsilon^{n})\subseteq\textup{H}_{j}(\textup{F}_{i,k+1},\Upsilon^{n})$ for any $j\neq i$ and any $n\geq\bar{n}_{k+1}^{i,1}$. This follows immediately from part $(i)$, just proved above.

\item There exists some $\bar{n}_{k+1}^{i,2}\in\mathds{N}$ such that $\textup{H}_{j}(\textup{F}_{i,k+1},\Upsilon^{n})\subseteq\textup{H}_{j}(\textup{F}_{i,k+1},\Upsilon)$ for any player $j\neq i$ and any $n\geq\bar{m}_{k+1}$. Pick $n_{k+1}^{i}$; according to part $(iii)$ of Lemma \ref{lemma:firstperturbation1} we have that $\textup{H}_{j}(\textup{F}_{i,k+1},\Upsilon^{n})\subseteq\textup{H}_{j}(\textup{F}_{i,k+1},\Upsilon)$ for any $n\geq \bar{n}_{k+1}^{2,i}$.

\item There exists some $\bar{n}_{k+1}^{i,3}\in\mathds{N}$ such that $\textup{H}_{j}(\textup{F}_{i,k+1},\Upsilon)\subseteq\textup{H}_{j}(\textup{F}_{i,k+1}^{0},\Upsilon^{n})$ for any $j\neq i$ and any $n\geq\bar{n}_{k+1}^{i,3}$. Let $T_{i}^{0}(\Theta)$ denote the set of finite types in $T_{i}(\Theta)$. Because of upper hemicontinuity of $\textup{F}_{i,k+1}(\Upsilon,\,\cdot\,)$ we know that every $t_{i}\in  T_{i}^{0}(\Theta)$ has some open neighborhood $U_{i}(t_{i})$ such that:
\[
\bigcup_{t_{i}'\in U_{i}(t_{i})}\{t_{i}'\}\times\textup{F}_{i,k+1}(\Upsilon,t_{i}')\subseteq U_{i}(t_{i})\times\textup{F}_{i,k+1}(\Upsilon,t_{i}).
\]
Then, because of denseness of $T_{i}^{0}(d_{i})$ we have that:
\begin{align*}
\textup{Graph}\left(\textup{F}_{i,k+1}(\Upsilon,\,\cdot\,)\right)&= \bigcup_{t_{i}\in T_{i}^{0}(\Theta)}\bigcup_{t_{i}'\in U_{i}(t_{i})}\{t_{i}'\}\times\textup{F}_{i,k+1}(\Upsilon,t_{i}')\\[1ex]
&=\bigcup_{t_{i}\in T_{i}^{0}(\Theta)}U_{i}(t_{i})\times\textup{F}_{i,k+1}(\Upsilon,t_{i})
\end{align*}
Thus, $h\in\textup{H}_{j}(\textup{F}_{i,k+1},\Upsilon)$ if and only if there exists some finite $t_{i}$ such that $s_{i}\in S_{i}(h)\cap\textup{F}_{i,k+1}(\Upsilon,t_{i})$. In consequence, by finiteness of $H_{i}$ and claim $(i)$ there exists some $\bar{n}_{k+1}^{i,3}$ such that, for every $n\geq\bar{n}_{k+1}^{i,3}$, if $h\in\textup{H}_{j}(\textup{F}_{i,k+1},\Upsilon)$, then $h\in\textup{H}_{j}(\textup{F}_{i,k+1}^{0},\Upsilon^{n})$. 
\end{itemize}
Thus, by setting $n_{k+1}^{i,2}:=\textup{max}\{\bar{n}_{k+1}^{i,1},\bar{n}_{k+1}^{i,2},\bar{n}_{k+1}^{i,3}\}$ we conclude that for any $j\neq i$ and any $n\geq\bar{n}_{k+1}^{i,2}$,
\[
\textup{H}_{j}(\textup{F}_{i,k+1},\Upsilon)=\textup{H}_{j}(\textup{F}_{i,k+1},\Upsilon^{n})=\textup{H}_{j}(\textup{F}_{i,k+1}^{0},\Upsilon^{n}),
\]
and hence, the proof is complete.
\end{proof}


\subsection{Proof of Lemma \ref{lemma:secondperturbation}}
\label{subsection:expanding}

\begin{lemma}
\label{lemma:secondperturbation}
Let $\Gamma$ be an extensive form. Then, for any $k\in\mathds{N}$, any player $i$, any subjective payoff structure $d_{i}$, any finite type $t_{i}\in T_{i}(d_{i})$ and any strategy $s_{i}\in\textup{F}_{i,k}^{0}(d_{i},t_{i})$ there exists a sequence of finite types $(t_{i}^{n})_{n\in\mathds{N}}$ in $T_{i,k}(d_{i},s_{i})$ converging to $t_{i}$.
\end{lemma}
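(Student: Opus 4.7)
The plan is to start from any finite conjecture $\mu_i$ witnessing $s_i \in \textup{F}_{i,k}^{0}(d_i, t_i)$ (the existence of such a finite $\mu_i$ is a standard reduction when $t_i$ itself is finite, of the kind invoked at the start of the proof of Lemma \ref{lemma:firstperturbation2}) and to convex-combine it at the initial history with a carefully chosen conditional probability system $\nu_i$ on $S_{-i} \times (d_{i,1})_{0} \times T_{-i}(d_{-i|i})$ engineered to deliver the positivity clause (ii) of $T_{i,k}(d_i, s_i)$. Concretely, for each $h \in \textup{H}_{i}(\textup{F}_{-i,k-1}^{0}, d_{-i|i})$ I fix a witness $(s_{-i}^h, \theta_0^h, t_{-i}^h)$ with $s_{-i}^h \in S_{-i}(h) \cap \textup{F}_{-i,k-1}^{0}(d_{-i|i}, t_{-i}^h)$ and $t_{-i}^h$ finite: the witness is drawn from the (finite) support of $\mu_i(h)$ whenever $h \in \textup{H}_i(\mu_i)$, and otherwise obtained by an inductive density argument paralleling Lemma \ref{lemma:firstperturbation2}. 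I then let $\nu_i(h^0)$ be uniform on this finite family, extend $\nu_i$ by Bayes' rule wherever the conditioning event carries positive $\nu_i(h^0)$-mass, and set $\nu_i(h) := \mu_i(h)$ on the residual histories; this keeps $\nu_i(h)$ concentrated on $(d_{i,1})_{0} \times \textup{Graph}(\textup{F}_{-i,k-1}^{0}(d_{-i|i},\,\cdot\,))$ at every $h \in \textup{H}_{i}(\textup{F}_{-i,k-1}^{0}, d_{-i|i})$. The perturbed conjecture is $\mu_i^n$ defined by $\mu_i^n(h^0) := (1 - 1/n)\mu_i(h^0) + (1/n)\nu_i(h^0)$, extended by Bayes' rule whenever possible from $\mu_i^n(h^0)$ and equal to $\mu_i(h)$ elsewhere.

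Since $\mu_i(h^0)$ and $\nu_i(h^0)$ both have finite support, the induced initial type $t_i^n$ is finite and converges to $t_i$. Condition (ii) of $T_{i,k}(d_i, s_i)$ follows at once from the uniform seeding of $\nu_i(h^0)$, while the graph-concentration transmitted by Bayes updating yields $\mu_i^n \in \textup{C}_{i,k-1}^{0}(d_i, t_i^n)$.

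The main obstacle is verifying $r_i(\theta_i(t_i^n), \mu_i^n) = [s_i]$ for all sufficiently large $n$. Strict optimality is an open condition on the conditional belief at each $h \in H_i(s_i)$, and since $H_i(s_i)$ is finite it is enough to secure it history-by-history and then take a common threshold. At any $h \in H_i(s_i)$ already reached with positive probability by $\mu_i(h^0)$, the Bayes weight of $\nu_i$ inside $\mu_i^n(h)$ is of order $1/n$, and strict optimality of $s_i$ under $\mu_i(h)$ transfers to $\mu_i^n(h)$ for large $n$. The delicate sub-case is $h \in H_i(s_i) \cap \textup{H}_i(\mu_i)$ that becomes reached under $\mu_i^n(h^0)$ through $\nu_i$; there $\mu_i^n(h)$ is essentially $\nu_i(h)$, and I would tailor the construction so that the restriction of $\nu_i(h^0)$ to $S_{-i}(h) \times (d_{i,1})_{0} \times T_{-i}(d_{-i|i})$ is proportional to $\mu_i(h)$ itself (which already certifies strict optimality of $s_i$ at $h$ by hypothesis), so that Bayes updating at $h$ recovers $\mu_i^n(h) = \mu_i(h)$ and strict optimality persists. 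The only subtlety in simultaneously enforcing these proportionality conditions across the different targeted histories is that their $S_{-i}(h)$ fibers may overlap along the $\prec$ ordering; this can be resolved by indexing the mixture defining $\nu_i(h^0)$ by the $\prec$-minimal elements of $\textup{H}_i(\mu_i) \cap \textup{H}_{i}(\textup{F}_{-i,k-1}^{0}, d_{-i|i})$ and propagating the seeding downstream via $\mu_i$'s own Bayes structure.
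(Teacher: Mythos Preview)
Your strategy coincides with the paper's: mix $\mu_i(h^0)$ with a $1/n$-weighted seed that hits every $(k-1)$-reachable history, take the induced initial type, and verify that the perturbed conjecture still certifies $s_i$ strictly. The paper's seed is the uniform average of $\mu_i(h)$ over $h\in H_i^0(\mu_i):=H_i(\mu_i)\cap\textup{H}_i(\textup{F}_{-i,k-1}^0,d_{-i|i})$, i.e.
\[
\mu_i^m(h^0)=\Bigl(1-\tfrac{1}{m}\Bigr)\mu_i(h^0)+\tfrac{1}{m}\cdot\frac{1}{|H_i^0(\mu_i)|}\sum_{h\in H_i^0(\mu_i)}\mu_i(h),
\]
which is precisely the ``make $\nu_i$ proportional to $\mu_i(h)$'' fix you arrive at after recognizing that single witness points do not recover strict optimality under conditioning.

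Two details need correction. First, ``the $\prec$-minimal elements of $H_i(\mu_i)\cap\textup{H}_i(\textup{F}_{-i,k-1}^0,d_{-i|i})$'' is the wrong index set: $h^0$ always belongs to that intersection (it has no predecessors, hence lies in $H_i(\mu_i)$; and $S_{-i}(h^0)=S_{-i}$) and is its unique $\prec$-minimal element, so indexing by $\prec$-minimals just reproduces $\mu_i(h^0)$ and defeats the purpose. You must seed with \emph{all} of $H_i^0(\mu_i)$; the non-interference across seeds that you worry about is exactly the defining property of $H_i(\mu_i)$: for $h'\prec h$ both in $H_i(\mu_i)$, $\textup{marg}_{S_{-i}}\mu_i(h')[S_{-i}(h)]=0$. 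Second, your detour through witnesses at histories $h\notin H_i(\mu_i)$ (``an inductive density argument paralleling Lemma~\ref{lemma:firstperturbation2}'') is unnecessary and not well specified: any $(k-1)$-reachable $h\notin H_i(\mu_i)$ is, by definition of $H_i(\mu_i)$, reached with positive probability from some earlier $h'\in H_i^0(\mu_i)$, so seeding with $\mu_i(h')$ already covers it. Finally, once the seed is $\mu_i(\cdot)$ averaged over $H_i^0(\mu_i)$, the paper dispatches your ``delicate sub-case'' in one line: $(\mu_i^m)_{m}$ converges to $\mu_i$ as a conditional probability system, and strict best response is an open condition, so $r_i(\theta_i(t_i),\mu_i^m)=[s_i]$ for all large $m$ without any history-by-history case analysis.
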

\begin{proof}
Fix $k\in\mathds{N}$, player $i$, subjective payoff structure $d_{i}$, type $t_{i}\in T_{i}(d_{i})$ and strategy $s_{i}\in\textup{F}_{i,k}^{0}(d_{i},t_{i})$, and pick finite conjecture $\mu_{i}$ that justifies the inclusion of $s_{i}$ in $\textup{F}_{i,k}^{0}(d_{i},t_{i})$ (the existence of such a conjecture is guaranteed by Lemma \ref{lemma:basiclemma} below). Based on $H_{i}(\mu_{i})$ we define next:\footnote{Remember that $\textup{H}_{i}(\mu_{i})$ was defined in Section\ref{subsubsection:histories} as the set of histories in which conjecture $\mu_{i}$ updates beliefs from scratch.}
\[
H_{i}^{0}(\mu_{i}):=H_{i}(\mu_{i})\cap\textup{H}_{i}(\textup{F}_{-i,k-1}^{0},d_{-i|i}).
\]
The interest of $H_{i}^{0}(\mu_{i})$ is that it will eventually allow for recovering conjecture $\mu_{i}$ at those histories consistent $\textup{F}_{-i,k-1}^{0}(d_{-i|i},\,\cdot\,)$. For each history $h\in H_{i}^{0}(\mu_{i})$ we are going to find a sequence of finitely-supported beliefs $
(\hat{\mu}_{i}^{m}(h))_{m\in\mathds{N}}\subseteq\Delta(S_{-i}(h)\times d_{i,1}\times T_{-i}(d_{-i|i}))$ that converges to $\mu_{i}(h)$ and satisfies certain desirable properties described below. For each $m\in\mathds{N}$ define probability measure $\mu^{m}(h^{0})$ on $\Delta(S_{-i}\times d_{i,1}\times T_{-i}(d_{-i|i}))$ by setting:
\[
\mu_{i}^{m}(h^{0})=\left(1-\frac{1}{m}\right)\cdot\mu_{i}(h^{0})+\left(\frac{1}{m}\right)\cdot\sum_{h\in H_{i}^{0}(\mu_{i})}\left(
\frac{1}{|H_{i}^{0}(\mu_{i})|}\right)\cdot\mu_{i}(h).
\]
Clearly, $\mu_{i}^{m}(h^{0})$ is well-defined and finitely-supported. Now, for each $m\in\mathds{N}$ denote:
\[
H_{i}^{m}=\left\{h\in H_{i}\cup\{h^{0}\}\left|\left(\textup{marg}_{S_{-i}}\mu_{i}^{m}(h^{0})\right)[S_{-i}(h)]>0\right.\right\},
\]
and for any $h\in H_{i}^{m}$ define belief $\mu_{i}^{m}(h)\in\Delta(S_{-i}\times d_{i,1}\times T_{-i}(d_{-i|i}))$ by extending $\mu_{i}^{m}(h^{0})$ via conditional probability. Finally, let conjecture $\mu_{i}^{m}\in\Delta^{H_{i}\cup\{h^{0}\}}(S_{-i}\times d_{i,1}\times T_{-i}(d_{-i|i}))$ and type $\hat{t}_{i}^{m}=(\theta_{i}(t_{i}),\pi_{i}^{m})$ be respectively defined as follows:
\begin{align*}
\mu_{i}^{m}(h):=\left\{
\begin{tabular}{l l}
$\mu_{i}^{m}(h)$&$\textup{for every  }h\in H_{i}^{m}$,\\
$\mu_{i}(h)$&$\textup{otherwise,}$
\end{tabular}
\right.\textup{\hspace{0.05cm}and\hspace{0.2cm}}\pi_{i}^{m}:=\tau_{i}^{-1}\left(\textup{marg}_{(d_{i,1})_{0}\times T_{-i}(d_{-i|i})}\mu_{i}^{m}(h^{0})\right).
\end{align*}
We claim now that the following four properties are satisfied:

\begin{itemize}

\item[$(a)$] $(\hat{t}_{i}^{m})_{m\in\mathds{N}}$ is a sequence of finite types that converges to $t_{i}$. For finiteness simply notice that $\mu_{i}^{m}(h^{0})$ is finitely-supported for every $m\in\mathds{N}$. Convergence follows from $(\mu_{i}^{m}(h^{0}))_{m\in\mathds{N}}$ converging to $\mu_{i}(h^{0})$ and marginalization being continuous.

\item[$(b)$] For any $m\in\mathds{N}$ the marginal on $S_{-i}$ of $\mu_{i}^{m}(h^{0})$ assigns positive probability to $S_{-i}(h)$ for every history $h\in\textup{H}_{i}(\textup{F}_{-i,k-1}^{0},d_{i})$. Fix such $h$. If $h\in H_{i}^{0}(\mu_{i})$, then:
\begin{align*}
(\textup{marg}_{S_{-i}}\mu_{i}^{m}(h^{0}))\left[S_{-i}(h)\right]&\geq \left(\frac{1}{m}\right)\cdot\left(\frac{1}{|H_{i}^{0}(\mu_{i})|}\right)\cdot(\textup{marg}_{S_{-i}}\mu_{i}(h))\left[S_{-i}(h)\right]\\[1ex]
&=\left(\frac{1}{m}\right)\cdot\left(\frac{1}{|H_{i}^{0}(\mu_{i})|}\right)>0.
\end{align*}
If $h\notin H_{i}^{0}(\mu_{i})$, then there must exist some history $h'\prec h$ such that $h'\in H_{i}^{0}(\mu_{i})$ and $(\textup{marg}_{S_{-i}}\mu_{i}^{m}(h'))\left[S_{-i}(h)\right]>0$, and thus:
\begin{align*}
(\textup{marg}_{S_{-i}}\mu_{i}^{m}(h^{0}))\left[S_{-i}(h)\right]&\geq\left(\frac{1}{m}\right)\cdot\left(\frac{1}{|H_{i}^{0}(\mu_{i})|}\right)\cdot(\textup{marg}_{S_{-i}}\mu_{i}(h'))\left[S_{-i}(h)\right]>0.
\end{align*}

\item[$(c)$] For any $m\in\mathds{N}$, $\mu_{i}^{m}$ is a well-defined conditional probability system consistent with $d_{i}$ and $\hat{t}_{i}^{m}$. Consistency with $d_{i}$ and $t_{i}^{m}$ holds by construction; thus, all we need to check is that $\mu_{i}^{m}$ does not violate the chain rule. This is easy to see by simply noticing that for any pair of different histories $h,h'\in H_{i}^{0}(\mu_{i})$ the marginal on $S_{-i}$ of $\mu_{i}(h)$ puts zero probability on $S_{-i}(h')$. This implies that for any $h,h'\in H_{i}^{m}$ there are no inconsistency issues arising from belief update.\footnote{Because if $h,h'\in H_{i}^{m}$ then there exist $\bar{h},\bar{h'}\in H_{i}^{0}(\mu_{i})$ such that $\mu_{i}(\bar{h})$ induces $\mu_{i}^{m}(h)$ and $\mu_{i}(\bar{h}')$ induces $\mu_{i}^{m}(h')$. Since the marginal on $S_{-i}$ of $\mu_{i}(\bar{h})$ (resp. $\mu_{i}(\bar{h}')$) puts zero probability on $S_{-i}(\bar{h'})$ (resp. $S_{-i}(\bar{h})$), it follows that the marginal on $S_{-i}$ of $\mu_{i}^{m}(h)$ (resp. $\mu_{i}^{m}(h')$) puts zero probability on $S_{-i}(h')$ (resp. $S_{-i}(h)$).} Clearly, there are no problems either for any pair $h,h'\notin H_{i}^{m}$ (due to $\mu_{i}$ being a conditional probability system). Since for any $h\in H_{i}^{m}$ the marginal on $S_{-i}$ of $\mu_{i}(h)$ puts zero probability on $S_{-i}(h')$ for any $h'\notin H_{i}^{m}$, it follows that pairs $h\in H_{i}^{m}$ and $h'\notin H_{i}^{m}$ are not problematic either.

In addition it holds by construction that:
\[
\textup{supp }\mu_{i}^{m}(h^{0})\subseteq\textup{Graph}\left(\textup{F}_{-i,k-1}^{0}(d_{-i|i},\,\cdot\,)\right).
\]
Thus we have that if $h\in\textup{H}_{i}(\textup{F}_{-i,k-1}^{0},d_{-i|i})$ then:
\[
\textup{supp }\mu_{i}^{m}(h)\subseteq\textup{supp }\mu_{i}^{m}(h^{0})\subseteq\textup{Graph}\left(\textup{F}_{-i,k-1}^{0}(d_{-i|i},\,\cdot\,)\right).
\]
Hence belief in appropriate behavior is maintained whenever possible to do so. If $h\notin\textup{H}_{i}(\textup{F}_{-i,k-1}^{0},d_{-i|i})$ then:
\[
\textup{supp }\mu_{i}^{m}(h)=\textup{supp }\mu_{i}(h),
\]
what guarantees belief in appropriate continuation behavior.

\item[$(d)$] There exists some $m_{0}\in\mathds{N}$ such that $r_{i}(\theta_{i}(t_{i}),\mu_{i}^{m})=[s_{i}]$ for any $m\geq m_{0}$. This follows trivially from finiteness of $S_{i}$, continuity of conditional expected payoffs and the fact that $(\mu_{i}^{m})_{m\in\mathds{N}}$ converges to $\mu_{i}$. 

\end{itemize}
This way, we conclude that $s_{i}\in\textup{F}_{i,k}^{0}(d_{i},t_{i}^{m})$ for every $m\geq m_{0}$. Hence, if for each $n\in\mathds{N}$ we relabel as $t_{i}^{n}:=\hat{t}_{i}^{n+m_{0}}$, $(t_{i}^{n})_{n\in\mathds{N}}$ is the sequence we are looking for.
\end{proof}


\subsection{Proof of Lemma \ref{lemma:thirdperturbation2}}
\label{subsection:touniqueness}


\subsubsection{Second perturbation of payoff structures}
\label{subsubsection:perturbation2}

Fix some standard payoff structure $\Upsilon$. For each $k\in\mathds{N}\cup\{0\}$, each player $i$ and each let $d_{i}^{k}$ denote the subjective payoff structure constructed recursively by setting:
\begin{itemize}

\item[$(0)$] $d_{i}^{0}(\Theta):=\Upsilon^{*}$, where $\Upsilon^{*}$ is some arbitrary standard payoff structure satisfying richness.

\item[$(1)$] $d_{i}^{1}(\Upsilon)$ is defined by setting $d_{i,1}^{1}(\Upsilon):=\Theta$ and $d_{-i|i}^{1}:=d_{-i}^{0}(\Upsilon)$.

\item[$(k)$] $d_{i}^{k}(\Upsilon)$ is defined by setting $d_{i,1}^{k}(\Upsilon):=\Theta$ and $d_{-i|i}^{k}(\Upsilon):=d_{-i}^{k-1}(\Upsilon)$

\end{itemize}
Obviously, each $d_{i}^{k}(\Upsilon)$ is a subjective payoff structure satisfying higher-order richness, and sequence $(d_{i}^{k}(\Upsilon))_{k\in\mathds{N}}$ converges to $\Upsilon$.


\subsubsection{From strictness to uniqueness: Initial step}
\label{subsubsection:tounique1}

\begin{lemma}
\label{lemma:thirdperturbation1}
Let $\Gamma$ be an extensive form and $\Upsilon$, a standard payoff structure. Then, for any player $i$, any finite type $t_{i}\in T_{i}(\Theta)$, any strategy $s_{i}\in\textup{F}_{i,1}^{0}(\Upsilon,t_{i})$ such that $t_{i}\in T_{i}(\Upsilon,s_{i})$, there exists a finite type $t_{i}^{1}$ such that:
\begin{itemize}

\item[$(i)$] $t_{i}^{1}\in T_{i}(d_{i}^{1}(\Upsilon))$.

\item[$(ii)$] $\theta_{i}(t_{i}^{1})=\theta_{i}(t_{i})$.

\item[$(iii)$] $\textup{F}_{i,2}(d_{i}^{1}(\Upsilon),t_{i}^{1})\subseteq [s_{i}|\Upsilon]_{0}$.

\end{itemize}
\end{lemma}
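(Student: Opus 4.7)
The plan is to lift $t_i$ to a type $t_i^1$ in $T_i(d_i^1(\Upsilon))$ by replacing every opponent type in the support of $t_i$'s belief with a dominance-seed type drawn from the rich structure $\Upsilon^* = d_j^0(\Upsilon)$, and then to exploit the order-one forward-induction restrictions---which under common knowledge of $\Upsilon^*$ single out individual strategies---to force $\textup{F}_{i,2}(d_i^1(\Upsilon), t_i^1) \subseteq \{s_i\}$.

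First I would pick a finite conditional probability system $\mu_i$ witnessing $t_i \in T_{i,1}(\Upsilon, s_i)$: by definition, $\mu_i$ justifies $s_i \in \textup{F}_{i,1}^0(\Upsilon, t_i)$ (so $r_i(\theta_i(t_i), \mu_i) = [s_i]$) and gives strictly positive initial probability to $S_{-i}(h)$ for every $h \in H_i \cup \{h^0\}$. For each opponent $j$ and each strategy $s_j \in S_j$, richness of $\Upsilon^*$ yields a payoff type $\theta_j^*(s_j)$ at which $s_j$ is strictly conditionally dominant; let $t_j^*(s_j) \in T_j(d_j^0(\Upsilon))$ be the dogmatic finite type whose belief hierarchy puts unit mass on common knowledge of that dominance, so that $\textup{F}_{j,1}(d_j^0(\Upsilon), t_j^*(s_j)) = [s_j]$. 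I would then define $\mu_i^1$ by pushing $\mu_i$ forward through the map $(s_{-i}, \theta_0, t_{-i}) \mapsto (s_{-i}, \theta_0, (t_j^*(s_j))_{j \neq i})$, and set
\[
t_i^1 := \bigl(\theta_i(t_i),\; \tau_i^{-1}(\textup{marg}_{(d_{i,1}^1(\Upsilon))_0 \times T_{-i}(d_{-i|i}^1(\Upsilon))}\, \mu_i^1(h^0))\bigr).
\]
Finiteness of $\mu_i$ gives finiteness of $t_i^1$, $\theta_i(t_i^1) = \theta_i(t_i)$ by construction, and the support of the belief hierarchy of $t_i^1$ lies in the required space, so $t_i^1 \in T_i(d_i^1(\Upsilon))$; this yields (i) and (ii).

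For (iii), take any $s_i' \in \textup{F}_{i,2}(d_i^1(\Upsilon), t_i^1)$ with witnessing conjecture $\mu_i'$. The consistency condition at $h^0$ forces $\mu_i'(h^0)$ and $\mu_i^1(h^0)$ to coincide on $\Theta_0 \times T_{-i}(d_{-i|i}^1(\Upsilon))$. Because every type $t_{-i}^*(s_{-i})$ in that marginal support satisfies $\textup{F}_{-i,1}(d_{-i|i}^1(\Upsilon), t_{-i}^*(s_{-i})) = [s_{-i}]$, the forward-induction support requirement in $\textup{C}_{i,1}^F$ forces the conditional support of $\mu_i'(h^0)$ given each such type to lie in $[s_{-i}]$; the full-support property of $\mu_i^1$ across all of $H_i$ then propagates this outcome-equivalence by Bayes' rule to every $\mu_i'(h)$. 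Consequently $\mu_i'$ and $\mu_i^1$ induce identical distributions over opponents' outcomes at every history, and $r_i(\theta_i(t_i), \mu_i') = r_i(\theta_i(t_i), \mu_i) = [s_i]$, placing $s_i' \in [s_i]$.

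The main obstacle will be tightening this inclusion from $[s_i]$ to the strictly smaller $[s_i|\Upsilon]_0 = \{s_i\}$, i.e., ruling out outcome-equivalent deviations at off-path histories $h \in H_i \setminus H_i(s_i)$. My plan is to enrich $\mu_i^1(h^0)$ additively: for each such $h$ and each action $a \in A_i(h) \setminus \{s_i(h)\}$, I would add an arbitrarily small mass on a pair $(\bar s_{-i}^h, t_{-i}^*(\bar s_{-i}^h))$ whose continuation, after Bayes-updating at $h$, makes $s_i(h)$ strictly preferred over $a$; richness of $\Upsilon^*$ supplies such dogmatic types freely. Keeping the total added mass small preserves $r_i(\theta_i(t_i), \mu_i^1) = [s_i]$ at histories in $H_i(s_i)$ while promoting $s_i$ to the unique best response at every $h \in H_i$. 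Under the strengthened $\mu_i^1$, the argument of the previous paragraph forces $s_i'(h) = s_i(h)$ pointwise on $H_i$, establishing (iii).
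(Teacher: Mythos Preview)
Your construction misses an ingredient that the paper's proof relies on: payoff-equivalence of the dominance-seed types from player $i$'s viewpoint. In your pushforward $(s_{-i},\theta_0,t_{-i})\mapsto(s_{-i},\theta_0,(t_j^*(s_j))_{j\neq i})$ you preserve the marginal on $S_{-i}\times\Theta_0$ but not on $\Theta_{-i}$: each original payoff type $\theta_j(t_j)$ is replaced by some $\theta_j^*(s_j)$ depending only on $s_j$. Since $u_i(z,\theta_0,\theta_i,\theta_{-i})$ depends on the whole state, player $i$'s conditional expected payoffs under $\mu_i^1$---and hence under any $\mu_i'$ consistent with $t_i^1$---can differ from those under $\mu_i$. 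The step ``$r_i(\theta_i(t_i),\mu_i')=r_i(\theta_i(t_i),\mu_i)$'' therefore does not follow: you have argued that $\mu_i'$ and $\mu_i^1$ coincide in the payoff-relevant sense, not that $\mu_i^1$ and $\mu_i$ do.

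The paper closes this gap by using the full force of the richness definition. For each pair $(s_j,t_j)$ in the support it selects a dominance type $t_j^0(s_j,t_j)$ whose payoff type is \emph{payoff-equivalent to $\theta_j(t_j)$ for every player other than $j$} (this is exactly what the clause ``$(\theta_0,\theta_j,\theta_{-j})$ and $(\theta_0,\theta_j(\theta_j,s_j),\theta_{-j})$ are payoff-equivalent for every $k\neq j$'' in the richness definition delivers). With that choice the pushforward $\bar\mu_i^1$ and the original $\bar\mu_i$ induce identical marginals on $S_{-i}\times\Theta_0\times\Theta_{-i}$ up to $i$-payoff-equivalence, so player $i$'s conditional expected payoffs coincide at every history and the best-response sets agree. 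Once this is in place, (iii) follows directly from the full-support property of $\bar\mu_i$; the enrichment step you propose in the last paragraph is then unnecessary (and, as stated, is also underspecified: changing $\mu_i^1(h^0)$ changes $t_i^1$, so you would have to re-verify that $r_i(\theta_i(t_i),\mu_i^1)=[s_i]$ after the perturbation and that any $\mu_i'$ consistent with the new type still Bayes-updates at those off-path $h$ to beliefs favouring $s_i(h)$).
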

\begin{proof}
For notational convenience, set $d_{i}^{0}:=d_{i}^{0}(\Theta)$ and $d_{i}^{1}:=d_{i}^{1}(\Theta)$. We are going to show first that for any player $i$, any finite type $t_{i}$ and any strategy $s_{i}$ there exists a finite type $t_{i}^{0}$ such that:
\begin{itemize}

\item[$(i)$] $\textup{F}_{i,1}(d_{i}^{0},t_{i}^{0})=\{s_{i}\}$.

\item[$(ii)$] $\theta_{i}(t_{i}^{0})$ and $\theta_{i}(t_{i})$ are payoff-equivalent for player $i$'s opponents.

\end{itemize}
To see it fix player $i$, finite type $t_{i}$ and strategy $s_{i}$ and pick player $i$'s payoff type $\theta_{i}(t_{i},s_{i})$ that makes $s_{i}$ conditionally dominant and is payoff-equivalent to $\theta_{i}(t_{i})$ for player $i$'s opponents.\footnote{Richness of $\Upsilon^{*}$ allows for such a selection.} Then, set $t_{i}^{0}:=(\theta_{i}(t_{i},s_{i}),\pi_{i}(t_{i}))$. Obviously, $t_{i}^{0}$ is finite. Pick now arbitrary conjecture $\mu_{i}$ consistent with subjective model $(d_{i}^{0},t_{i}^{0})$. By definition, for any history $h\in H_{i}\cup\{h^{0}\}$ and any strategy $s_{i}'$ such that $s_{i}'(h)\neq s_{i}(h)$:
\[
U_{i}(s_{i},\mu_{i}|\theta_{i}(t_{i}^{0}),h)>U_{i}(s'_{i},\mu_{i}|\theta_{i}(t_{i}^{0}),h).
\]
Hence, $r_{i}(\theta_{i}(t_{i}^{0}),\mu_{i})=\{s_{i}\}$ and thus, we conclude that $\textup{F}_{i,1}(d_{i}^{0},t_{i}^{0})=\{s_{i}\}$. In addition, as said above, $\theta_{i}(t_{i}^{0})$ is by construction payoff-equivalent to $\theta_{i}(t_{i})$ for player $i$'s opponents.

Now, to prove the claim of the lemma fix player $i$, finite type $t_{i}\in T_{i}(\Theta)$ and strategy $s_{i}$ such that $s_{i}\in\textup{F}_{i,1}^{0}(\Upsilon,t_{i})$ and $t_{i}\in T_{i,1}(\Upsilon,s_{i})$. We know then that there exists some conjecture $\bar{\mu}_{i}$ consistent with $\Theta$ and $t_{i}$ and such that: $(a)$ $r_{i}(\theta_{i}(t_{i}),\bar{\mu}_{i})=\{\bar{s}_{i}\}$ and $(b)$ $(\textup{marg}_{S_{-i}}\bar{\mu}_{i}(h^{0}))[S_{-i}(h)]>0$ for every history $h\in\textup{H}_{i}(\textup{F}_{-i,0}^{0},\Upsilon)$. Remember now that we just proved above that for any pair $(s_{-i},t_{-i})$ in the support of the marginal of $\bar{\mu}_{i}(h^{0})$ on $S_{-i}\times T_{-i}(d_{-i}^{0})$ there exists some finite type $t_{-i}^{0}(s_{-i},t_{-i})$ such that for any player $j\neq i,$
\begin{itemize}

\item $\textup{F}_{j,1}(d_{j}^{0},t_{j}^{0}(s_{j},t_{j}))=\{s_{j}\}$.

\item $\theta_{j}(t_{j}^{0}(s_{j},t_{j}))$ is payoff-equivalent to $\theta_{j}(t_{j})$ for player $j$'s opponents.

\end{itemize}
Define then $\bar{\mu}_{i}^{1}$ as follows:
\[
\bar{\mu}_{i}^{1}[E]:=\bar{\mu}_{i}(h^{0})\left[\left\{(s_{-i},\theta_{0},t_{-i})\in S_{-i}\times\Theta_{0}\times T_{-i}(\Theta)\left|(s_{-i},\theta_{0},t_{-i}^{0}(s_{-i},t_{-i}))\in E\right.\right\}\right],
\]
for any measurable $E\subseteq S_{-i}\times (d_{i,1}^{1})_{0}\times T_{-i}(d_{-i}^{0})$. Since $S_{-i}$ and $t_{i}$ are finite it is immediate that $\bar{\mu}_{i}^{1}$ is a well-defined probability measure in $\Delta(S_{-i}\times (d_{i,1}^{1})_{0}\times T_{-i}(d_{-i}^{0}))$. Define next belief hierarchy:
\[
\pi_{i}^{1}:=\tau_{i}^{-1}\left(\textup{marg}_{(d_{i,1}^{1})_{0}\times T_{-i}(d_{-i}^{0})}\bar{\mu}_{i}^{1}\right),
\]
and set $t_{i}^{1}:=(\theta_{i}(t_{i}),\pi_{i}^{1})$. Clearly, $t_{i}^{1}$ is finite and satisfies that $\theta_{i}(t_{i}^{1})=\theta_{i}(t_{i})$. Note in addition that the supports of the marginals on $\Theta_{0}$ of $\bar{\mu}_{i}(h^{0})$ and $\bar{\mu}_{i}^{1}$ coincide, and thus, that the support of the marginal on $\Theta_{0}$ of $\tau_{i}(\pi_{i}(t_{i}^{1}))$ is contained in $(d_{i,1}^{1})_{0}$. It follows then that $t_{i}^{1}\in T_{i}(d_{i}^{1})$. Let's finish verifying that:
\[
\textup{F}_{i,2}(d_{i}^{1},t_{i}^{1})\subseteq [\bar{s}_{i}|\Upsilon]_{0}.
\]
To see it, pick arbitrary $s_{i}\in\textup{F}_{i,2}(d_{i}^{1},t_{i}^{1})$ and conjecture $\mu_{i}$ that justifies the inclusion of $s_{i}$ in $\textup{F}_{i,2}(d_{i}^{1},t_{i}^{1})$. Now, for each pair $(s_{-i},\theta_{-i})$ denote:
\[
T_{-i}^{0}(s_{-i},\theta_{-i}):=\left\{t_{-i}^{0}(s_{-i},t_{-i})\left|t_{-i}\in T_{-i}(\Theta)\textup{  and  }\theta_{-i}(t_{-i}^{0}(s_{-i},t_{-i}))=\theta_{-i}\right.\right\},
\]
and notice that for any $(s_{-i},\theta_{-i},\theta_{0})$ we have that:\footnote{We provide further clarification of the following development. The first two equalities follow from the construction of $t_{i}^{1}$, from $\mu_{i}$ initially assigning probability $1$ to the graph of $\textup{F}_{-i,1}(d_{-i}^{0},\,\cdot\,)$ and the fact that $\textup{F}_{-i,1}(d_{-i}^{0},t_{-i}^{0}(s_{-i},t_{-i}))=\{s_{-i}\}$. The third one follows from the construction of $\bar{\mu}_{i}^{1}$. The fourth is obvious. The fifth equality follows, again, from the construction of $\bar{\mu}^{1}$. The last one follows from the fact that if $t_{-i}^{0}(s_{-i}',t_{-i})=t_{-i}^{0}(s_{-i},t_{-i})$ then $s_{-i}'=s_{-i}$, because $\textup{F}_{-i,1}(d_{-i}^{0},t_{-i}^{0}(s_{-i},t_{-i}))=\{s_{-i}\}$.}
\begin{align*}
\mu_{i}(h^{0})&\left[\{(s_{-i},\theta_{-i},\theta_{0})\}\times\Pi_{-i}(\Theta)\right]=\\[2ex]
&=\mu_{i}(h^{0})\left[\{(s_{-i},\theta_{0})\}\times T_{-i}^{0}(s_{-i},\theta_{-i})\right]\\[2ex]
&=\mu_{i}(h^{0})\left[S_{-i}\times\{\theta_{0}\}\times T_{-i}^{0}(s_{-i},\theta_{-i})\right]\\[2ex]
&=\bar{\mu}_{i}^{1}\left[S_{-i}\times\{\theta_{0}\}\times T_{-i}^{0}(s_{-i},\theta_{-i})\right]\\[2ex]
&=\sum_{s_{-i}'\in S_{-i}}\bar{\mu}_{i}^{1}\left[\{(s_{-i}',,\theta_{0})\}\times T_{-i}^{0}(s_{-i},\theta_{-i})\right]\\[2ex]
&=\sum_{s_{-i}'\in S_{-i}}\bar{\mu}_{i}(h^{0})\left[\{(s_{-i}',\theta_{0})\}\times\left\{t_{-i}\in T_{-i}(\Theta)\left|t_{-i}^{0}(s_{-i}',t_{-i})\in T_{-i}^{0}(s_{-i},\theta_{-i})\right.\right\}\right]\\[2ex]
&=\bar{\mu}_{i}(h^{0})\left[\{(s_{-i},\theta_{0})\}\times\left\{t_{-i}\in T_{-i}(\Theta)\left|t_{-i}^{0}(s_{-i},t_{-i})\in T_{-i}^{0}(s_{-i},\theta_{-i})\right.\right\}\right].
\end{align*}
Now, notice two things. First, we know that for any $(s_{-i},t_{-i})$ in the support of the marginal of $\bar{\mu}_{i}(h^{0})$ on $S_{-i}\times T_{-i}(\Theta)$ $\theta_{-i}(t_{-i})$ and $\theta_{-i}(t_{-i}^{0}(s_{-i},t_{-i}))$ are payoff-equivalent for player $i$; thus, it follows from the above that $\mu_{i}(h^{0})$ and $\bar{\mu}_{i}(h^{0})$ induce the same conditional expected payoffs at $h^{0}$. Second, the marginal of $\bar{\mu}_{i}(h^{0})$ on $S_{-i}$ assigns positive probability to $S_{-i}(h)$ for every $h\in H_{i}\cup\{h^{0}\}$; in consequence, it follows from the above that $\mu_{i}$ and $\bar{\mu}_{i}$ induce the same conditional expected payoffs at $h$ for every history $h\in H_{i}\cup\{h^{0}\}$. Thus, we conclude that $r_{i}(\theta_{i}(t_{i}^{1}),\mu_{i})=\{\bar{s}_{i}\}$ and, thus, that $s_{i}\in [\bar{s}_{i}|\Upsilon]_{0}$.
\end{proof}

\subsubsection{From strictness to uniqueness: Inductive step}
\label{subsubsection:tounique2}

\begin{lemma}
\label{lemma:thirdperturbation2}
Let $\Gamma$ be an extensive form. Then for any $k\geq 1$, any player $i$, any standard payoff structure $\Upsilon$ such that $\textup{H}_{j}(\textup{F}_{i,k+1},\Upsilon)\subseteq\textup{H}_{j}(\textup{F}_{i,k+1}^{0},\Upsilon)$ for every $j\neq i$, any finite $t_{i}\in T_{i}(\Theta)$ and any strategy $s_{i}\in\textup{F}_{i,k}^{0}(\Upsilon,t_{i})$ such that $t_{i}\in T_{i}(\Upsilon,s_{i})$ there exists a finite type $t_{i}^{k}$ such that:

\begin{itemize}

\item[$(i)$] $t_{i}^{k}\in T_{i}(d_{i}^{k}(\Upsilon))$.

\item[$(ii)$] $\theta_{i}(t_{i}^{k})=\theta_{i}(t_{i})$ and $\pi_{i,k}(t_{i}^{k})=\pi_{i,k}(t_{i})$.

\item[$(iii)$] $\textup{F}_{i,k+1}(d_{i}^{k}(\Upsilon),t_{i}^{k})\subseteq [s_{i}|\Upsilon]_{k-1}$.

\end{itemize}
\end{lemma}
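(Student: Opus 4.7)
The plan is to proceed by induction on $k$: the base case $k=1$ is Lemma \ref{lemma:thirdperturbation1}, and I assume the lemma holds at order $k-1$ for every player. Given player $i$, $\Upsilon$ satisfying the reachability hypothesis, finite $t_i\in T_i(\Theta)$, and $s_i\in\textup{F}_{i,k}^{0}(\Upsilon,t_i)$ with $t_i\in T_{i,k}(\Upsilon,s_i)$, I would first extract a finite conjecture $\bar{\mu}_i$ that witnesses both the inclusion of $s_i$ in $\textup{F}_{i,k}^{0}(\Upsilon,t_i)$ and the positive-probability property $(\textup{marg}_{S_{-i}}\bar{\mu}_i(h^0))[S_{-i}(h)]>0$ for every $h\in\textup{H}_i(\textup{F}_{-i,k-1}^{0},\Upsilon)$; such a $\bar{\mu}_i$ exists by the assumption $t_i\in T_{i,k}(\Upsilon,s_i)$ together with Lemma \ref{lemma:basiclemma}.

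Next, for each pair $(s_{-i},\bar{t}_{-i})$ in the finite support of $\bar{\mu}_i(h^0)$ and each $j\neq i$, the inductive hypothesis applied to $(s_j,\bar{t}_j)$ delivers a finite type $t_j^{k-1}(s_j,\bar{t}_j)\in T_j(d_j^{k-1}(\Upsilon))$ with $\theta_j(t_j^{k-1}(s_j,\bar{t}_j))=\theta_j(\bar{t}_j)$, $\pi_{j,k-1}(t_j^{k-1}(s_j,\bar{t}_j))=\pi_{j,k-1}(\bar{t}_j)$, and the key inclusion $\textup{F}_{j,k}(d_j^{k-1}(\Upsilon),t_j^{k-1}(s_j,\bar{t}_j))\subseteq [s_j|\Upsilon]_{k-2}$. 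A subtlety: to invoke the hypothesis cleanly each $\bar{t}_j$ should lie in $T_{j,k-1}(\Upsilon,s_j)$; if not, I replace $\bar{t}_j$ by a close enough approximation in that set via Lemma \ref{lemma:secondperturbation}, choosing the index far enough along the approximating sequence so that $s_i$ remains (up to equivalence) the unique best response of $i$ to the correspondingly perturbed $\bar{\mu}_i$.

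I would then mirror the substitution construction of Lemma \ref{lemma:thirdperturbation1}, defining $\bar{\mu}_i^{k}$ on $S_{-i}\times (d_{i,1}^{k}(\Upsilon))_{0}\times T_{-i}(d_{-i}^{k-1}(\Upsilon))$ as the pushforward of $\bar{\mu}_i(h^0)$ under $(s_{-i},\theta_0,\bar{t}_{-i})\mapsto(s_{-i},\theta_0,t_{-i}^{k-1}(s_{-i},\bar{t}_{-i}))$, setting $\pi_i^{k}:=\tau_i^{-1}(\textup{marg}_{(d_{i,1}^{k}(\Upsilon))_{0}\times T_{-i}(d_{-i}^{k-1}(\Upsilon))}\bar{\mu}_i^{k})$, and taking $t_i^{k}:=(\theta_i(t_i),\pi_i^{k})$. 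Claims (i) and (ii) follow at once: (i) because each $t_j^{k-1}(s_j,\bar{t}_j)$ already lies in $T_j(d_{j|i}^{k}(\Upsilon))=T_j(d_j^{k-1}(\Upsilon))$, and (ii) because matching $\pi_{j,k-1}$ across all $j$ in the support implies that the $k$-th order belief of $t_i^{k}$ reproduces that of $t_i$.

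For claim (iii), I take any $\hat{s}_i\in\textup{F}_{i,k+1}(d_i^{k}(\Upsilon),t_i^{k})$ with witnessing conjecture $\mu_i$. The initial marginal of $\mu_i$ on $(d_{i,1}^{k}(\Upsilon))_0\times T_{-i}(d_{-i}^{k-1}(\Upsilon))$ is pinned down by $t_i^{k}$, and the rationalization requirement forces $\mu_i(h^0)$ to concentrate on the graph of $\textup{F}_{-i,k}(d_{-i}^{k-1}(\Upsilon),\cdot)$. Combining this with the inductive inclusion into $[s_{-i}|\Upsilon]_{k-2}$ and with the reachability hypothesis $\textup{H}_{j}(\textup{F}_{i,k+1},\Upsilon)\subseteq\textup{H}_{j}(\textup{F}^0_{i,k+1},\Upsilon)$ shows that $\mu_i$ and $\bar{\mu}_i$ induce the same conditional expected payoffs for $i$ at $h^0$, and hence, via Bayes' rule and the positive-probability property of $\bar{\mu}_i$, at every $h\in\textup{H}_i(\textup{F}^0_{-i,k-1},\Upsilon)$; since $s_i$ is (up to equivalence) the unique best response of $i$ to $\bar{\mu}_i$ at every history, $\hat{s}_i$ must then agree with $s_i$ at every such $h$, giving $\hat{s}_i\in[s_i|\Upsilon]_{k-1}$. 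The hard part is precisely this last step: chaining payoff-equivalence from $h^0$ to every $h\in\textup{H}_i(\textup{F}^0_{-i,k-1},\Upsilon)$ is delicate because the substitution only directly controls initial beliefs, and the role of the reachability hypothesis in converting "agreement at histories reached by strictly rationalizable opponents" into "agreement at histories reached by rationalizable opponents" has to be tracked carefully.
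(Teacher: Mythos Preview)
Your architecture mirrors the paper's: induction on $k$ with base case Lemma~\ref{lemma:thirdperturbation1}, pushforward of $\bar\mu_i(h^0)$ along $(s_{-i},\theta_0,t_{-i})\mapsto(s_{-i},\theta_0,t_{-i}^{k-1}(s_{-i},t_{-i}))$ supplied by the inductive hypothesis, and a payoff-equivalence argument for part~(iii). There is, however, a genuine gap in your handling of the subtlety you yourself flag. You observe that invoking the inductive hypothesis on each opponent pair $(s_j,\bar t_j)$ requires $\bar t_j\in T_{j,k-1}(\Upsilon,s_j)$, and propose to replace $\bar t_j$ by an approximant in that set via Lemma~\ref{lemma:secondperturbation}. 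But any such replacement destroys the exact equality $\pi_{j,k-1}(t_j^{k-1})=\pi_{j,k-1}(\bar t_j)$ that the inductive hypothesis delivers: the approximants from Lemma~\ref{lemma:secondperturbation} only \emph{converge} to $\bar t_j$, they do not share its $(k-1)$-th order belief. Consequently your own argument for~(ii)---``matching $\pi_{j,k-1}$ across all $j$ in the support implies that the $k$-th order belief of $t_i^k$ reproduces that of $t_i$''---no longer applies after the perturbation, and you obtain only $\pi_{i,k}(t_i^k)\approx\pi_{i,k}(t_i)$ rather than equality. The paper applies the inductive hypothesis directly, without this detour, and so does not face the tension your fix creates between~(ii) and the hypothesis' applicability.

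A second point concerns your compression of~(iii). You assert that $\mu_i$ and $\bar\mu_i$ induce the same conditional expected payoffs at every $h\in\textup{H}_i(\textup{F}^0_{-i,k-1},\Upsilon)$, and conclude from uniqueness of $s_i$ as best reply to $\bar\mu_i$. In the paper (steps S3--S5) this payoff equivalence is established only for the two distinguished strategies $\bar s_i$ and the candidate $\hat s_i$, via outcome-equivalences that are strategy-specific (they rely on $[s_{-i}|\Upsilon]_{k-2}$-equivalence classes reaching the same terminal node \emph{against $\bar s_i$ or $\hat s_i$}, using the reachability hypothesis for the latter). Since you have not shown $\hat s_i$ is a best reply to $\bar\mu_i$, uniqueness for $\bar\mu_i$ does not directly force $\hat s_i\in[\bar s_i]$; the paper closes this gap with a separate contradiction argument (S6), constructing a hybrid strategy $s_i^0$ that would violate $r_i(\theta_i(t_i),\bar\mu_i)=[\bar s_i]$ if $\hat s_i$ deviated from $\bar s_i$ at some relevant history. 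Your sketch should make this explicit.
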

\begin{proof}
We proceed by induction on $k$. The initial case ($k=0$) is covered by Lemma \ref{lemma:thirdperturbation1} (the second part of property $(ii)$ holds by vacuity) so we can focus on the proof of the inductive step. Suppose that $k\geq 0$ is such that the claims hold. We verify next that they also holds for $k+1$. Fix first standard payoff structure $\Upsilon$ such that $\textup{H}_{j}(\textup{F}_{i,k+1},\Upsilon)\subseteq\textup{H}_{j}(\textup{F}_{i,k+1}^{0},\Upsilon)$ for every $j\neq i$ and set, for notational convenience, $d_{i}^{k+1}:=d_{i}^{k+1}(\Upsilon)$; let us proceed now in two steps:

\vspace{0.3cm}

\noindent\textit{1. Parts $(i)$ and $(ii)$: Construction of the required finite type}

\vspace{0.3cm}

\noindent Fix player $i$, finite $t_{i}\in T_{i}(\Theta)$ and strategy $\bar{s}_{i}$ such that $\bar{s}_{i}\in\textup{F}_{i}^{0}(\Upsilon,t_{i})$ and $t_{i}\in T_{i}(\Upsilon,\bar{s}_{i})$. Pick conjecture $\bar{\mu}_{i}$ that justifies the inclusion of $\bar{s}_{i}$ in $\textup{F}_{i,k+1}^{0}(\Upsilon,t_{i})$ and the inclusion of $t_{i}$ in $T_{i}(\Upsilon,\bar{s}_{i})$. Now, we know from the induction hypothesis that for any pair $(s_{-i},t_{-i})$ in the support of the marginal of $\bar{\mu}_{i}(h^{0})$ on $S_{-i}\times T_{-i}(\Theta)$ there exists some finite type $t_{-i}^{k}(s_{-i},t_{-i})$ such that for every player $j\neq i$,
\begin{itemize}

\item $t_{j}^{k}\in T_{j}(d_{j}^{k})=T_{j}(d_{j|i}^{k+1})$.

\item $\theta_{j}^{k}=\theta_{j}(t_{j})$ and $\pi_{j,k}^{k}=\pi_{j,k}(t_{j})$ for $\theta_{j}^{k}=\theta_{j}(t_{j}^{k}(s_{j},t_{j}))$ and $\pi_{j}^{k}=\pi_{j}(t_{j}^{k}(s_{j},t_{j}))$.

\item $\textup{F}_{j,k+1}(d_{j|i}^{k+1},t_{j}^{k}(s_{j},t_{j}))=\textup{F}_{j,k+1}(d_{j}^{k},t_{j}^{k}(s_{j},t_{j}))\subseteq [s_{j}|\Upsilon]_{k-1}$.

\end{itemize}
Define then probability measure $\bar{\mu}_{i}^{k+1}\in\Delta(S_{-i}\times (d_{i,1}^{k+1})_{0}\times T_{-i}(d_{-i}^{k}))$ by setting:
\[
\bar{\mu}_{i}^{k+1}[E]:=\bar{\mu}_{i}(h^{0})\left[\left\{(s_{-i},\theta_{0},t_{-i})\in S_{-i}\times (d_{i,1}^{k+1})_{0}\times T_{-i}(\Theta)\left|(s_{-i},\theta_{0},t_{-i}^{k}(s_{-i},t_{-i}))\in E\right.\right\}\right],
\]
for any measurable $E\subseteq S_{-i}\times (d_{i,1}^{k+1})_{0}\times T_{-i}(d_{-i}^{k})$. Finiteness of $t_{i}$ guarantees that $\bar{\mu}_{i}^{k+1}$ is well-defined. Obviously, $\bar{\mu}_{i}^{k+1}$ does not necessarily induce a whole conditional probability system via conditional probability (its marginal on $S_{-i}$ may not have full-support); however, if we set $\bar{\mu}_{i}^{k+1}(h^{0}):=\bar{\mu}_{i}^{k+1}$ conditional probability enables to obtain a family of beliefs $\{\bar{\mu}_{i}^{k+1}(h)|h\in\textup{H}_{i}(\textup{F}_{-i}^{0},\Upsilon)\}$. Notice that for any measurable $E\subseteq S_{-i}\times\Theta_{0}\times \Theta_{-i}$ it holds that:
\[
\bar{\mu}_{i}^{k+1}(h)[E\times\Pi_{-i}(d_{-i}^{k})]=\bar{\mu}_{i}(h)[E\times\Pi_{-i}(\Theta)],
\]
for every history $h\in\textup{H}_{i}(\textup{F}_{-i,k}^{0},\Upsilon)$. Thus, for any such history $h$ we can define the conditional expected payoff induced by belief $\bar{\mu}_{i}^{k+1}(h)$ and any strategy $s_{i}$ for payoff type $\theta_{i}(t_{i})$, which, with some abuse of notation, we represent by $U_{i}(s_{i},\bar{\mu}_{i}^{k+1}|\theta_{i}(t_{i}),h)$, and, clearly, satisfies that:
\begin{equation}
\label{equation:equality}
U_{i}(s_{i},\bar{\mu}_{i}^{k+1}|\theta_{i}(t_{i}),h)=U_{i}(s_{i},\bar{\mu}_{i}|\theta_{i}(t_{i}),h).
\end{equation}
Define now belief hierarchy:
\[
\pi_{i}^{k+1}:=\tau_{i}^{-1}\left(\textup{marg}_{(d_{i,1}^{k+1})_{0}\times T_{-i}(d_{-i}^{k})}\bar{\mu}_{i}^{k+1}\right),
\]
and set $t_{i}^{k+1}:=(\theta_{i}(t_{i}),\pi_{i}^{k+1})$. Clearly, $t_{i}^{k+1}$ is finite and has the same payoff type and $k^{\textup{th}}$-order beliefs as $t_{i}$. Thus, it follows that $t_{i}^{k+1}\in T_{i}(d_{i}^{k+1})$.\footnote{Finiteness is immediate. To see that lower-order beliefs are maintained, simply notice that the probability assigned by $\bar{\mu}_{i}(h^{0})$ to $(s_{-i},\theta_{0},t_{-i})$ is assigned by $\bar{\mu}_{i}^{k+1}(h^{0})$ to $(s_{-i},\theta_{0},t_{-i}^{k}(s_{-i},t_{-i}))$, being the $(k-1)^{\textup{th}}$-order beliefs of $t_{-i}^{k}(s_{-i},t_{-i})$ exactly those of $\pi_{-i}(t_{-i})$. Finally, that $t_{i}^{k+1}\in T_{i}(d_{i}^{k+1})$ follows from the fact that $\tau_{i}(\pi_{i}(t_{i}^{k+1}))$ only assigns positive probability to pairs $(\theta_{0},t_{-i}^{k}(s_{-i},t_{-i}))\in \Theta_{0}\times T_{-i}(d_{-i}^{k})$ where $(s_{-i},t_{-i})$ is in the graph of $\textup{F}_{-i,k}^{0}(\Theta,\,\cdot\,)$.}

\vspace{0.3cm}

\noindent\textit{2. Part $(iii)$: Verification of convenient behavior}

\vspace{0.3cm}

\noindent It remains to be checked that:
\[
\textup{F}_{i,k+2}(d_{i}^{k+1},t_{i}^{k+1})\subseteq [\bar{s}_{i}|\Upsilon]_{k}.
\]
To see it fix strategy $s_{i}\in\textup{F}_{i,k+2}(d_{i}^{k+1},t_{i}^{k+1})$ and conjecture $\mu_{i}$ that justifies the inclusion of $s_{i}$ in $\textup{F}_{i,k+2}(d_{i}^{k+1},t_{i}^{k+1})$. We want to verify that $s_{i}\in [\bar{s}_{i}|\Upsilon]_{k}$. The process is slightly arduous, so let us first present a road map describing the six intermediate steps required:

\begin{itemize}

\item[S1.] For any $(s_{-i},\theta_{-i},\theta_{0})\in S_{-i}\times\Theta_{0}\times\Theta_{-i}$ we have that:
\begin{align*}
\mu_{i}(h^{0})[[s_{-i}|\Upsilon]_{k-1}\times\{(\theta_{0},\theta_{-i})\}&\times\Pi_{-i}(d_{-i}^{k})]=\\[1ex]
&=\bar{\mu}_{i}^{k+1}(h^{0})[[s_{-i}|\Upsilon]_{k-1}\times\{(\theta_{0},\theta_{-i})\}\times\Pi_{-i}(\Theta)].
\end{align*}
In general, $\mu_{i}(h^{0})$ and $\bar{\mu}_{i}^{k+1}(h^{0})$ may induce different marginals on $S_{-i}\times\Theta_{0}\times\Theta_{-i}$ because the types $t_{-i}'$ to which $t_{i}^{k+1}$ assigns positive probability do not necessarily have a unique strategy in $\textup{F}_{-i,k+1}(d_{-i}^{k},t_{-i}')$. However, the marginals both beliefs induce expect $i$'s opponents to behave analogously while each opponent $j\neq i$ finds herself in a history reachable by some strategy profile in $\textup{F}_{k-1}^{0}(\Upsilon,\,\cdot\,)$.

\item[S2.] Furthermore, for any $(s_{-i},\theta_{0},\theta_{-i})\in S_{-i}\times\Theta_{0}\times\Theta_{-i}$ and any $h\in\textup{H}_{i}(\textup{F}_{k}^{0},\Upsilon)$ we have that:
\begin{align*}
\mu_{i}(h)[[s_{-i}|\Upsilon]_{k-1}\times\{(\theta_{0},\theta_{-i})\}&\times\Pi_{-i}(d_{-i}^{k})]=\\[1ex]
&=\bar{\mu}_{i}^{k+1}(h)[[s_{-i}|\Upsilon]_{k-1}\times\{(\theta_{0},\theta_{-i})\}\times\Pi_{-i}(\Theta)].
\end{align*}
That is, despite both beliefs inducing possibly different marginals on $S_{-i}\times\Theta_{0}\times\Theta_{-i}$ at different stages of the game, the observation in S1 generalizes from the initial history to every history that is consistent players having chosen some strategy profile in $\textup{F}_{k-1}^{0}(\Upsilon,\,\cdot\,)$.

\item[S3.] For any pair $(t_{-i},s_{-i})$ in the graph of $\textup{F}_{-i,k}^{0}(\Upsilon,\,\cdot\,)$ and any strategy $s'_{-i}\in [s_{-i}|\Upsilon]_{k-1}$ we have the following outcome equivalence:
\[
z((s_{-i};\bar{s}_{i})|h^{0})=z((s_{-i}';\bar{s}_{i})|h^{0}).
\]

\item[S4.] For any pair $(t_{-i},s_{-i})$ in the graph of $\textup{F}_{-i,k}^{0}(\Upsilon,\,\cdot\,)$ and any strategy $s'_{-i}\in [s_{-i}|\Upsilon]_{k-1}$ we have the following outcome equivalence:
\[
z((s_{-i};s_{i})|h^{0})=z((s_{-i}';s_{i})|h^{0}).
\]

\item[S5.] For any strategy $s'_{i}\in\{\bar{s}_{i},s_{i}\}$ and any history $h\in\textup{H}_{i}(\textup{F}_{k}^{0},\Upsilon)$:
\[
U_{i}(s'_{i},\mu_{i}|\theta_{i}(t_{i}),h)=U_{i}(s'_{i},\bar{\mu}_{i}^{k+1}|\theta_{i}(t_{i}),h).
\]
At every history consistent players having chosen a strategy profile in $\textup{F}_{-i,k}(\Upsilon,\,\cdot\,)$ player $i$'s type $\theta_{i}(t_{i})$'s expected utilities under $\mu_{i}$ and $\bar{\mu}_{i}^{k+1}$ coincide either when playing $\bar{s}_{i}$ or $s_{i}$. Notice that this is a direct consequence of the outcome equivalences in S3 and S4, and of S2, where we saw that $\mu_{i}$ and $\bar{\mu}_{i}^{k+1}$ induce beliefs that expect analogous behavior while each player $j\neq i$ finds herself in a history she finds reachable when opponents play according to $\textup{F}_{-j,k-1}^{0}(\Upsilon,\,\cdot\,)$.

\item[S6.] For any history $h\in\textup{H}_{i}(\textup{F}_{k}^{0},\Upsilon)$, $s_{i}(h)=\bar{s}_{i}(h)$. This follows from S5 and the fact that $s_{i}$ is optimal given $\mu_{i}$ and $\bar{s}_{i}$ is optimal given $\bar{\mu}_{i}$, which in turn, induces an expected payoff similar enough to that of $\bar{\mu}_{i}^{k+1}$.

\end{itemize}
Once these six steps are verified part $(iii)$ the lemma will follow immediately. Let's proceed step by step:

\begin{itemize}


\item[S1.] First, for every $S_{-i}'\subseteq S_{-i}$ and $\theta_{-i}\in\Theta_{-i}$ denote:
\[
T_{-i}^{k}(S_{-i}',\theta_{-i}):=\bigcup_{s_{-i}\in S_{-i}'}\left\{t_{-i}^{k}(s_{-i},t_{-i})\left|t_{-i}\in T_{-i}(\Theta)\textup{  and  }\theta_{-i}(t_{-i}^{k}(s_{-i},t_{-i}))=\theta_{-i}\right.\right\}.
\]
Then, notice that for any triple $(s_{-i},\theta_{0},\theta_{-i})\in S_{-i}\times\Theta_{0}\times\Theta_{-i}$ we have that:
\begin{align*}
&\mu_{i}(h^{0})\left[[s_{-i}|\Upsilon]_{k-1}\times\{(\theta_{0},\theta_{-i})\}\times\Pi_{-i}(\Theta)\right]=\\[2ex]
&=\mu_{i}(h^{0})\left[[s_{-i}|\Upsilon]_{k-1}\times\{\theta_{0}\}\times T_{-i}^{k}([s_{-i}|\Upsilon]_{k-1},\theta_{-i})\right]\\[2ex]
&=\mu_{i}(h^{0})\left[S_{-i}\times\{\theta_{0}\}\times T_{-i}^{k}([s_{-i}|\Upsilon]_{k-1},\theta_{-i})\right]\\[2ex]
&=\bar{\mu}_{i}^{k+1}\left[S_{-i}\times\{\theta_{0}\}\times T_{-i}^{k}([s_{-i}|\Upsilon]_{k-1},\theta_{-i})\right]\\[2ex]
&=\sum_{s_{-i}''\in S_{-i}}\bar{\mu}_{i}^{k+1}\left[\{(s_{-i}'',,\theta_{0})\}\times T_{-i}^{k}([s_{-i}|\Upsilon]_{k-1},\theta_{-i})\right]\\[2ex]
&=\sum_{s_{-i}''\in S_{-i}}\bar{\mu}_{i}(h^{0})\left[\{(s_{-i}'',\theta_{0})\}\times\left\{t_{-i}\in T_{-i}(d_{-i}^{k})\left|t_{-i}^{k}(s_{-i}'',t_{-i})\in T_{-i}^{k}([s_{-i}|\Upsilon]_{k-1},\theta_{-i})\right.\right\}\right]\\[2ex]
&=\sum_{s_{-i}''\in [s_{-i}|\Theta]_{k-1}}\bar{\mu}_{i}(h^{0})\left[\{(s_{-i}'',\theta_{0})\}\times\left\{t_{-i}\in T_{-i}(d_{-i}^{k})\left|t_{-i}^{k}(s_{-i}'',t_{-i})\in T_{-i}^{k}([s_{-i}|\Upsilon]_{k-1},\theta_{-i})\right.\right\}\right]\\[2ex]
&=\sum_{s_{-i}''\in [s_{-i}|\Upsilon]_{k-1}}\bar{\mu}_{i}(h^{0})\left[\{(s_{-i}'',\theta_{0},\theta_{-i})\}\times\Pi_{-i}(d_{-i}^{k})\right]\\[2ex]
&=\bar{\mu}_{i}(h^{0})\left[[s_{-i}|\Upsilon]_{k-1}\times\{(\theta_{0},\theta_{-i})\}\times\Pi_{-i}(d_{-i}^{k})\right].
\end{align*}


\item[S2.] Fix history $h\in\textup{H}_{i}(\textup{F}_{k}^{0},\Upsilon)$ and $s_{-i}\in S_{-i}(h)$, and pick some $s_{-i}'\in [s_{-i}|\Upsilon]_{k-1}$. It is easy to see that $s_{-i}'\in S_{-i}(h)$: for any player $j\neq i$ and any history $h'\in H_{j}\cup\{h^{0}\}$ that precedes or coincides with $h$ we have that $h'\in\textup{H}_{j}(\textup{F}_{k}^{0},\Upsilon)$ and thus, it holds that $s'_{j}(h')=s_{j}(h')$. Hence, $[s_{-i}|\Upsilon]_{k-1}\subseteq S_{-i}(h)$ and, in consequence, there must exist some $s_{-i}^{1},\dots, s_{-i}^{M}\in S_{-i}(h)$ such that the family $\{[s_{-i}^{1}|\Upsilon]_{k-1},\dots,[s_{-i}^{M}|\Upsilon]_{k-1}\}$ is a partition of $S_{-i}(h)$. Then, it follows from S1 that:
\begin{align*}
\mu_{i}(h^{0})[S_{-i}(h)\times\Theta_{0}\times T_{-i}(d_{-i}^{k})]&=\sum_{m=1}^{M}\mu_{i}(h^{0})[[s_{-i}^{m}|\Upsilon]_{k-1}\times\Theta_{0}\times T_{-i}(d_{-i}^{k})]\\
&=\sum_{m=1}^{M}\bar{\mu}_{i}^{k+1}(h^{0})[[s_{-i}^{m}|\Upsilon]_{k-1}\times\Theta_{0}\times T_{-i}(d_{-i}^{k})]\\
&=\bar{\mu}_{i}^{k+1}(h^{0})[S_{-i}(h)\times\Theta_{0}\times T_{-i}(d_{-i}^{k})].
\end{align*}
Remember now that $\bar{\mu}_{i}^{k+1}(h^{0})$ puts positive probability on $S_{-i}(h)\times\Theta_{0}\times T_{-i}(d_{-i}^{k})$ and thus, it follows that for any $(s_{-i},\theta_{0},\theta_{-i})$:
\begin{align*}
\bar{\mu}_{i}^{k+1}(h)[[s_{-i}|\Upsilon]_{k-1}\times&\{(\theta_{0},\theta_{-i})\}\times \Pi_{-i}(d_{-i}^{k})]=\\[1ex]
&=\frac{\bar{\mu}_{i}^{k+1}(h^{0})[(S_{-i}(h)\cap [s_{-i}|\Upsilon]_{k-1})\times\{(\theta_{0},\theta_{-i})\}\times\Pi_{-i}(d_{-i}^{k})]}{\bar{\mu}_{i}^{k+1}(h^{0})[S_{-i}(h)\times\Theta_{0}\times T_{-i}(d_{-i}^{k})]}\\[1ex]
&=\frac{\bar{\mu}_{i}(h^{0})[(S_{-i}(h)\cap [s_{-i}|\Upsilon]_{k-1})\times\{(\theta_{0},\theta_{-i})\}\times\Pi_{-i}(\Theta)]}{\bar{\mu}_{i}(h^{0})[S_{-i}(h)\times\Theta_{0}\times T_{-i}(\Theta)]}\\[1ex]
&=\bar{\mu}_{i}(h)[[s_{-i}|\Upsilon]_{k-1}\times\{(\theta_{0},\theta_{-i})\}\times\Pi_{-i}(\Theta)].
\end{align*}


\item[S3.] Fix pair $(s_{-i},t_{-i})$ in the graph of $\textup{F}_{-i,k}^{0}(\Upsilon,\,\cdot\,)$ and $s'_{-i}\in [s_{-i}|\Upsilon]_{k-1}$. Next, pick player $j\neq i$ such that there exists some history $h\in H_{j}$ that precedes $z((s_{-i};\bar{s}_{i})|h^{0})$ and is reached by both $s_{j}$ and $s_{j}'$.\footnote{\label{footnote2} I.e., such that $h\prec z((s_{-i};\bar{s}_{i})|h^{0})$ and $h\in H_{j}(s_{j})\cap H_{j}(s_{j}')$. Obviously, if there exists some $h$ satisfying the first requirement, there exists some history that satisfies both. We do not have to worry about players $j$ that lack such history $h$: they do not make any choice along the path leading to $z((s_{-i};\bar{s}_{i})|h^{0})$.} To see that $h\in\textup{H}_{i}(\textup{F}_{j,k-1}^{0},\Upsilon)$ simply notice that: $(a)$ $s_{\ell}\in\textup{F}_{\ell,k}^{0}(\Upsilon,t_{\ell})$ for every player $\ell\neq i,j$ and $(b)$ $\bar{s}_{i}\in\textup{F}_{i,k+1}^{0}(\Upsilon,t_{i})$. Thus, we have that:
\[
h\in\textup{H}_{j}(\textup{F}_{i,k+1}^{0},\Upsilon)\cap\bigcap_{\ell\neq i,j}\textup{H}_{j}(\textup{F}_{\ell,k}^{0},\Upsilon)\subseteq\bigcap_{\ell\neq j}\textup{H}_{j}(\textup{F}_{\ell,k-1}^{0},\Upsilon)=\textup{H}_{j}(\textup{F}_{-j,k-1}^{0},\Upsilon)
\]
Then, we know by definition of $[s_{j}|\Upsilon]_{k-1}$, that $s_{j}(h)=s_{j}'(h)$. This lets us conclude that $z((s_{-i};\bar{s}_{i})|h^{0})=z((s_{-i}';\bar{s}_{i})|h^{0})$ for every $s_{-i}'\in [s_{-i}|\Upsilon]_{k-1}$.


\item[S4.] Fix pair $(s_{-i},t_{-i})$ in the graph of $\textup{F}_{-i,k}^{0}(\Upsilon,\,\cdot\,)$ and $s'_{-i}\in [s_{-i}|\Upsilon]_{k-1}$. Next, pick player $j\neq i$ such that there exists some history $h\in H_{j}$ that precedes $z((s_{-i};s_{i})|h^{0})$ and is reached by both $s_{j}$ and $s_{j}'$.\footnote{See Footnote \ref{footnote2}; the same logic applies here, mutatis mutandi.} We verify first that $h\in\textup{H}_{j}(\textup{F}_{-j,k-1}^{0},\Upsilon)$. To begin, remember that we have that, for player $\ell\neq i,j$, $s_{\ell}\in\textup{F}_{\ell,k}^{0}(\Upsilon,t_{\ell})$. Thus:
\[
h\in\bigcap_{\ell\neq i,j}\textup{H}_{j}(\textup{F}_{\ell,k},\Upsilon).\subseteq\bigcap_{\ell\neq i,j}\textup{H}_{j}(\textup{F}_{\ell,k-1},\Upsilon).
\]
Next, since $s_{i}\in\textup{F}_{i,k+2}(d_{i}^{k+1},t_{i}^{k+1})$, in particular, we know that $s_{i}\in\textup{F}_{i,k+1}(d_{i}^{k+1},t_{i}^{k+1})$. In consequence, since $d_{i,k+1}^{k+1}=d_{i,k+1}$ and $\textup{F}_{i,k+1}$ only depends on the $(k+1)^{\textup{th}}$-order specification of the directory and the type, there must exist some finite type $t_{i}'\in T_{i}(\Theta)$ such that $s_{i}\in\textup{F}_{i,k+1}(\Upsilon,t_{i}')$.\footnote{The possibility of finding \textit{finite} $t_{i}$ follows from $t_{i}^{k+1}$ itself being finite.} Thus, we have that:
\[
h\in\textup{H}_{j}(\textup{F}_{i,k-1},d_{i}^{k+1})\subseteq\textup{H}_{j}(\textup{F}_{i,k-1},\Upsilon)\subseteq\textup{H}_{j}(\textup{F}_{i,k-1}^{0},\Upsilon)
\]
the last inclusion being one of the conditions on $\Upsilon$ in the statement of the lemma. Hence we conclude that:
\[
h\in\bigcap_{\ell\neq j}\textup{H}_{j}(\textup{F}_{\ell,k-1}^{0},\Upsilon)=\textup{H}_{j}(\textup{F}_{-j,k-1}^{0},\Upsilon),
\]
and thus, by definition of $[s_{j}|\Upsilon]_{k-1}$, that $s_{j}(h)=s_{j}'(h)$. Obviously, the latter implies that $z((s_{-i};s_{i})|h^{0})=z((s_{-i}';s_{i})|h^{0})$ for every $s_{-i}'\in [s_{-i}|\Upsilon]_{k-1}$.


\item[S5.] Fix history $h\in\textup{H}_{i}(\textup{F}_{k}^{0},\Upsilon)$. Next pick strategies $s_{-i}^{1},\dots, s_{-i}^{M}\in S_{-i}(h)$ such that family $\{[s_{-i}^{1}|\Upsilon]_{k-1},\dots [s_{-i}^{M}|\Upsilon]_{k-1}\}$ is a partition of $S_{-i}(h)$. Based on the latter, set $\widetilde{S_{i}}(h):=\{[s_{-i}^{1}|\Upsilon]_{k-1},\dots,[s_{-i}^{M}|\Upsilon]_{k-1}\}$ and define the two measures on $\widetilde{S_{i}}(h)\times\Theta_{0}\times\Theta_{-i}$ induced by setting, for any $(\tilde{s}_{-i},\theta_{0},\theta_{-i})=([s_{-i}|\Upsilon]_{k-1},\theta_{0},\theta_{-i})\in\widetilde{S}_{-i}(h)\times\Theta_{0}\times\Theta_{-i}$,
\begin{align*}
\widetilde{\mu}_{i}^{k+1}[(\tilde{s}_{-i},\theta_{0},\theta_{-i})]&:=(\textup{marg}_{S_{-i}\times\Theta_{0}\times\Theta_{-i}}\bar{\mu}_{i}^{k+1}(h))[\tilde{s}_{-i}\times\{\theta_{0},\theta_{-i}\}],\\[1ex]
\widetilde{\mu}_{i}[(\tilde{s}_{-i},\theta_{0},\theta_{-i})]&:=(\textup{marg}_{S_{-i}\times\Theta_{0}\times\Theta_{-i}}\mu_{i}(h))[\tilde{s}_{-i}\times\{(\theta_{0},\theta_{-i})\}].
\end{align*}
Notice that we know from S2 that $\widetilde{\mu}_{i}=\widetilde{\mu}_{i}^{k+1}$. It follows then that for any strategy $s'_{i}\in\{\bar{s}_{i},s_{i}\}$:
\begin{align*}
U_{i}&(\mu_{i},s'_{i}|\theta_{i}(t_{i}),h)=\\[1ex]
&=\int_{S_{-i}(h)\times\Theta_{0}\times\Theta_{-i}}u_{i}(z(s_{-i};s'_{i}),((\theta_{0},\theta_{i});\theta_{i}(t_{i})))\textup{d}(\textup{marg}_{S_{-i}\times\Theta_{0}\times\Theta_{-i}}\mu_{i}(h))\\
&=\int_{\widetilde{S}_{-i}(h)\times\Theta_{0}\times\Theta_{-i}}u_{i}(z(\tilde{s}_{-i};s'_{i}),((\theta_{0},\theta_{-i});\theta_{i}(t_{i})))\textup{d}\widetilde{\mu}_{i}\\
&=\int_{\widetilde{S}_{-i}(h)\times\Theta_{0}\times\Theta_{-i}}u_{i}(z(\tilde{s}_{-i};s'_{i}),((\theta_{0},\theta_{-i});\theta_{i}(t_{i})))\textup{d}\widetilde{\mu}_{i}^{k+1}\\
&=\int_{S_{-i}(h)\times\Theta_{0}\times\Theta_{-i}}u_{i}(z(s_{-i};s'_{i}),((\theta_{0},\theta_{-i});\theta_{i}(t_{i})))\textup{d}(\textup{marg}_{S_{-i}\times\Theta_{0}\times\Theta_{-i}}\bar{\mu}_{i}^{k+1}(h))\\[1ex]
&=U_{i}(\bar{\mu}_{i}^{k+1},s'_{i}|\theta_{i}(t_{i}),h),
\end{align*}
being the second and fourth equalities consequences of the outcome equivalences in S4 and S5 (the abuse of notation $z(\tilde{s}_{-i};s_{i}')=z(s_{-i};s_{i}')$ is is innocuous due to S4).


\item[S6.] Proceed by contradiction and suppose that there exists some history $h\in\textup{H}_{i}(\textup{F}_{k}^{0},\Upsilon)$ such that $s_{i}(h)\neq\bar{s}_{i}(h)$. Pick then strategy $\hat{s}_{i}$ that maximizes $U_{i}(\,\cdot\,,\bar{\mu}_{i}|\theta_{i}(t_{i}),h')$ at every history $h'\succ h$ and define new strategy $s_{i}^{0}$ as follows:
\[
s_{i}^{0}(h'):=\left\{
\begin{tabular}{l l}
$\hat{s}_{i}(h')$&$\textup{if  }h'\succ h$,\\
$s_{i}(h')$&$h'=h$,\\
$\bar{s}_{i}(h')$&$\textup{otherwise.}$
\end{tabular}
\right.
\]
Let's check next that $s_{i}^{0}\in r_{i}(\theta_{i}(t_{i}),\bar{\mu}_{i})$. To do it conceive $H_{i}$ as the disjoint union of the following four components:
\[
\{h'\in H_{i}\cup\{h^{0}\}|h'\succ h\}\cup\{h\}\cup\{h'\in H_{i}\cup\{h^{0}\}|h'\prec h\}\cup\{h'\in H_{i}\cup\{h^{0}\}|h\npreceq h'\textup{  and  }h'\npreceq h\}.
\]
We distinguish then four cases:
\begin{itemize}

\item[$\bullet$] $h'\succ h$. The construction of $s_{i}^{0}$ ensures that $U_{i}(s_{i}^{0},\bar{\mu}_{i}|\theta_{i}(t_{i}),h')=U_{i}(\hat{s}_{i},\bar{\mu}_{i}|\theta_{i}(t_{i}),h')$; thus $s_{i}^{0}$ must be a maximizer of $U_{i}(\,\cdot\,,\bar{\mu}_{i}|\theta_{i}(t_{i}),h')$.

\item[$\bullet$] $h'=h$. The above, together with $s_{i}^{0}(h')=s_{i}(h')$, implies that:\footnote{For the first inequality notice that $s_{i}^{0}$ and $s_{i}$ are identical at $h'=h$, and that $s_{i}^{0}$ is a maximizer of $U_{i}(\,\cdot\,,\bar{\mu}_{i}|\theta_{i}(t_{i}),h')$ for every $h''\succ h'$. The equality follows from \eqref{equation:equality} above.}
\[
U_{i}(s_{i}^{0},\bar{\mu}_{i}|\theta_{i}(t_{i}),h')\geq U_{i}(s_{i},\bar{\mu}_{i}|\theta_{i}(t_{i}),h')\\[1ex]
=U_{i}(s_{i},\bar{\mu}_{i}^{k+1}|\theta_{i}(t_{i}),h').
\]
Now, we also have that:\footnote{The first two equalities follow from S5 and the last, from equation \eqref{equation:equality} above. The inequality is a consequence of $s_{i}$ being a best response to $\mu_{i}$ given $\theta_{i}(t_{i})$.}
\begin{align*}
U_{i}(s_{i},\bar{\mu}_{i}^{k+1}|\theta_{i}(t_{i}),h')&=U_{i}(s_{i},\mu_{i}|\theta_{i}(t_{i}),h')\\
&\geq U_{i}(\bar{s}_{i},\mu_{i}|\theta_{i}(t_{i}),h')\\
&=U_{i}(\bar{s}_{i},\bar{\mu}_{i}^{k+1}|\theta_{i}(t_{i}),h')\\
&=U_{i}(\bar{s}_{i},\bar{\mu}_{i}|\theta_{i}(t_{i}),h'),
\end{align*}
and remember that $\bar{s}_{i}$ is a maximizer of $U_{i}(\,\cdot\,,\bar{\mu}_{i}|\theta_{i}(t_{i}),h')$. Thus, it must necessarily hold that $U_{i}(s_{i}^{0},\bar{\mu}_{i}|\theta_{i}(t_{i}),h')=U_{i}(\bar{s}_{i},\bar{\mu}_{i}|\theta_{i}(t_{i}),h')$. Hence, we conclude that $s_{i}^{0}$ is a maximizer of $U_{i}(\,\cdot\,,\bar{\mu}_{i}|\theta_{i}(t_{i}),h')$.

\item[$\bullet$] $h'\prec h$. Notice that the construction of $s_{i}^{0}$ on the one hand, and the facts that $U_{i}(s_{i}^{0},\bar{\mu}_{i}|\theta_{i}(t_{i}),h)=U_{i}(\bar{s}_{i},\bar{\mu}_{i}|\theta_{i}(t_{i}),h)$ and $s_{i}^{0}$ and $\bar{s}_{i}$ are equivalent at any history preceding $h$, on the other, imply that $U_{i}(s_{i}^{0},\bar{\mu}_{i}|\theta_{i}(t_{i}),h')=U_{i}(\bar{s}_{i},\bar{\mu}_{i}|\theta_{i}(t_{i}),h')$. Thus, since $\bar{s}_{i}$ maximizes $U_{i}(\,\cdot\,,\bar{\mu}_{i}|\theta_{i}(t_{i}),h')$ it follows that $s_{i}^{0}$ must maximize $U_{i}(\,\cdot\,,\bar{\mu}_{i}|\theta_{i}(t_{i}),h')$ as well.

\item[$\bullet$] For any other $h'$, clearly, we have that:
\[
U_{i}(s_{i}^{0},\bar{\mu}_{i}|\theta_{i}(t_{i}),h')= U_{i}(\bar{s}_{i},\bar{\mu}_{i}|\theta_{i}(t_{i}),h').
\]

\end{itemize}
We have then reached the following contradictory conclusions: $s_{i}^{0}\notin[\bar{s}_{i}|\Upsilon]_{k}$ and $s_{i}^{0}\in r_{i}(\theta_{i}(t_{i}),\bar{\mu}_{i})=[\bar{s}_{i}|\Upsilon]_{k}$.

\end{itemize}
Hence, if $s_{i}$ is a best response to $\mu_{i}$ for $\theta_{i}(t_{i})$ it must hold that $s_{i}(h)=\bar{s}_{i}(h)$ for every $h\in\textup{H}_{i}(\textup{F}_{k}^{0},\Upsilon)$, that is,
\[
\textup{F}_{i,k+2}(d_{i}^{k+1},t_{i}^{k+1})\subseteq [\bar{s}_{i}|\Upsilon]_{k},
\]
and in consequence, the proof is finally complete.
\end{proof}


\subsection{Proofs of the auxiliary results}
\label{subsection:Bauxiliary}


\subsubsection{Lemma \ref{lemma:basiclemma}: Availability of finite conjectures}
\label{subsubsection:fintieconjectures}

\begin{lemma}
\label{lemma:basiclemma}
Let $\Gamma$ be an extensive form and $\Upsilon$, a standard payoff structure. Then for any $k\geq 0$, any player $i$, any finite type $t_{i}\in T_{i}(\Theta)$ and every strategy $s_{i}\in\textup{F}_{i,k}(\Upsilon,t_{i})$ there exists a finite conjecture $\mu_{i}$ that justifies the inclusion of $s_{i}$ in $\textup{F}_{i,k}(\Upsilon,t_{i})$.
\end{lemma}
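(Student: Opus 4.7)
I would argue by induction on $k$. The base case $k=0$ is immediate: because $t_i$ is finite, $\tau_i(\pi_i(t_i))$ has finite support concentrated on finite types, and one may extend this initial belief to any conditional probability system that is finite at every history by defining beliefs at unreached histories as Dirac masses on a single finite type in $T_{-i}(\Theta)$.

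For the inductive step, fix a witness conjecture $\mu_i$ for $s_i\in\textup{F}_{i,k+1}(\Upsilon,t_i)$ and reshape it into a finite conjecture $\hat\mu_i$. At $h^{0}$ the marginal of $\mu_i(h^{0})$ on $\Theta_0\times T_{-i}(\Theta)$ equals $\tau_i(\pi_i(t_i))$, which is already finitely supported on finite types; since $S_{-i}$ is finite, the same is true of $\mu_i(h^{0})$ itself, and histories reached from $h^{0}$ by conditioning inherit this finiteness automatically. The work therefore concentrates on \emph{from-scratch} histories $h\in\textup{H}_i(\mu_i)\setminus\{h^{0}\}$, where $\hat\mu_i(h)$ must be rebuilt subject to (a) the graph-support constraint $\mu_i(h)[(d_{i,1})_0\times\textup{Graph}(\textup{F}_{-i,k}(\Upsilon,\cdot))]=1$ whenever $h$ is reachable by some $\textup{F}_{-i,k}$-strategy, and (b) the sequential-best-response property for $s_i$ at $h$.

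Both constraints are linear in the belief: (a) restricts the support of $\hat\mu_i(h)$ to the projection onto $S_{-i}(h)\times\Theta_0\times\Theta_{-i}$ of $\textup{Graph}(\textup{F}_{-i,k}(\Upsilon,\cdot))$, which is closed by the upper hemicontinuity of $\textup{F}_{-i,k}(\Upsilon,\cdot)$ (Proposition~\ref{proposition:UHCtypes}), whereas (b) amounts to $|S_i|$ linear inequalities on the marginal of $\hat\mu_i(h)$ on $S_{-i}\times\Theta_0\times\Theta_{-i}$. A Carath\'eodory-type argument applied to the image in $\mathbb{R}^{|S_i|}$ of point-masses supported on this closed set therefore delivers a finitely-supported payoff-marginal $\hat\nu$, with support of cardinality at most $|S_i|+1$, satisfying both (a) on its payoff projection and (b).

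It remains to lift $\hat\nu$ to a finite conjecture on $S_{-i}(h)\times\Theta_0\times T_{-i}(\Theta)$. For each $(s_{-i},\theta_0,\theta_{-i})$ in its support I need a \emph{finite} type $\hat t_{-i}$ with $\theta_{-i}(\hat t_{-i})=\theta_{-i}$ and $s_{-i}\in\textup{F}_{-i,k}(\Upsilon,\hat t_{-i})$. I would establish this as an auxiliary ``finite-rationalizing-type'' sub-claim proved by induction on $k$ jointly with the main statement: given any (possibly non-finite) $t_{-i}$ with $s_{-i}\in\textup{F}_{-i,k+1}(\Upsilon,t_{-i})$ witnessed by $\mu_{-i}$, one builds $\hat t_{-i}$ by setting $\theta_{-i}(\hat t_{-i}):=\theta_{-i}(t_{-i})$ and taking its initial belief to be the finite-support measure obtained by the same Carath\'eodory extraction applied to $\mu_{-i}(h^{0})$, where each opponent type appearing in the support is replaced by a finite type supplied by the sub-claim at order $k$. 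Since conditional expected payoffs depend on a type only through its payoff coordinate, preserving $\theta_{-i}$ preserves the best-response inequalities. The main obstacle is orchestrating this simultaneous induction cleanly---in particular, verifying that the lifted measure is genuinely supported on $\textup{Graph}(\textup{F}_{-i,k}(\Upsilon,\cdot))$ (which follows from the sub-claim) and that the Carath\'eodory extraction is performed within the correct closed convex region, so that (a) and (b) are maintained together rather than separately.
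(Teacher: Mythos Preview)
Your approach is sound and genuinely different from the paper's. The paper does not argue by induction on $k$: it fixes $k$, takes an arbitrary (possibly non-finite) witness $\mu_i$, and builds a \emph{sequence} of finite conjectures $\mu_i^n\to\mu_i$ by partitioning $\Theta_0\times T_{-i}(\Theta)$ into measurable cells of diameter $<1/n$ chosen (via upper hemicontinuity of each $\textup{F}_{-i,\ell}(\Upsilon,\cdot)$, $\ell\leq k$) so that collapsing the mass of $\mu_i(h)$ on each cell onto a single finite-type representative automatically preserves the graph-support constraint; upper hemicontinuity of $r_i$ then yields $s_i\in r_i(\theta_i(t_i),\mu_i^n)$ for $n$ large. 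Your Carath\'eodory-plus-auxiliary-induction route is more constructive (no limiting argument) but heavier in bookkeeping, whereas the paper's discretization preserves $\textup{marg}_{S_{-i}}\mu_i(h)$ exactly---so the from-scratch set $\textup{H}_i(\mu_i)$ is unchanged---and needs no auxiliary finite-rationalizing-type statement at all. Two points to tighten in your sketch: (i) at a from-scratch $h$ the binding support constraint is at level $\ell^*(h)$, the largest $\ell\leq k-1$ with $h\in\textup{H}_i(\textup{F}_{-i,\ell},\Upsilon)$, so both the closed set you feed into Carath\'eodory and the order at which you invoke the sub-claim should be $\ell^*(h)$, not $k$; (ii) sequential rationality contributes one linear functional for each pair $(h',s_i')$ with $h'\succeq h$ and $h'\in H_i(s_i)$, not merely $|S_i|$, and it is cleanest to also match the values $\mu_i(h)[\{s_{-i}\}\times\Theta_0\times T_{-i}(\Theta)]$ for every $s_{-i}\in S_{-i}$ so that $\textup{marg}_{S_{-i}}\mu_i(h)$ is preserved and no new from-scratch histories are created by the extraction.
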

\begin{proof}
Fix $k\geq 0$, player $i$, finite type $T_{i}(\Theta)$ and strategy $s_{i}\in\textup{F}_{i,k+1}(\Upsilon,t_{i})$ and pick conjecture $\mu_{i}$ that justifies the inclusion of $s_{i}$ in $\textup{F}_{i,k+1}(\Upsilon,t_{i})$. For convenience, let us denote $X_{i}:=\Theta_{0}\times T_{-i}(\Theta)$. Since $X_{i}$ is separable, we ca pick 
countable $(x_{i}^{m})_{m\in\mathds{N}}$ where $\{x_{i}^{m}\}_{m\in\mathds{N}}$ is dense in $X_{i}$. 

Now, for each $m,n\in \mathds{N}$ let $B(x_{i}^{m},1/n)$ denote the ball of radius $1/n$ around $x_{i}^{m}$. We know because of the upper hemicontinuity of $\textup{F}_{-i,\ell}(\Upsilon,\,\cdot\,)$ for each $\ell=0,1,\dots,k$ that for each $n,m\in \mathds{N}$ there exists some open set $W_{i}^{\ell,n,m}\subseteq B(x_{i}^{m}, 1/n)$ such that $\textup{F}_{-i,\ell}(\Upsilon,t_{-i})\subseteq \textup{F}_{-i,\ell}(\Upsilon,t_{-i}^{m})$ for every $t_{-i}$ is in the projection of $W_{i}^{\ell,n,m}$ on $T_{-i}(\Theta)$, $t_{-i}^{m}$ being the projection on $T_{-i}(\Theta)$ of $x_{i}^{m}$. Furthermore, for each $\ell=0,1,\dots,k$ and $n \in \mathds{N}$, the family $\{W_{i}^{\ell,n,m}\}_{m\in\mathds{N}}$ is an open cover of $X_{i}$, which, due to $X_{i}$ being compact, we can assume as finite: $\{W_{i}^{\ell,n,m}\}_{m=1}^{M_{\ell,n}}$. 

Now, for each $\ell=0,1,\dots,k$ and $n \in \mathds{N}$ set:
\[
V_{i}^{\ell,n,m}:=W_{i}^{\ell,n,m}\setminus
\bigcup_{r=1}^{m-1}W_{i}^{\ell,n,r}.
\]
Notice that for each $\ell=0,1,\dots,k$ and $n \in \mathds{N}$ family $\{V_{i}^{\ell,n,m}\}_{m=1}^{M_{\ell,n}}$ is a partition of $X_{i}$ consisting of measurable sets, contained in a ball of radius $1/n$. Since the set of finite types is dense in $X_{i}$ for each $\ell=0,1,\dots,k$ and $n \in \mathds{N}$,
there exists some list $(y_{i}^{\ell,n,m})_{m=1}^{M_{\ell,n}}$ such that, for each $m=1,\dots,M_{\ell,n}$, the projection on $T_{-i}(\Theta)$ of $y_{i}^{\ell,n,m}$ is finite and $y_{i}^{\ell,n,m}\in V_{i}^{\ell,n,m}$.

We turn now back to $\mu_{i}$. Let us denote $\textup{H}_{i,k}(\mu_{i}):=\textup{H}_{i}(\mu_{i})\cap\textup{H}_{i}(\textup{F}_{-i,k},\Theta)$ and then, define recursive, for each $\ell=0,1,\dots,k-1$, set:
\[
\textup{H}_{i,\ell}(\mu_{i}):=\left(\textup{H}_{i}(\mu_{i})\cap\textup{H}_{i}(\textup{F}_{-i,\ell},\Upsilon)\right)\setminus\textup{H}_{i,k}(\mu_{i}).
\] 
Next, we will construct a conditional probability system $\mu_{i}^{n}$ for each $n\in\mathds{N}$. First, for each each $\ell=0,1,\dots,k$ and each $h\in\textup{H}_{i,\ell}(\mu_{i})\setminus\{h^{0}\}$ set: 
\[
\mu_{i}^{n}(h)[(s_{-i},y_{i}^{\ell,m})]:=\mu_{i}(h)[\{s_{-i}\}\times V_{i}^{\ell,n,m}],
\]
for every $s_{-i}\in S_{-i}$ and every $m=1,\dots,M_{\ell,n}$. Second, set $\mu_{i}^{n}(h^{0}):=\mu_{i}(h^{0})$. Finally, for each $h\notin\textup{H}_{i}(\mu_{i})$ define $\mu_{i}^{n}(h)$ using the chain rule. Notice that the marginals on $S_{-i}$ of $\mu_{i}(h)$ and each $\mu_{i}^{n}(h)$ coincide for every history $h$, and this guarantees that $\mu_{i}^{n}$ is (or, has been) well-defined.

Now, notice also that, for each $\ell=0,1,\dots,k$, $\mu_{i}^{n}(h)$ assigns probability one to the graph of $\textup{F}_{-i,\ell}(\Upsilon,\,\cdot\,)$. Obviously, every $\mu_{i}^{m}$ is consistent with type $t_{i}$, and sequence $(\mu_{i}^{n})_{n\in\mathds{N}}$ converges to $\mu_{i}$. Thus, the upper hemicontinuity of $r_{i}(\theta_{i}(t_{i}),\,\cdot\,)$ ensures the existence of some $N\in\mathds{N}$ such that $s_{i}\in r_{i}(\theta_{i}(t_{i}),\mu_{i}^{n})$ for every $n\geq N$. Hence, every $\mu_{i}^{n}$ where $n\geq N$ is a \textit{finite} conjecture that justifies the inclusion of $s_{i}$ in $\textup{F}_{i,k+1}(\Upsilon,t_{i})$
\end{proof}


\subsubsection{Lemma \ref{lemma:firstperturbation1}: A partial upper hemicontinuity result}
\label{subsubsection:partialUHC}

\begin{lemma}
\label{lemma:firstperturbation1}
Let $\Gamma$ be an extensive form and $\Upsilon$, a standard payoff structure. Then for any $k\geq 0$, any player $i$ there exists some $n_{k}^{i}\in\mathds{N}$ such that the following three hold:
\begin{itemize}

\item[$(i)$] For every type $t_{i}\in T_{i}(\Theta)$, $\textup{F}_{i,k}(\Upsilon,t_{i})\subseteq\textup{F}_{i,k}(\Upsilon^{n},t_{i})\subseteq\textup{F}_{i,k}(\Upsilon^{n+\ell},t_{i})$ for every $n\geq n_{k}^{i}$ and every $\ell\in\mathds{N}$.

\item[$(ii)$] For every type $t_{i}\in T_{i}(\Theta)$, $\bigcap_{n\geq n_{k}^{i}}\textup{F}_{i,k}(\Upsilon^{n},t_{i})\subseteq\textup{F}_{i,k}(\Upsilon,t_{i})$.

\item[$(iii)$] For every player $j\neq i$, $\textup{H}_{i}(\textup{F}_{j,k},\Upsilon^{n})=\textup{H}_{i}(\textup{F}_{j,k},\Upsilon)$ for every $n\geq n_{k}^{i}$.
\end{itemize}
\end{lemma}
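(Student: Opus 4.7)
The plan is to proceed by induction on $k$. The base case $k=0$ is immediate since $\textup{F}_{i,0}(\Upsilon', t_i) = S_i$ for every payoff structure $\Upsilon'$ and type $t_i$, so all three claims hold vacuously at any threshold. For the inductive step I assume (i)--(iii) hold at level $k$ with a common threshold $\bar{n} := \max_{j} n_{k}^{j}$, and I aim to produce a threshold $n_{k+1}^{i}$ for which (i)--(iii) hold at level $k+1$. The central tool throughout is the canonical embedding $\Theta \hookrightarrow \Theta^{n}$ implemented by $\theta \mapsto (\theta_{0}, \theta_{1}^{\infty}(\,\cdot\,,\theta_{1}), \ldots, \theta_{I}^{\infty}(\,\cdot\,,\theta_{I}))$; because the canonical representation is indifferent to the strategy argument of $\theta_{i}^{\infty}$, this gives a well-defined inclusion that lifts at every order to $T_{-i}(\Theta) \hookrightarrow T_{-i}(\Theta^{n})$.

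I would first establish the inclusion $\textup{F}_{i,k+1}(\Upsilon, t_{i}) \subseteq \textup{F}_{i,k+1}(\Upsilon^{n}, t_{i})$ for $n \geq \bar{n}$. Given a justifying conjecture $\mu_{i}$ for $s_{i}$ in the left-hand side, I push it through the canonical embedding at every history to obtain a conjecture $\mu_{i}^{n}$ for $\Upsilon^{n}$. Consistency with $t_{i}$ and the best-response property for $\theta_{i}(t_{i})$ are immediate, since $u_{i}^{n}$ coincides with $u_{i}$ at $\theta_{i}^{\infty}$-types. The rationalization restriction follows from hypothesis (iii) at level $k$, which gives $\textup{H}_{i}(\textup{F}_{-i,k}, \Upsilon^{n}) = \textup{H}_{i}(\textup{F}_{-i,k}, \Upsilon)$ (so the family of restriction histories is unchanged), combined with hypothesis (i) at level $k$, which gives $\textup{F}_{-i,k}(\Upsilon, \,\cdot\,) \subseteq \textup{F}_{-i,k}(\Upsilon^{n}, \,\cdot\,)$ (so a measure concentrated on the former graph is also concentrated on the latter).

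Next I would turn to (ii), which is the core technical content. Starting from $s_{i} \in \bigcap_{n \geq \bar{n}} \textup{F}_{i,k+1}(\Upsilon^{n}, t_{i})$ with justifying conjectures $\mu_{i}^{n}$, I extract a convergent subsequence $\mu_{i}^{n_{r}} \to \mu_{i}$ by appealing to Lemma \ref{lemma:subjectivemodels}, which provides Hausdorff convergence of $T_{-i}(\Theta^{n})$ to $T_{-i}(\Theta)$ and the associated compactness for conditional probability systems on converging spaces of uncertainty. I then verify that $\mu_{i}$ is a valid conjecture in $\Upsilon$ justifying $s_{i}$: consistency with $t_{i}$ follows from continuity of marginalization, the best-response property from continuity of conditional expected payoffs together with the finiteness of $S_{i}$, and the rationalization restriction from a closed-graph argument in the spirit of Proposition \ref{proposition:UHCall}. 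Claim (iii) then follows by an analogous limit argument: the direction $\supseteq$ is immediate from the first inclusion of (i) at level $k+1$, while $\subseteq$ is obtained by contradiction, extracting a limit type $t_{j}$ for which a witnessing strategy remains rationalizable in $\Upsilon$, and leveraging the finiteness of $H$ and $S_{j}$ to secure a uniform threshold. The second inclusion of (i) is then a bookkeeping consequence: the sandwich $\textup{F}_{i,k+1}(\Upsilon, t_{i}) \subseteq \textup{F}_{i,k+1}(\Upsilon^{n}, t_{i})$ together with $\bigcap_{n \geq \bar{n}} \textup{F}_{i,k+1}(\Upsilon^{n}, t_{i}) \subseteq \textup{F}_{i,k+1}(\Upsilon, t_{i})$ from (ii), combined with the finiteness of $S_{i}$, forces the sequence to stabilize at the common value $\textup{F}_{i,k+1}(\Upsilon, t_{i})$ for all $n$ beyond a suitable threshold $n_{k+1}^{i}$.

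The hard part will be the limit argument underlying (ii) and (iii). Conjectures $\mu_{i}^{n}$ live on the varying spaces $S_{-i} \times \Theta_{0} \times T_{-i}(\Theta^{n})$ and satisfy rationalization constraints that themselves depend on $n$ through $\textup{F}_{-i,k}(\Upsilon^{n}, \,\cdot\,)$. Transferring the constraint to the limit requires combining hypothesis (i) at level $k$---which, read as $\textup{F}_{-i,k}(\Upsilon, \,\cdot\,) \subseteq \textup{F}_{-i,k}(\Upsilon^{n}, \,\cdot\,)$, goes in the \emph{wrong} direction for a na\"{i}ve upper hemicontinuity argument, since it enlarges the rationalizing graph as $n$ grows---with hypothesis (ii) at level $k$, in order to show that the limit measure is supported on the smaller graph of $\textup{F}_{-i,k}(\Upsilon, \,\cdot\,)$. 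This is the same converging-compact-sets machinery that underpins Proposition \ref{proposition:UHCall}, but applied to a sequence of perturbed payoff structures rather than a sequence of subjective models, and adapted to the extensive-form rationalization constraint, which, unlike the backward-rationalization one, is only imposed on a subset of histories whose identification across the sequence is exactly what hypothesis (iii) furnishes.
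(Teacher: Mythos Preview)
Your inductive skeleton matches the paper's, and your treatment of the first inclusion in (i) and the broad shape of (ii) and (iii) are in line with what the paper does. However, there is a genuine gap in your derivation of the \emph{second} inclusion of (i), and this gap is not cosmetic: the monotonicity it encodes is precisely the engine that drives the limit argument in (ii).

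Your claim is that the sandwich $\textup{F}_{i,k+1}(\Upsilon,t_i)\subseteq\textup{F}_{i,k+1}(\Upsilon^n,t_i)$ together with $\bigcap_n\textup{F}_{i,k+1}(\Upsilon^n,t_i)\subseteq\textup{F}_{i,k+1}(\Upsilon,t_i)$ and finiteness of $S_i$ forces the sequence to stabilize at $\textup{F}_{i,k+1}(\Upsilon,t_i)$. This is false: take $S_i=\{1,2,3\}$, $A=\{1\}$, and $B_n=\{1,2\}$ for $n$ odd, $B_n=\{1,3\}$ for $n$ even; then $A\subseteq B_n$, $\bigcap_n B_n=A$, yet $B_n$ never equals $A$ and is not monotone. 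Even if a type-by-type stabilization held, you would still need a threshold uniform over the uncountable family $T_i(\Theta)$, and finiteness of $S_i$ alone does not deliver that. Finally, note that stabilization at $\textup{F}_{i,k+1}(\Upsilon,t_i)$ would render the perturbation $\Upsilon^n$ useless for its intended purpose in Lemma~\ref{lemma:firstperturbation2} (breaking ties via the extra payoff types), so the conclusion you are reaching for is too strong to be true.

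The paper instead proves the second inclusion of (i) \emph{directly}, by transferring a justifying conjecture along the nesting of state spaces $\Theta^{n}$, using the induction hypotheses (i) and (iii) at level $k$ exactly as you did for the first inclusion. This monotonicity is then invoked in the proof of (ii): given conjectures $\mu_i^m$ justifying $s_i$ in $\textup{F}_{i,k+1}(\Upsilon^{n_m},t_i)$, one uses the monotonicity at level $k$ to argue that every $\mu_i^{m+r}$ is concentrated on the \emph{fixed} closed set $\Theta_0^{n_m}\times\textup{Graph}\bigl(\textup{F}_{-i,\ell}(\Upsilon^{n_m},\cdot)\bigr)$, so the limit $\mu_i$ is too; one then intersects over $m$ and applies hypothesis (ii) at level $k$ to land in $\textup{Graph}\bigl(\textup{F}_{-i,\ell}(\Upsilon,\cdot)\bigr)$. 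Your description of (ii) gestures at hypotheses (i) and (ii) but never isolates this ``fix the target graph, then pass to the limit, then intersect'' move---and without the second inclusion of (i) established independently, you cannot execute it.
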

\begin{proof}
We proceed by induction on $k$. The claims hold trivially for the initial case ($k=0$) so we can focus on the proof of the inductive step. Suppose that the claims hold for $k$; we verify now that they also hold for $k+1$:

\vspace{0.3cm}

\noindent\textsc{Claim $(i)$.} Fix player $i$ and set $n_{k+1}^{i,1}:=\textup{max}\{n_{\ell}^{j}|\ell=0,\dots,k\textup{  and  }j\in I\}$. We know from part $(i)$ of the induction hypothesis that for every $n\geq n_{k+1}^{i,1}$ and every $\ell\in\mathds{N}$,
\[
\textup{F}_{i,k}(\Upsilon,t_{i})\subseteq\textup{F}_{i,k}(\Upsilon^{n},t_{i})\subseteq\textup{F}_{i,k}(\Upsilon^{n+\ell},t_{i}),
\]
for every type $t_{i}\in T_{i}(\Theta)$ and,
\[
\textup{Graph}\left(\textup{F}_{-i,k}(\Upsilon,\,\cdot\,)\right)\subseteq\textup{Graph}\left(\textup{F}_{-i,k}(\Upsilon^{n},\,\cdot\,)\right)\subseteq\textup{Graph}\left(\textup{F}_{-i,k}(\Upsilon^{n+\ell},\,\cdot\,)\right).
\]
We also know from part $(iii)$ that,
\[
\textup{H}_{i}(\textup{F}_{-i,k},\Upsilon^{n+\ell})=\textup{H}_{i}(\textup{F}_{-i,k},\Upsilon^{n})=\textup{H}_{i}(\textup{F}_{-i,k},\Upsilon).
\]
Fix now $n\geq n_{k+1}^{i,1}$, $\ell\in\mathds{N}$, type $t_{i}\in T_{i}(\Theta)$, strategy $s_{i}\in\textup{F}_{i,k+1}(\Upsilon,t_{i})$ and conjecture $\mu_{i}$ that justifies the inclusion of $s_{i}$ in $\textup{F}_{i,k+1}(\Upsilon,t_{i})$. It follows from the above that $s_{i}\in\textup{F}_{i,k}(\Upsilon^{n},t_{i})$ and that $\mu_{i}$ is a conjecture that justifies the inclusion of $s_{i}$ in $\textup{F}_{i,k+1}(\Upsilon^{n},t_{i})$. Fix now $s_{i}^{n}\in\textup{F}_{i,k+1}(\Upsilon^{n},t_{i})$ and conjecture $\mu_{i}^{n}$ that satisfies the inclusion of $s_{i}^{n}$ in $\textup{F}_{i,k+1}(\Upsilon^{n},t_{i})$. Again, it clearly follows from the above that $s_{i}\in\textup{F}_{i,k}(\Upsilon^{n+\ell},t_{i})$ and that $\mu_{i}^{n}$ is a conjecture that satisfies the inclusion of $s_{i}^{n}$ in $\textup{F}_{i,k+1}(\Upsilon^{n+\ell},t_{i})$ (see the inclusion property in comment (B) in Section \ref{subsubsection:perturbation1}).
\hfill$\bigstar$

\vspace{0.3cm}

\noindent\textsc{Claim $(ii)$.} Fix player $i$ and set again $n_{k+1}^{i,1}:=\textup{max}\{n_{k}^{j}|\ell=0,\dots,k\textup{  and  }j\in I\}$. Fix type $t_{i}\in T_{i}(\Theta)$ and strategy $s_{i}\in\bigcap_{n\geq n_{k+1}^{i,2}}\textup{F}_{i,k+1}(\Upsilon^{n},t_{i})$, and take convergent sequence of conjectures $(\mu_{i}^{m})_{m\geq n_{k+1}^{i,2}}$ such that, for each $m\geq n_{k+1}^{i,2}$, $\mu_{i}^{m}$ justifies the inclusion of $s_{i}$ in $\textup{F}_{i,k+1}(\Upsilon^{n_{m}},t_{i})$ (being $n_{m}\geq m$). Let $\mu_{i}$ denote the limit of this sequence. Notice that we know from part $(ii)$ of the induction hypothesis that:
\[
s_{i}\in\bigcap_{n\geq n_{k+1}^{i,2}}\textup{F}_{i,k}(\Upsilon^{n},t_{i})\subseteq\textup{F}_{i,k}(\Upsilon,t_{i}).
\]
Now, set $\ell=0,\dots, k$ and pick history $h\in\textup{H}_{i}(\textup{F}_{-i,\ell},\Upsilon)$. Notice that:\footnote{The first implication follows from parts $(i)$ and $(iii)$ of the induction hypothesis. The second is a consequence of the graph of $\textup{F}_{-i,\ell}(\Upsilon^{n_{m}},\,\cdot\,)$ being closed. The third and the fourth inclusions are obvious and the last one follows from part $(ii)$ the induction hypothesis.}
\begin{align*}
\forall m\geq n_{k+1}^{i,2},\,\mu_{i}^{m}&(h)[\Theta_{0}^{n_{m}}\times\textup{Graph}\left(\textup{F}_{-i,\ell}\left(\Upsilon^{n_{m}},\,\cdot\,\right)\right)]=1\Longrightarrow\\[2ex]
\Longrightarrow&\,\forall m\geq n_{k+1}^{i,2},\forall r\geq 0,\,\mu_{i}^{m+r}(h)\left[\Theta_{0}^{n_{m}}\times\textup{Graph}\left(\textup{F}_{-i,\ell}\left(\Upsilon^{n_{m}},\,\cdot\,\right)\right)\right]=1\\[1.5ex]
\Longrightarrow&\,\forall m\geq n_{k+1}^{i,2},\,\underset{r\rightarrow\infty}{\textup{lim}}\mu_{i}^{m+r}(h)\left[\Theta_{0}^{n_{m}}\times\textup{Graph}\left(\textup{F}_{-i,\ell}\left(\Upsilon^{n_{m}},\,\cdot\,\right)\right)\right]=1\\[1.5ex]
\Longrightarrow&\,\forall m\geq n_{k+1}^{i,2},\,\mu_{i}(h)\left[\Theta_{0}\times\textup{Graph}\left(\textup{F}_{-i,\ell}\left(\Upsilon^{n_{m}},\,\cdot\,\right)\right)\right]=1\\[1.5ex]
\Longrightarrow&\,\mu_{i}(h)\left[\Theta_{0}\times\bigcap_{m\geq n_{k+1}^{i,2}}\textup{Graph}\left(\textup{F}_{-i,\ell}\left(\Upsilon^{n_{m}},\,\cdot\,\right)\right)\right]=1\\[1.5ex]
\Longrightarrow&\,\mu_{i}(h)\left[\Theta_{0}\times\textup{Graph}\left(\textup{F}_{-i,\ell}\left(\Upsilon,\,\cdot\,\right)\right)\right]=1.
\end{align*}
Given the above, upper hemicontinuity of the best response operator ensures that $\mu_{i}$ is a conjecture that justifies the inclusion of $s_{i}$ in $\textup{F}_{i,k+1}(\Upsilon,t_{i})$.
\hfill$\bigstar$

\vspace{0.3cm}

\noindent\textsc{Claim $(iii)$.} Fix players $i$ and $j\neq i$. It follow from part $(i)$ that $\textup{H}_{i}(\textup{F}_{j,k+1},\Upsilon)\subseteq\textup{H}_{i}(\textup{F}_{j,k+1},\Upsilon^{n})$ for every $n\geq n_{k+1}^{j,1}$ (the latter has been defined in the proofs of the previous two claims). To prove the reverse inclusion simply notice that, since $H_{i}\cup\{h^{0}\}$ is finite, it follows from part $(ii)$ that there must exist some $n_{k+1}^{j,*}\in\mathds{N}$ such that:\footnote{To better see it suppose by contradiction that for any $m\in\mathds{N}$ there exists some $n_{m}\geq n$ such that $h\in\textup{H}_{i}(\textup{F}_{j,k+1},\Upsilon^{n_{m}})$. Then, we know from part $(i)$ that there exists some $\bar{m}$ such that $h\in\textup{H}_{i}(\textup{F}_{j,k+1},\Upsilon^{m})$ for every $m\geq\bar{m}$ and thus, it follows from part $(ii)$ that $h\in\textup{H}_{i}(\textup{F}_{j,k+1},\Upsilon).$}
\[
h\notin\textup{H}_{i}(\textup{F}_{j,k+1},\Upsilon)\Longrightarrow h\notin \textup{H}_{i}(\textup{F}_{j,k+1},\Upsilon^{n})
\]
for every $n\geq n_{k+1}^{j,*}$. Hence, it follows that $\textup{H}_{i}(\textup{F}_{j,k+1},\Upsilon^{n})=\textup{H}_{i}(\textup{F}_{j,k+1},\Upsilon)$ for every $n\geq n_{k+1}^{2}:=\textup{max}\{\textup{max}\{n_{k+1}^{j,1},n_{k+1}^{j,*}\}|j\in I\}$.
\hfill$\bigstar$

\vspace{0.3cm}

Thus, to finish the proof simply set $n_{k+1}^{i}:=\textup{max}\{n_{k+1}^{i,1},n_{k+1}^{i,2}\}$.
\end{proof}


\bibliographystyle{ifac}

\bibliography{bibliography}

\begin{thebibliography}{66}
\providecommand{\natexlab}[1]{#1}

\bibitem[\protect\citeauthoryear{Aliprantis and Border}{2007}]{aliprantis-07}
Aliprantis, Charambolos~D. and Kim~C. Border (2007).
\newblock \emph{\href{https://www.springer.com/gp/book/9783540295860}{Infinite
  dimensional analysis: A hitchhiker's guide}}.
\newblock Springer, Berlin, 3rd edn.

\bibitem[\protect\citeauthoryear{Angeletos, Hellwig and
  Pavan}{2006}]{angeletos-06}
Angeletos, George-Marios, Christian Hellwig and Alessandro Pavan (2006).
\newblock \href{https://economics.mit.edu/files/338}{``Signaling in a global
  game: Coordination and policy traps''}.
\newblock \emph{Journal of Political Economy}, \textbf{114}, 452--484.

\bibitem[\protect\citeauthoryear{Angeletos, Hellwig and
  Pavan}{2007}]{angeletos-07}
Angeletos, George-Marios, Christian Hellwig and Alessandro Pavan (2007).
\newblock \href{https://economics.mit.edu/files/343}{``Dynamic global games of
  regime change: Learning, multiplicity and timing of attacks''}.
\newblock \emph{Econometrica}, \textbf{75}, 711--756.

\bibitem[\protect\citeauthoryear{Baliga and Sj\"{o}str\"{o}m}{2012}]{baliga-12}
Baliga, Sandeep and Tjomas Sj\"{o}str\"{o}m (2012).
\newblock
  \href{https://www.aeaweb.org/articles?id=10.1257/aer.102.6.2897}{``The
  strategy of manipulating conflict''}.
\newblock \emph{American Economic Review}, \textbf{102}, 2897--2922.

\bibitem[\protect\citeauthoryear{Battigalli}{1997}]{battigalli-97}
Battigalli, Pierpaolo (1997).
\newblock
  \href{https://www.dropbox.com/s/bq12psrek67v4r9/battigalli-97.pdf}{``{O}n
  rationalizability on extensive-form games''}.
\newblock \emph{Journal of Economic Theory}, \textbf{74}, 40--61.

\bibitem[\protect\citeauthoryear{Battigalli and {De
  Vito}}{2018}]{battigalli-18}
Battigalli, Pierpaolo and Nicodemo {De Vito} (2018).
\newblock
  \href{http://didattica.unibocconi.it/mypage/upload/48808_20190611_104553_BATTIDEVITO2018_WP-629.PDF}{``Beliefs,
  plans, and perceived intentions in dynamic games''}.
\newblock IGIER Working Paper Series \#629.

\bibitem[\protect\citeauthoryear{Battigalli and
  Siniscalchi}{2002}]{battigalli-02}
Battigalli, Pierpaolo and Marciano Siniscalchi (2002).
\newblock
  \href{https://www.dropbox.com/s/5efnmlndl9x8drn/battigalli-siniscalchi-02.pdf}{``{S}trong
  belief and forward induction reasoning''}.
\newblock \emph{Journal of Economic Theory}, \textbf{106}, 356--391.

\bibitem[\protect\citeauthoryear{Battigalli and
  Siniscalchi}{2003}]{battigalli-03}
Battigalli, Pierpaolo and Marciano Siniscalchi (2003).
\newblock
  \href{http://faculty.wcas.northwestern.edu/~msi661/bs03iig.pdf}{``Rationalization
  and incomplete information''}.
\newblock \emph{The B.E. Journal of Theoretical Economics}, \textbf{3}, 1--46.

\bibitem[\protect\citeauthoryear{Battigalli and
  Siniscalchi}{2007}]{battigalli-07}
Battigalli, Pierpaolo and Marciano Siniscalchi (2007).
\newblock
  \href{https://www.dropbox.com/s/ajbo7qsv5onv6zy/battigalli-siniscalchi-07.pdf}{``{I}nteractive
  epistemology in games with payoff uncertainty''}.
\newblock \emph{Research in Economics}, \textbf{61}, 165--184.

\bibitem[\protect\citeauthoryear{Brandenburger and
  Dekel}{1993}]{brandenburger-93}
Brandenburger, Adam and Eddie Dekel (1993).
\newblock
  \href{https://www.dropbox.com/s/utawopb352qc49w/brandenburger-dekel-93.pdf}{``{H}ierarchies
  of beliefs and common knowledge''}.
\newblock \emph{Journal of Economic Theory}, \textbf{59}, 189--198.

\bibitem[\protect\citeauthoryear{Carlsson and van Damme}{1993}]{carlsson-93}
Carlsson, Hans and Eric van Damme (1993).
\newblock \href{https://www.jstor.org/stable/2951491}{``Global games and
  equilibrium selection''}.
\newblock \emph{Econometrica}, \textbf{61}, 989--1018.

\bibitem[\protect\citeauthoryear{Catonini}{2019}]{catonini-19}
Catonini, Emiliano (2019).
\newblock
  \href{https://www.sciencedirect.com/science/article/pii/S089982561830201X}{``Rationalizability
  and epistemic priority orderings''}.
\newblock \emph{Games and Economic Behavior}, \textbf{114}, 101--117.

\bibitem[\protect\citeauthoryear{Catonini}{2020}]{catonini-20}
Catonini, Emiliano (2020).
\newblock
  \href{https://www.sciencedirect.com/science/article/pii/S0899825620300051}{``On
  non-monotonic strategic reasoning''}.
\newblock \emph{Games and Economic Behavior}, \textbf{120}, 209--224.

\bibitem[\protect\citeauthoryear{Chen and Micali}{2013}]{chen-13}
Chen, Jing and Silvio Micali (2013).
\newblock
  \href{http://onlinelibrary.wiley.com/doi/10.3982/TE942/abstract}{``The order
  independence of iterated dominance in extensive games''}.
\newblock \emph{Theoretical Economics}, \textbf{8}, 25--63.

\bibitem[\protect\citeauthoryear{Chen}{2012}]{chen-12}
Chen, Yi-Chun (2012).
\newblock
  \href{http://www.sciencedirect.com/science/article/pii/S0899825612000267}{``A
  structure theorem for rationalizability in the normal form of dynamic
  games''}.
\newblock \emph{Games and Economic Behavior}, \textbf{75}, 587--597.

\bibitem[\protect\citeauthoryear{Chen, {Di Tillio}, Faingold and
  Xiong}{2010}]{chen-10}
Chen, Yi-Chun, Alfredo {Di Tillio}, Eduardo Faingold and Siyang Xiong (2010).
\newblock
  \href{https://drive.google.com/file/d/0B95AwlwATu7wellreHpvRm91YW8/edit?usp=sharing}{``Uniform
  topologies on types''}.
\newblock \emph{Theoretical Economics}, \textbf{5}, 445--478.

\bibitem[\protect\citeauthoryear{Chen, Mueller-Frank and Pai}{2020}]{chen-20}
Chen, Yi-Chun, Manuel Mueller-Frank and Mallesh~M. Pai (2020).
\newblock
  \href{http://yichun.weebly.com/uploads/3/6/2/9/3629507/continuousimplementationsimple.pdf}{``Continuous
  implementation with direct revelation mechanisms''}.
\newblock Mimeo.

\bibitem[\protect\citeauthoryear{Chen, Tillio, Faingold and
  Xiong}{2017}]{chen-17}
Chen, Yi-Chun, Alfredo~Di Tillio, Eduardo Faingold and Siyang Xiong (2017).
\newblock
  \href{http://yichun.weebly.com/uploads/3/6/2/9/3629507/2016_08_10_impact.pdf}{``Characterizing
  the stratrategic impact of misspecified beliefs''}.
\newblock \emph{The Review of Economic Studies}, \textbf{84}, 1424--1471.

\bibitem[\protect\citeauthoryear{Cho and Kassa}{2017}]{cho-17}
Cho, In-Koo and Kenneth Kassa (2017).
\newblock
  \href{https://research.stlouisfed.org/publications/review/2017/07/05/model-averaging-and-persistent-disagreement/}{``Model
  averaging and persistent disagreement''}.
\newblock \emph{Federal Reserve Bank of St. Louis Review,}, \textbf{99},
  279--294.

\bibitem[\protect\citeauthoryear{Chung and Ely}{2007}]{chung-07}
Chung, Kim-Sau and Jeffrey~C. Ely (2007).
\newblock \href{https://www.jstor.org/stable/4626147?seq=1}{``Foundations of
  dominant-strategy mechanisms''}.
\newblock \emph{Review of Economic Studies}, \textbf{74}, 447--476.

\bibitem[\protect\citeauthoryear{Dekel, Fudenberg and Morris}{2006}]{dekel-06}
Dekel, Eddie, Drew Fudenberg and Stephen Morris (2006).
\newblock
  \href{http://www.princeton.edu/~smorris/pdfs/Morris-TopologiesOnTypes.pdf}{``Topologies
  on types''}.
\newblock \emph{Theoretical Economics}, \textbf{1}, 275--309.

\bibitem[\protect\citeauthoryear{Dekel, Fudenberg and Morris}{2007}]{dekel-07}
Dekel, Eddie, Drew Fudenberg and Stephen Morris (2007).
\newblock
  \href{http://econtheory.org/ojs/index.php/te/article/viewFile/20070015/1059}{``Interim
  correlated rationalizability''}.
\newblock \emph{Theoretical Economics}, \textbf{2}, 15--40.

\bibitem[\protect\citeauthoryear{Dekel, Lipman and Rustichini}{1998}]{dekel-98}
Dekel, Eddie, Barton~L. Lipman and Aldo Rustichini (1998).
\newblock
  \href{https://www.jstor.org/stable/2998545?origin=crossref\&seq=1}{``Standard
  state-space models preclude unawareness''}.
\newblock \emph{Econometrica}, \textbf{66}, 159--173.

\bibitem[\protect\citeauthoryear{Dye}{1985}]{dye-85}
Dye, Ronal~A. (1985).
\newblock \href{https://www.jstor.org/stable/2490910?seq=1}{``Disclosure of
  nonproprietary information''}.
\newblock \emph{Journal of Accounting Research}, \textbf{23}, 123--145.

\bibitem[\protect\citeauthoryear{Ely and P\c{e}ski}{2011}]{ely-11}
Ely, Jeffrey and Marcin P\c{e}ski (2011).
\newblock \href{https://www.jstor.org/stable/23015835?seq=1}{``Critical
  types''}.
\newblock \emph{Review of Economic Studies}, \textbf{78}, 907--937.

\bibitem[\protect\citeauthoryear{Esponda and Pouzo}{2016}]{esponda-16}
Esponda, Ignacio and Demian Pouzo (2016).
\newblock
  \href{https://onlinelibrary.wiley.com/doi/abs/10.3982/ECTA12609}{``Berk-Nash
  equilibrium: A framework for modeling agents with misspecified models''}.
\newblock \emph{Econometrica}, \textbf{84}, 1093--1130.

\bibitem[\protect\citeauthoryear{Fagin and Halpern}{1988}]{fagin1988belief}
Fagin, Ronald and Joseph~Y Halpern (1988).
\newblock Belief, awareness, and limited reasoning.
\newblock \emph{Artificial intelligence}, \textbf{34}(1), 39--76.

\bibitem[\protect\citeauthoryear{Germano, Weinstein and
  Zuazo-Garin}{2020}]{germano-20}
Germano, Fabrizio, Jonathan Weinstein and Peio Zuazo-Garin (2020).
\newblock
  \href{https://econtheory.org/ojs/index.php/te/article/view/20200089}{``Uncertain
  rationality, depth of reasoning and robustness in games with incomplete
  information''}.
\newblock \emph{Theoretical Economics}, \textbf{15}, 89--122.

\bibitem[\protect\citeauthoryear{Guarino}{2020}]{guarino-20}
Guarino, Pierfrancesco (2020).
\newblock
  \href{https://www.sciencedirect.com/science/article/pii/S0899825619301526}{``An
  epistemic analysis of dynamic games with unawareness''}.
\newblock \emph{Games and Economic Behavior}, \textbf{120}, 257--288.

\bibitem[\protect\citeauthoryear{Halpern and
  Piermont}{2019}]{halpern2019partial}
Halpern, Joseph~Y and Evan Piermont (2019).
\newblock Partial awareness.
\newblock In \emph{Proceedings of the AAAI Conference on Artificial
  Intelligence}, vol.~33, pp. 2851--2858.

\bibitem[\protect\citeauthoryear{Halpern and R\^{e}go}{2009}]{halpern-09}
Halpern, Joseph~Y. and Leandro~C. R\^{e}go (2009).
\newblock
  \href{https://www.sciencedirect.com/science/article/abs/pii/S089982560900027X}{``Reasoning
  about knowledge of unawareness''}.
\newblock \emph{Games and Economic Behavior}, \textbf{67}, 503--525.

\bibitem[\protect\citeauthoryear{Han and Kyle}{2017}]{han-17}
Han, Jungsuk and Albert~S. Kyle (2017).
\newblock
  \href{https://pubsonline.informs.org/doi/abs/10.1287/mnsc.2017.2759?fbclid=IwAR2-n-mOX5BiF8iKpYdEJVR22r1wxOWO3cNjJuht4x-9yiVunGvTBosKJVU\&journalCode=mnsc}{``Speculative
  equilibrium with differences in higher-order beliefs}.
\newblock \emph{Management Science}, \textbf{69}, 4317--4332.

\bibitem[\protect\citeauthoryear{Hansen and Sargent}{2001}]{hansen-01}
Hansen, Lars~Peter and Thomas~J. Sargent (2001).
\newblock \href{https://www.jstor.org/stable/2677734?seq=1}{``Robust control
  and model uncertainty''}.
\newblock \emph{American Economic Review}, \textbf{91}, 60--66.

\bibitem[\protect\citeauthoryear{Harsanyi}{1967--1968}]{harsanyi-67}
Harsanyi, John~C. (1967--1968).
\newblock \href{http://www.dklevine.com/archive/refs41175.pdf}{``Games with
  incomplete information played by `{B}ayesian' players, {I}--{III}''}.
\newblock \emph{Management Science}, \textbf{14}, 159--182, 320--334, 486--502.

\bibitem[\protect\citeauthoryear{Heifetz and Kets}{2018}]{heifetz-18a}
Heifetz, Aviad and Willemien Kets (2018).
\newblock
  \href{https://onlinelibrary.wiley.com/doi/abs/10.3982/TE2098}{``Robust
  multiplicity with a grain of naivet\'{e}}.
\newblock \emph{Theoretical Economics}, \textbf{13}, 415--465.

\bibitem[\protect\citeauthoryear{Heifetz, Meier and
  Schipper}{2006}]{heifetz-06}
Heifetz, Aviad, Martin Meier and Burkhard Schipper (2006).
\newblock
  \href{https://econpapers.repec.org/article/eeejetheo/v_3a130_3ay_3a2006_3ai_3a1_3ap_3a78-94.htm}{``Interactive
  unawareness''}.
\newblock \emph{Journal of Economic Theory}, \textbf{130}, 78--94.

\bibitem[\protect\citeauthoryear{Heifetz, Meier and
  Schipper}{2008}]{heifetz-08}
Heifetz, Aviad, Martin Meier and Burkhard Schipper (2008).
\newblock
  \href{https://www.sciencedirect.com/science/article/abs/pii/S0899825607001030}{``A
  canonical model for interactive unawareness''}.
\newblock \emph{Games and Economic Behavior}, \textbf{62}, 304--324.

\bibitem[\protect\citeauthoryear{Heifetz, Meier and
  Schipper}{2014}]{heifetz-14}
Heifetz, Aviad, Martin Meier and Burkhard Schipper (2014).
\newblock
  \href{https://econpapers.repec.org/article/eeegamebe/v_3a81_3ay_3a2013_3ai_3ac_3ap_3a50-68.htm}{``Dynamic
  unawareness and rationalizable behavior''}.
\newblock \emph{Games and Economic Behavior}, \textbf{81}, 50--68.

\bibitem[\protect\citeauthoryear{Heifetz and Perea}{2015}]{heifetz-15}
Heifetz, Aviad and Andr\'{e}s Perea (2015).
\newblock \href{http://www.epicenter.name/Perea/Papers/FIBI-paths.pdf}{``On the
  outcome equivalence of backward induction and extensive form
  rationalizability''}.
\newblock \emph{International Journal of Game Theory}, \textbf{44}, 37--59.

\bibitem[\protect\citeauthoryear{Inostroza and Pavan}{2018}]{inostroza-18}
Inostroza, Nicolas and Alessandro Pavan (2018).
\newblock
  \href{https://cpb-us-e1.wpmucdn.com/sites.northwestern.edu/dist/4/2552/files/2018/10/persuasion-GG-April-4-2018-1e1h2rz.pdf}{``''Persuasion
  in global games with application to stress testing''}.
\newblock Mimeo.

\bibitem[\protect\citeauthoryear{Jimenez-Gomez}{2019}]{jimenez-gomez-19}
Jimenez-Gomez, David (2019).
\newblock
  \href{https://papers.ssrn.com/sol3/papers.cfm?abstract_id=3216040}{``False
  consensus in games: Embedding level-$k$ models into games of incomplete
  information''}.
\newblock Mimeo.

\bibitem[\protect\citeauthoryear{Kennan}{2001}]{kennan-01}
Kennan, John (2001).
\newblock \href{https://www.jstor.org/stable/2695907?seq=1}{``Repeated
  bargaining with persistent private information''}.
\newblock \emph{Review of Economic Studies}, \textbf{68}, 719--755.

\bibitem[\protect\citeauthoryear{Mertens and Zamir}{1985}]{mertens-85}
Mertens, Jean-Fran\c{c}ois and Shmuel Zamir (1985).
\newblock
  \href{http://www.ma.huji.ac.il/~zamir/papers/22_IJGT85.pdf}{``Formulation of
  {B}ayesian analysis for games with incomplete information''}.
\newblock \emph{International Journal of Game Theory}, \textbf{14}, 1--29.

\bibitem[\protect\citeauthoryear{Modica and
  Rustichini}{1994}]{modica1994awareness}
Modica, Salvatore and Aldo Rustichini (1994).
\newblock Awareness and partitional information structures.
\newblock \emph{Theory and decision}, \textbf{37}(1), 107--124.

\bibitem[\protect\citeauthoryear{Morris, Shin and Yildiz}{2016}]{morris-16}
Morris, Stepehen, Hyung~Song Shin and Muhamet Yildiz (2016).
\newblock \href{https://economics.mit.edu/files/11434}{``Common belief
  foundations of global games''}.
\newblock \emph{Journal of Economic Theory}, \textbf{163}, 826--848.

\bibitem[\protect\citeauthoryear{Morris and Shin}{1998}]{morris-98}
Morris, Stephen and Hyun~Song Shin (1998).
\newblock \href{http://www.sfu.ca/~kkasa/morris%26shin.pdf}{``Unique
  Equilibrium in a Model of Self-Fulfilling Currency Attacks''}.
\newblock \emph{American Economic Review}, \textbf{88}, 587--597.

\bibitem[\protect\citeauthoryear{Murayama}{2020}]{murayama-20}
Murayama, Kota (2020).
\newblock
  \href{https://link.springer.com/article/10.1007/s42973-019-00005-y}{``Robust
  predictions under finite depth of reasoning''}.
\newblock \emph{The Japanese Economic Review}, \textbf{71}, 59--84.

\bibitem[\protect\citeauthoryear{Oury and Tercieux}{2012}]{oury-12}
Oury, Marion and Olivier Tercieux (2012).
\newblock
  \href{https://onlinelibrary.wiley.com/doi/abs/10.3982/ECTA8577}{``Continuous
  implementation''}.
\newblock \emph{Econometrica}, \textbf{80}, 1605--1637.

\bibitem[\protect\citeauthoryear{Pearce}{1984}]{pearce-84}
Pearce, David~G. (1984).
\newblock
  \href{https://www.dropbox.com/s/tzfhnlr9ix33xz1/pearce-84.pdf}{``{R}ationalizable
  strategic behavior and the problem of perfection''}.
\newblock \emph{Econometrica}, \textbf{52}, 1029--1050.

\bibitem[\protect\citeauthoryear{Penta}{2011}]{penta-11}
Penta, Antonio (2011).
\newblock
  \href{https://www.dropbox.com/s/c5fdyozgy3iteln/penta-11.pdf}{``{B}ackward
  induction reasoning in games with incomplete information''}.
\newblock Mimeo. University of Winsonsin-Madison.

\bibitem[\protect\citeauthoryear{Penta}{2012}]{penta-12}
Penta, Antonio (2012).
\newblock
  \href{https://www.dropbox.com/s/sjdp5j8g32jw35b/penta-12.pdf}{``{H}igher
  order uncertainty and information: Static and dynamic games''}.
\newblock \emph{Econometrica}, \textbf{80}, 631--660.

\bibitem[\protect\citeauthoryear{Penta}{2015}]{penta-15}
Penta, Antonio (2015).
\newblock
  \href{https://www.sciencedirect.com/science/article/pii/S0022053115001829}{``Robust
  dynamic implementation''}.
\newblock \emph{Journal of Economic Theory}, \textbf{160}, 280--316.

\bibitem[\protect\citeauthoryear{Penta and Zuazo-Garin}{2021}]{penta-21}
Penta, Antonio and Peio Zuazo-Garin (2021).
\newblock
  \href{https://www.barcelonagse.eu/sites/default/files/working_paper_pdfs/1106.pdf}{``Rationalizability,
  observability and common knolwedge}.
\newblock \emph{Review of Economic Studies}.
\newblock (forthcoming).

\bibitem[\protect\citeauthoryear{Perea}{2014}]{perea-14}
Perea, Andr\'es (2014).
\newblock
  \href{http://www.personeel.unimaas.nl/a.perea/Papers/FutureRat.pdf}{``{B}elief
  in the opponent's future rationality''}.
\newblock \emph{Games and Economic Behavior}, \textbf{81}, 235--254.

\bibitem[\protect\citeauthoryear{Perea}{2018{\natexlab{a}}}]{perea-18}
Perea, Andr\'{e}s (2018{\natexlab{a}}).
\newblock
  \href{http://www.epicenter.name/Perea/Papers/Battigalli-theorem.pdf}{``Why
  forward induction leads to the backward induction outcome: A new proof for
  Battigalli's theorem''}.
\newblock \emph{Games and Economic Behavior}, \textbf{110}, 120--138.

\bibitem[\protect\citeauthoryear{Perea}{2018{\natexlab{b}}}]{perea-18c}
Perea, Andr\'{e}s (2018{\natexlab{b}}).
\newblock
  \href{http://www.epicenter.name/Perea/Papers/CBR-Unawareness.pdf}{``Common
  belief in rationality in games with unawareness''}.
\newblock Mimeo (previous version, Epicenter Working Paper No.~13).

\bibitem[\protect\citeauthoryear{Perea}{2018{\natexlab{c}}}]{perea-18b}
Perea, Andr\'{e}s (2018{\natexlab{c}}).
\newblock
  \href{http://www.epicenter.name/Perea/Papers/Order-ind-dynamic.pdf}{``Order
  independence in dynamic games''}.
\newblock Mimeo (previous version, Epicenter Working Paper No. 8).

\bibitem[\protect\citeauthoryear{Piermont}{2019}]{piermont2019algebraic}
Piermont, Evan (2019).
\newblock \href{https://arxiv.org/abs/1910.07275}{``Algebraic semantics for
  propositional awareness logics''}.
\newblock \emph{arXiv preprint arXiv:1910.07275}.

\bibitem[\protect\citeauthoryear{Rubinstein}{1989}]{rubinstein-89}
Rubinstein, Ariel (1989).
\newblock \href{https://www.dropbox.com/s/u4tj6bdvs94hll0/rubinstein-89.pdf}{
  ``The electronic mail game: strategic behaviour under `almost common
  knowledge'''}.
\newblock \emph{American Economic Review}, \textbf{79}, 385--391.

\bibitem[\protect\citeauthoryear{{Ruiz G.}}{2018}]{ruiz-g-18}
{Ruiz G.}, David (2018).
\newblock
  \href{https://drive.google.com/file/d/1r72kEcwqxrFhn__P0AOyyWiVapWj98xC/view}{``Critical
  types in dynamic games''}.
\newblock Mimeo.

\bibitem[\protect\citeauthoryear{Sannikov and Skrzypack}{2016}]{sannikov-16}
Sannikov, Yuliy and Andrzej Skrzypack (2016).
\newblock \href{https://web.stanford.edu/~skrz/Dynamic_Trading.pdf}{``Dynamic
  trading: Price inertia and front-running''}.
\newblock Mimeo.

\bibitem[\protect\citeauthoryear{Strzalecki}{2014}]{strzalecki-14}
Strzalecki, Tomasz (2014).
\newblock
  \href{https://www.dropbox.com/s/zw51k2cq9908441/strzalecki-10.pdf}{``{D}epth
  of reasoning and higher order beliefs''}.
\newblock \emph{Journal of Economic Behavior and Organization}, \textbf{108},
  108--122.
\newblock Mimeo.

\bibitem[\protect\citeauthoryear{Suetonius}{121}]{suetonious-07}
Suetonius, {Gaius Tranquillus} (121).
\newblock
  \emph{\href{https://www.amazon.es/Twelve-Caesars-Penguin-Classics/dp/0140455167}{``The
  twelve Caesars''}}.
\newblock Peng\"{u}in Classics.

\bibitem[\protect\citeauthoryear{Weinstein and Yildiz}{2007}]{weinstein-07}
Weinstein, Jonathan and Muhamet Yildiz (2007).
\newblock
  \href{https://www.dropbox.com/s/k0mx980tkfx05ld/weinstein-yildiz-11.pdf}{``A
  structure theorem for rationalizability with application to robust
  predictions of refinements''}.
\newblock \emph{Econometrica}, \textbf{75}, 365--400.

\bibitem[\protect\citeauthoryear{Wilson}{1987}]{wilson-87}
Wilson, Robert (1987).
\newblock ``{G}ame-theoretic approaches to trading proceses''.
\newblock In Truman~F. Bewley (editor),
  \emph{\href{https://www.cambridge.org/es/academic/subjects/economics/economics-general-interest/advances-economic-theory-fifth-world-congress?format=PB\&isbn=9780521389259}{Advances
  in economic theory: Fifth world congress}}. Cambridge University Press.

\bibitem[\protect\citeauthoryear{Ziegler}{2019}]{ziegler-19}
Ziegler, Gabriel (2019).
\newblock \href{https://zieglergabriel.github.io/papers/ABID.pdf}{``Adversarial
  bilateral information design''}.
\newblock Mimeo.

\end{thebibliography}

\end{document}